\documentclass[11pt,onecolumn]{IEEEtran}
\usepackage[left=2cm,top=2cm, bottom=2in, right=2cm]{geometry}
\date{}
\usepackage{enumitem}
\usepackage{amsfonts}
\usepackage{amsmath, amsthm, amssymb, dsfont}
\usepackage{graphicx}
\usepackage{comment, cite}
\usepackage{psfrag}
\usepackage{bbm}
\usepackage{csquotes}

\usepackage[dvipsnames]{xcolor}
\usepackage{tikz}

\usetikzlibrary{fit}					
\usetikzlibrary{backgrounds}

\newtheorem{theorem}{Theorem}
\newtheorem{lemma}{Lemma}

\newtheorem{proposition}{Proposition}

\newtheorem{corollary}{Corollary}
\newtheorem{definition}{Definition}
\newtheorem{example}{Example}

\newtheorem{remark}{Remark}

\definecolor{darkblack}{rgb}{0, .07, .5}
\definecolor{darkred}{rgb}{0.5,0,0}
\definecolor{mahogany}{rgb}{0.65, 0., 0.5}




\newcommand{\Var}{\text{\rm{Var}}}

\newcommand{\E}{\mathbb{E}}

\let\emptyset\varnothing

\setcounter{secnumdepth}{3}

\newcommand{\RN}[1]{%
	\textup{\uppercase\expandafter{\romannumeral#1}}%
}

\title{Coding for Positive Rate in the Source Model Key Agreement Problem}
\author{Amin Gohari, Onur~G\"unl\"u, and
        Gerhard~Kramer
}

\begin{document}
	\maketitle

\begin{abstract}
A two-party key agreement problem with public discussion, known as the source model problem, is considered. By relating key agreement to hypothesis testing, a new coding scheme is developed that yields a sufficient condition to achieve a positive secret-key (SK) rate  in terms of R{\'e}nyi divergence. The merits of this coding scheme are illustrated by applying it to an erasure model for Eve's side information, and by deriving an upper bound on Eve's erasure probabilities for which the SK capacity is zero. This bound strictly improves on the best known single-letter lower bound on the SK capacity. Moreover, the bound is tight when Alice's or Bob's source is binary, which extends a previous result for a doubly symmetric binary source. The results motivate a new measure for the correlation between two random variables, which is of independent interest.
\end{abstract}

\section{Introduction}
The source model problem for key agreement considers two legitimate parties, Alice and Bob, and an eavesdropper, Eve \cite{AhlswedeCsiszar,Maurer93}. Alice, Bob, and Eve, respectively, observe $n$ independent and identically distributed (i.i.d.) realizations of the random variables $X$, $Y$, and $Z$ that are distributed according to the probability mass function (pmf) $p_{XYZ}$ with values $p_{XYZ}(x,y,z)$ for $x\in\mathcal{X}$, $y\in\mathcal{Y}$, $z\in\mathcal{Z}$, where $\mathcal{X}$, $\mathcal{Y}$, and $\mathcal{Z}$ are finite sets. The pmf $p_{XYZ}$ is called the \emph{source} pmf, or simply the source. Alice and Bob engage in an authenticated and public discussion to agree on a key secret from Eve as follows (see \cite[Section 22.3]{ElGamalKim} for a  review of the problem). Alice first creates a public message $F_1=f_1$ using some distribution $p_{F_1|X^n}(\cdot |x^n)$ and sends it to Bob, where $x^n$ is the sequence $(x_1,\ldots,x_n)$. Bob generates a public message $F_2=f_2$ using some $p_{F_2|Y^nF_1}(\cdot |y^n,f_1)$ and sends it to Alice, then Alice generates $F_3=f_3$ according to some $p_{F_3|X^nF_{1:2}}(\cdot |x^n,f_{1:2})$, etc. After $k$ rounds of communication, Alice creates a key $K_A$ according to some $p_{K_A|X^nF_{1:k}}(\cdot |x^n,f_{1:k})$ and Bob creates a key $K_B$ according to some $p_{K_B|Y^nF_{1:k}}(\cdot |y^n,f_{1:k})$. For an $(n,\delta)$ code, we require the keys to be equal with high probability
\begin{align}
	\mathbb{P}[K_A=K_B]\geq 1-\delta \label{eq:errorprobconst}
\end{align}
where $\delta$ is small. We require the keys to have large entropy, a condition that we express as
\begin{align}
	\frac 1n H(K_A)\geq \frac 1n \log|\mathcal{K}| - \delta
\end{align}
where $H(\cdot)$ is the entropy function, $\mathcal{K}$ is the alphabet of $K_A$ and $K_B$, and $|\mathcal{K}|$ is the cardinality of $\mathcal{K}$. We also require the keys to be  almost independent of Eve's information, i.e., we require
\begin{align}
	\frac 1n I(K_A;Z^nF_{1:k})\leq \delta\label{leakjbsdfkb}
\end{align}
where $I(\cdot;\cdot)$ is the mutual information. 

The key rate is $R=\frac{1}{n}H(K_A)$. A key rate $R$ is said to be achievable if, given any $\delta>0$, there is an $(n,\delta)$ code satisfying (\ref{eq:errorprobconst})-(\ref{leakjbsdfkb}). The supremum of all achievable key rates is called the source model secret-key (SK) capacity and denoted by $S(X;Y\|Z)$. Extensions to multiple parties and continuous random variables can be found in \cite{CsNar,Chanomni,Nitinawarat,Tyagi,Chan2,chan2018optimality}.

In this paper we are interested in the feasibility of key generation at a positive rate, i.e., $S(X;Y\|Z)>0$. Note that if key generation is feasible, it should be feasible also when Alice and Bob do not have access to private randomness. Thus, we may assume that the interactive communication $\mathbf{F}$ satisfies 
\begin{align}
&H(F_1|X^n)=H(F_2|F_1Y^n)=H(F_3|F_{1:2}X^n)=\cdots=0.\label{eqn-=deertg}
\end{align} 
The idea is that Alice and Bob can use an initial part of their observations to distill private randomness and the remaining parts to extract a key.

\subsection{Overview of the Main Results}

We provide a condition for the feasibility of SK generation at a positive rate for a general pmf $p_{XYZ}$ in terms of R{\'e}nyi divergence. The condition is based on a new approach for achieving a  positive SK rate that extends ideas in  \cite{Maurer93, OrlitskyWigderson, MaurerWolf99}. The extension is motivated by using hypothesis testing for SK agreement. Hypothesis testing was previously used for infeasibility results in network information theory, e.g., meta converses \cite{polyanskiy2010channel, tyagi2015converses}. 

Our coding scheme falls in the general class of advantage distillation protocols (e.g. see \cite[Section V]{Maurer93}). Roughly speaking, advantage distillation protocols provide an information-theoretic advantage to Alice and Bob over Eve as follows: Alice and Bob select a subset of realizations of their random variables for which they have an advantage over Eve. Utilizing public discussion, Alice and Bob agree on these realizations. More information about advantage distillation protocols and their variants can be found in   \cite{MaurerParityCheckAdvantage,MaurerParityProof, muramatsu2006secret}.

On the other hand,  the particular code can be understood in terms of a hypothesis testing problem that Bob solves to find an estimate of Alice's secret key. A key feature of the problem is a ``swapping idea" where the positions of blocks of observations are swapped under different hypotheses. If Bob's error exponent in solving the problem is better than Eve's error exponent, then Alice and Bob can use privacy amplification to obtain a shared secret key.

We illustrate the merits of this construction for the class of joint pmfs $p_{XYZ}=p_{XY}\,p_{Z|XY}$ where $p_{Z|XY}$ is an erasure channel, i.e., $\mathcal{Z}=\{\mathtt e\}\cup (\mathcal{X}\times\mathcal{Y})$ where $\mathtt e$ is the erasure symbol and $Z=XY$ with probability $1-\epsilon$ and $Z=\mathtt{e}$ with probability $\epsilon$. Such sources are called \emph{erasure sources} with erasure probability $\epsilon$. If $Z=XY$ then we have $S(X;Y\|Z)=0$, and if $Z=\mathtt e$ then we have $S(X;Y\|Z)=I(X;Y)$. We provide a sufficient condition on $\epsilon$ such that $S(X;Y\|Z)>0$. We also prove the necessity of this condition when $X$ or $Y$ is binary.

For example, consider the doubly symmetric binary-erasure (DSBE) source, where $(X,Y)$ is a doubly symmetric binary source (DSBS) with parameter $p$, i.e., we have $p_{XY}(0,0)=p_{XY}(1,1)=(1-p)/2$ and $p_{XY}(0,1)=p_{XY}(1,0)=p/2$ (see \cite[Scenario 2 and Fig. 6(a)]{MaurerWolf99}). We present an example where our code outperforms the code given in \cite{ourpaper} that uses random binning and multiple auxiliary random variables to capture multiple rounds of communication. Our achievability result shows that the SK capacity vanishes if and only if
\begin{align}
	\epsilon\leq\frac{\min\{p,1-p\}}{\max\{p,1-p\}}
\end{align}
which recovers the result in \cite[Theorem 14]{MaurerWolf99} that uses a repetition code protocol from \cite{Maurer93} for advantage distillation. However, we show more, namely that our code (and the repetition code protocol of \cite{Maurer93}) outperforms the best known single-letter lower bound given in \cite[Theorem 7]{ourpaper}. We show that this bound vanishes if and only if
\begin{align}
	\epsilon\leq 4p(1-p). \label{eq:fpomp}
\end{align}
Since $4p(1-p)>{\min\{p,1-p\}}/{\max\{p,1-p\}}$ when $p\neq 1/2$, the bound \eqref{eq:fpomp} is loose for a DSBE source for any $p\neq 1/2$.

This paper is organized as follows. In Section~\ref{lit-review}, we describe notation and give definitions. We also review the best known bounds for the SK agreement problem, and a characterization of when the SK capacity is positive. Our main results are given in Section \ref{main-result} and proved in Section~\ref{sec:Proofs}. 

\section{Preliminaries}\label{lit-review}
\subsection{Notation and Definitions}
Random variables are written as uppercase letters while their realizations  are written as lowercase letters. Calligraphic letters denote sets. We write $X\rightarrow Y\rightarrow Z$ if $X$ and $Z$ are statistically independent given $Y$, i.e., $p_{XZ|Y}=p_{X|Y}\,p_{Z|Y}$, and we say that $X\rightarrow Y\rightarrow Z$ forms a Markov chain. We write $U_{1:i-1}$ for $(U_1, U_2, \ldots, U_{i-1})$. The function $h(p)=-p\log p -(1-p)\log (1-p)$ is the binary entropy function. We use $\mathbf{F}$ to denote the public discussion $(F_1, F_2, \ldots, F_k)$.

\begin{definition}\label{def:renyidivv}
The R{\'e}nyi divergence of order $\alpha$ between two pmfs $p_{X}$ and $q_X$ is 
\begin{align}
&D_\alpha (p_X\| q_X) = \frac{1}{\alpha-1}\log\left(\sum_{x} p_X(x)^\alpha
q_X(x)^{1-\alpha}\right).
\end{align}
The Chernoff information between two pmfs $p_{X}$ and $q_X$ is
\begin{align}C(p_X\|q_X)=\max_{\alpha\in[0,1]}(1-\alpha)D_\alpha (p_X\| q_X).\label{eqnCI}\end{align}
\end{definition}

Observe that
\begin{align}
C(p_X\|q_X)&=-\log\left( \min_{\alpha\in[0,1]}\sum_{x} p_X(x)^\alpha
q_X(x)^{1-\alpha}\right)\nonumber
\\&\leq -\log\left(\sum_{x} \min(p_X(x), 
q_X(x))\right)\nonumber
\\&= -\log\left(1-\|p_X-q_X\|_{TV}\right)\label{TV-Chernoff}
\end{align}
where $\|p_X-q_X\|_{TV}=\frac12\sum_{x} |p_X(x)-
q_X(x)|$ is the total variation (TV) distance between $p_X$ and $q_X$.

The R{\'e}nyi divergence of order $\alpha = 1/2$ is
\begin{align}
&D_{\frac 12} (p_X\| q_X) = -2\log\left(\sum_{x} p_X(x)^{\frac 12}
q_X(x)^{\frac 12}\right)
\end{align}
which is equivalent (up to a transformation) to other distance measures such as Hellinger distance, Bhattacharyya distance, and fidelity. For instance, the Bhattacharyya distance is $(1/2)D_{\frac 12}(p_X\|q_X)$. From \eqref{eqnCI}, it is clear that
$C(p_X\|q_X)\geq (1/2)D_{\frac 12}(p_X\|q_X)$ and the following example shows that this inequality is tight in some cases.
\begin{example}\label{RnyiHalf}
Given $p_X$ and $q_X$, consider the distributions $\nu$ and $\omega$ with
\begin{align*}
  \nu(x_1,x_2)        & = p_X(x_1)q_X(x_2) \\
  \omega(x_1,x_2) & = q_X(x_1)p_X(x_2).
\end{align*}
For this special case we have
\begin{align}C(\nu\|\omega)&=
-\log\Bigg( \min_{\alpha\in[0,1]}\sum_{x_1,x_2}\Big( p_X(x_1)^\alpha q_X(x_2)^\alpha   q_X(x_1)^{1-\alpha}p_X(x_2)^{1-\alpha}\Big)\Bigg)\\
&=
-\log\Big( \min_{\alpha\in[0,1]}\big(\sum_{x_1} p_X(x_1)^\alpha q_X(x_1)^{1-\alpha}\big)    \big(\sum_{x_2} p_X(x_2)^{1-\alpha}q_X(x_2)^\alpha\big) \Big)
\\
&=\max_{\alpha\in[0,1]}(1-\alpha)D_\alpha (p_X\| q_X)+\alpha D_{1-\alpha} (p_X\| q_X)\nonumber
\\&=D_{\frac12}(p_X\|q_X) \label{eqn-last-stepDC}
\\&=\frac{1}{2} D_{\frac12}(\nu\|\omega) \nonumber
\end{align}
where \eqref{eqn-last-stepDC} holds because $(1-\alpha)D_{\alpha}$ is concave in $\alpha$ \cite[Corollary 2]{van2014renyi}, which implies that $(1-\alpha)D_{\alpha}+\alpha D_{1-\alpha}$ is concave and symmetric in $\alpha$, so it achieves its maximum at $\alpha=\frac 12$.
\end{example}

\begin{definition}\cite[p. 350]{CsiszarKornerbookFirstEdition} \cite[p. 929]{AG76}
Given a joint pmf $p_{XY}$, the strong data processing constant is 
\begin{align}
   s^*(X;Y)=\sup_{p_{U|X}}\frac{I(U;Y)}{I(U;X)}
\end{align}
where $U$ is an auxiliary random variable such that $U\rightarrow X\rightarrow Y$ forms a Markov chain. It suffices to consider $|\mathcal{U}|\leq |\mathcal{X}|+2$, where $\mathcal{U}$ and $\mathcal{X}$ are the respective alphabets of $U$ and $X$. 
\end{definition}
\begin{definition}\cite[p. 928]{AG76}
The maximal correlation coefficient $\rho_m(p_{XY})$ is defined as the maximum of Pearson's correlation coefficients over all non-constant functions $f(\cdot)$ and $g(\cdot)$, respectively, of $X$ and $Y$, i.e.,
\begin{align}\label{eq:max-correlatoin-31}
&\rho_m(p_{XY})\!=\! \max_{f(\cdot), g(\cdot)}\frac{\E\big[(f(X)\!-\!\E[f(X)])(g(Y)\!-\!\E[g(Y)])\big]}{\sqrt{\Var[f(X)]\Var[g(Y)]}}
\end{align}
where $\E[\cdot]$ and $\Var[\cdot]$ denote the expectation and variance operators, respectively.
\end{definition}
\begin{definition}
Given a channel $p_{Y|X}$, define
\begin{align}
\eta(p_{Y|X}) & =\max_{p_X} s^*(X;Y) \overset{(a)}{=} \max_{p_X} \rho^2_m(p_{X}\, p_{Y|X})\label{eqnAG687rr657}
\end{align}
where (a) follows from \cite[Theorem 8]{AG76}.
\end{definition}

\begin{definition}\cite{MaIshwar2013, Verdu} \label{defpreceq}
	Given two pmfs $p_{XY}$ and $q_{XY}$ on the alphabets $\mathcal X \times \mathcal Y$, the relation $q_{XY}\preceq p_{XY}$ represents existence of some functions $a:\mathcal{X}\rightarrow\mathbb{R}_{+}\cup\{0\}$ and $b:\mathcal{Y}\rightarrow\mathbb{R}_{+}\cup\{0\}$ such that for all $(x,y)\in\mathcal{X}\times\mathcal{Y}$ we have
	\begin{align}
		q_{XY}(x,y)=a(x)b(y)p_{XY}(x,y).\label{eq:defforPart1proof}
	\end{align}
\end{definition} 
\begin{example} 
	\normalfont
	Let $\mathcal{X}_1=\{x\in\mathcal{X}: p_X(x)>0\}$ and $\mathcal{X}_2=\{x\in\mathcal{X}: q_X(x)>0\}$. If $\mathcal{X}_2\subseteq \mathcal{X}_1$ and $q_{Y|X}=p_{Y|X}$, then  $q_{XY}\preceq p_{XY}$. To prove this result, choose $a(x)=q_X(x)/p_X(x)$ for $x\in\mathcal{X}_1$ and $a(x)=0$ for $x\notin\mathcal{X}_1$. Also, let $b(y)=1$ for all $y\in\mathcal{Y}$. 
\end{example}

\begin{example}
Given $q_X$, $q_Y$, and $p_{XY}$, consider the minimization problem
\begin{align}
   \kappa=\min_{r_{XY}:\: r_X=q_X, r_Y=q_Y}D(r_{XY} \| p_{XY}).
\end{align}
This is a convex optimization problem over a convex domain.  
By Lagrange multipliers, the solution to the minimization problem has the form 
\begin{align}
   r_{XY}(x,y)=a(x)b(y)p_{XY}(x,y)
\end{align}
for some $a(x)$ and $b(y)$. Thus, the optimizer $r_{XY}$ satisfies $r_{XY} \preceq p_{XY}$.
\end{example}

The term $\kappa$ appears in the literature in the context of hypothesis testing 
\cite{polyanskiy2012hypothesis} \cite[Theorems 5 and 8]{han1987hypothesis} and distributed detection \cite[Theorem 2]{shalaby1993note}. We are interested in this quantity for the following reason: let $\mathcal{T}^{(n)}_{q_X}$ be the set of sequences of length $n$ and type $q_X$, then we have the following lemma.
\begin{lemma}\label{LemmaMT} Let $(X^n, Y^n, Z^n)$ be i.i.d. according to $p_{XYZ}$. Take three arbitrary types $q_X$, $q_Y$, and $q_Z$ on $\mathcal{X}, \mathcal{Y}, \mathcal{Z}$, respectively.
\begin{itemize}
\item We have 
\begin{align}
 &\lim_{n\rightarrow\infty}\frac1n\log\mathbb{P}\left[ X^n\in \mathcal{T}^{(n)}_{q_X}, Y^n\in \mathcal{T}^{(n)}_{q_Y} \right]
=-\min_{r_{XY}:\: r_X=q_X, r_Y=q_Y}D(r_{XY} \| p_{XY} ).
\end{align}
\item We have
\begin{align}
     & \lim_{n\rightarrow\infty}
       \frac1n\log\mathbb{P}\left[ X^n\in \mathcal{T}^{(n)}_{q_X}, Y^n\in \mathcal{T}^{(n)}_{q_Y}, Z^n\in \mathcal{T}^{(n)}_{q_Z} \right]
      =-\min D(r_{XYZ} \| p_{XYZ} )
\end{align}
where the minimum is over $r_{XYZ}$ satisfying $r_X=q_X$, $r_Y=q_Y$, and $r_Z=q_Z$.
\item For any sequence $z^n$ of type $q_Z$, the probability 
\begin{align}
   \mathbb{P}\left[ Z^n=z^n,X^n\in \mathcal{T}^{(n)}_{q_X}, Y^n\in \mathcal{T}^{(n)}_{q_Y} \right]
\end{align}
depends only on $q_Z$ and is equal to
\begin{align}
   \frac{\mathbb{P}\left[ Z^n\in \mathcal{T}^{(n)}_{q_Z}, X^n\in \mathcal{T}^{(n)}_{q_X}, Y^n\in \mathcal{T}^{(n)}_{q_Y}\right]}{\left|\mathcal{T}^{(n)}_{q_Z}\right|}.
\end{align}
\end{itemize}
\end{lemma}
This lemma follows from the method of types.

\subsection{SK Capacity Lower Bound}
The authors of \cite{AhlswedeCsiszar} compute the one-way SK capacity from $X$ to $Y$ as
\begin{align}
&S_{\text{ow}}(X;Y\|Z)=\max_{UV\rightarrow X\rightarrow YZ}I(U;Y|V)-I(U;Z|V).
\end{align}
It suffices to consider $|\mathcal{V}|\leq |\mathcal{X}|$ and $|\mathcal{U}|\leq |\mathcal{X}|$ where $\mathcal{V}$, $\mathcal{U}$, $\mathcal{X}$ are the alphabets of $V,U,X$, respectively \cite[p. 561]{ElGamalKim}. Clearly, $S_{\text{ow}}(X;Y\|Z)$ is a lower bound on the SK capacity $S(X;Y\|Z)$.

The best known single-letter lower bound on $S(X;Y\|Z)$ uses interactive communication \cite[Theorem 7]{ourpaper}. Given random variables $U_1, U_2, \cdots, U_k$ satisfying the Markov chain conditions
\begin{align}
	&U_i\rightarrow XU_{1:i-1}\rightarrow YZ \text{  for odd $i$}\label{eq:LowerboundoddiMarkov}\\
	&U_i\rightarrow YU_{1:i-1}\rightarrow XZ\text{  for even $i$}\label{eq:LowerboundeveniMarkov}
\end{align}
and for any integer $\zeta$ such that $1\leq \zeta\leq k$, we have $S(X;Y\|Z)\geq L(X;Y\|Z)$ where
\begin{align}
&L(X;Y\|Z)=\Bigg[\sum_{\substack{i\geq \zeta\\\text{ odd $i$}}} I(U_i; Y|U_{1:i-1})-I(U_i;Z|U_{1:i-1})\Bigg]+\!\Bigg[\sum_{\substack{i\geq \zeta\\\text{ even $i$}}}\! I(U_i; X|U_{1:i-1})\!-\!I(U_i;Z|U_{1:i-1})\Bigg].\label{eq:lowerbound}
\end{align}
Using standard reduction techniques, we can restrict the cardinality $|\mathcal{U}_i|$ of $U_i$ to 
\begin{align}
	|\mathcal{U}_i|\leq \begin{cases}
	 |\mathcal{X}|\prod\limits_{l=1}^{i-1}|\mathcal{U}_l|& \text{for } i \text{ odd}, \\
	 |\mathcal{Y}|\prod\limits_{l=1}^{i-1}|\mathcal{U}_l|& \text{for } i \text{ even}.
	\end{cases}
\end{align}

Let $\bar{L}(X;Y\|Z)$ be the best possible lower bound obtained from (\ref{eq:lowerbound}). This bound  is difficult to evaluate since $\zeta$ and $k$ are arbitrary and the cardinality bounds on the sizes of $U_1,U_2,\ldots, U_k$ grow exponentially. However, one can simplify the calculations by using the ideas from \cite{MaIshwar2013, nair2013upper, Verdu}, where auxiliary random variables are represented by upper concave envelopes.

\subsection{SK Capacity Upper Bounds}
An early upper bound on $S(X;Y\|Z)$ is $\min\{I(X;Y), I(X;Y|Z)\}$ \cite{Maurer93}. This was later improved with the \emph{intrinsic mutual information} upper bound \cite[pp. 1126, Remark 2]{AhlswedeCsiszar},\cite{MaurerWolf99}:
\begin{align}
S(X;Y\|Z) & \le B_0(X;Y\|Z) \triangleq \min_{p_{J|Z}}I(X;Y|J).\label{eq:B0upperbound}
\end{align}
The idea is that if we make Eve weaker by passing $Z$ through a channel $p_{J|Z}$, then the SK capacity cannot decrease. Thus, we have
\begin{align}
   S(X;Y\|Z)\leq S(X;Y\|J)\leq I(X;Y|J)
\end{align}
where $XY\rightarrow Z\rightarrow J$ forms a Markov chain. Considering $J=\emptyset$ and $J=Z$, we recover the earlier upper bound $\min\{I(X;Y), I(X;Y|Z)\}$ on $S(X;Y\|Z)$. In evaluating $B_0(X;Y\|Z)$, it suffices to consider $|\mathcal{J}| \le |\mathcal{Z}|$ \cite{christandl2003property}.

Other upper bounds are given in \cite{RennerWolf, Watanabe, ourpaper, keykhosravi}. The best known upper bound is \cite{ourpaper} 
\begin{align}
&B_1(X;Y\|Z) =\!\inf_{p_{J|XYZ}} \Bigg[I(X;Y|J) + \max_{UV\rightarrow XY\rightarrow ZJ}I(U;J|V)\!-\!I(U;Z|V)\Bigg].\label{eq:B1upperbound}
\end{align}
See \cite{NewJournalMyself} for a discussion on the computability of this bound. The interpretation of $B_1(X;Y\|Z)$ given in \cite{ourpaper} is based on splitting the secret key into two parts so that only one part is independent of $J$. We  give a new interpretation of $B_1(X;Y\|Z)$ in Appendix \ref{appendixA}.

\subsection{Conditions for a Positive SK Capacity}
In an early work, Maurer gives an example where multiple rounds of communication are necessary to achieve a positive SK capacity  \cite[Section V]{Maurer93}. Orlitsky and Wigderson show in \cite{OrlitskyWigderson} that if the SK capacity is positive, only two rounds of communication suffice to realize a positive key rate. In fact, they give a necessary and sufficient condition for the SK capacity to be positive. We begin by reviewing their results. 

Consider some natural number $n$, and some sets $\mathcal{A}\subset \mathcal{X}^n$ and $\mathcal{B}\subset \mathcal{Y}^n$. Denote the pmf of $(X^n, Y^n, Z^n)$ conditioned on the events $X^n\in \mathcal{A}$ and $Y^n\in \mathcal{B}$ by $p_r(\cdot)$ so that $p_r(x^n, y^n, z^n)=0$ if $(x^n, y^n)\notin \mathcal{A}\times \mathcal{B}$; otherwise, we have
\begin{align}p_r(x^n, y^n, z^n)=\frac{p_{X^nY^nZ^n}(x^n, y^n,z^n)}{\mathbb{P}[X^n\in \mathcal{A}, Y^n\in \mathcal{B}]}\label{eqn:AAAd13}\end{align}
where $p_{X^nY^nZ^n}(x^n, y^n,z^n)=\prod_{i=1}^n p_{XYZ}(x_i, y_i, z_i)$ is a product distribution, whereas $p_r(x^n, y^n, z^n)$ is not necessarily a product distribution.

\begin{definition}[Orlitsky-Wigderson \cite{OrlitskyWigderson}]\label{def:advantage}
Given sets $\mathcal{A}\subset \mathcal{X}^n$ and $\mathcal{B}\subset \mathcal{Y}^n$, the legitimate users have a simple entropic advantage over the eavesdropper in $\mathcal{A}\times \mathcal{B}$ if for some (binary) function $f(X^n)$ we have
\begin{align}I_{p_r}(f(X^n); Y^n)> I_{p_r}(f(X^n); Z^n)\end{align}
where the mutual information expressions are calculated according to
$p_r(x^n, y^n, z^n)$.
\end{definition}

\begin{theorem} [Orlitsky-Wigderson \cite{OrlitskyWigderson}] \label{OrlitskyWigdersonThm}
The following three claims are equivalent.
\begin{enumerate}
\item $S(X;Y\|Z)>0$.
\item There exists some natural number $n$, and sets $\mathcal{A}\subset \mathcal{X}^n$ and $\mathcal{B}\subset \mathcal{Y}^n$, such that the legitimate users have a simple entropic advantage over the eavesdropper in $\mathcal{A}\times \mathcal{B}$. 
\item  A positive rate is achievable with only two rounds of communication.
\end{enumerate}
\end{theorem}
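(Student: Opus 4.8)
The plan is to close the cycle $(3)\Rightarrow(1)\Rightarrow(2)\Rightarrow(3)$. The implication $(3)\Rightarrow(1)$ is immediate: a positive key rate achieved with two rounds is in particular a positive key rate, so $S(X;Y\|Z)>0$. The substance is in $(1)\Rightarrow(2)$ and $(2)\Rightarrow(3)$. For $(1)\Rightarrow(2)$, assume $S(X;Y\|Z)>0$ and fix a code of blocklength $n$, positive rate $R_0>0$, and small error parameter $\delta$; by the reduction described after \eqref{eqn-=deertg} we may take the code to use no private randomness, so that $\mathbf F$, $K_A$, and $K_B$ are deterministic functions of the observations. The first step is to turn the three key requirements into a \emph{strict conditional} advantage. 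From $I(K_A;Z^n\mathbf F)\le n\delta$ we get $H(K_A\mid\mathbf F)\ge H(K_A\mid Z^n\mathbf F)=H(K_A)-I(K_A;Z^n\mathbf F)\ge \log|\mathcal K|-2n\delta$; Fano's inequality applied to $\mathbb{P}(K_A=K_B)\ge 1-\delta$ gives $H(K_A\mid K_B,\mathbf F)\le h(\delta)+\delta\log|\mathcal K|$; and since $K_B$ is a function of $(Y^n,\mathbf F)$, data processing yields $I(K_A;Y^n\mid\mathbf F)\ge I(K_A;K_B\mid\mathbf F)\ge \log|\mathcal K|(1-\delta)-2n\delta-h(\delta)$, whereas $I(K_A;Z^n\mid\mathbf F)\le n\delta$. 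For $\delta$ small enough (depending on $R_0$) the difference $I(K_A;Y^n\mid\mathbf F)-I(K_A;Z^n\mid\mathbf F)$ is strictly positive, so, expanding the conditional mutual informations over the transcript, there is a value $\mathbf F=\mathbf f$ (of positive probability) with $I(K_A;Y^n\mid\mathbf F=\mathbf f)>I(K_A;Z^n\mid\mathbf F=\mathbf f)$.

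The second step invokes the combinatorial-rectangle property of deterministic interactive protocols: the set of pairs $(x^n,y^n)$ producing a fixed transcript $\mathbf f$ is a rectangle $\mathcal A\times\mathcal B$ with $\mathcal A\subset\mathcal X^n$, $\mathcal B\subset\mathcal Y^n$, so (because $\mathbf F$ depends only on $(X^n,Y^n)$) conditioning on $\mathbf F=\mathbf f$ produces exactly the distribution $p_r$ of \eqref{eqn:AAAd13} for this rectangle, and under $p_r$ the key $K_A$ becomes a deterministic function $g(X^n)$ of $X^n$ alone. Thus $I_{p_r}(g(X^n);Y^n)>I_{p_r}(g(X^n);Z^n)$, which is a simple entropic advantage except that $g$ need not be binary. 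The third step binarizes: writing $K_A$ in binary as $(B_1,\dots,B_\ell)$ and using the chain rule, $I_{p_r}(K_A;Y^n)-I_{p_r}(K_A;Z^n)=\sum_{j}\bigl(I(B_j;Y^n\mid B_{1:j-1})-I(B_j;Z^n\mid B_{1:j-1})\bigr)>0$, so some $j$ gives a positive summand, and averaging that summand over the prefix $B_{1:j-1}$ gives a positive term for some prefix value $\beta$ of positive probability. Conditioning on $\{B_{1:j-1}=\beta\}$ amounts to intersecting $\mathcal A$ with $g^{-1}(\{k:\text{first }j-1\text{ bits of }k\text{ equal }\beta\})$, which keeps the conditioning set a rectangle $\mathcal A'\times\mathcal B$; on this refined rectangle the \emph{binary} function $f(X^n)=B_j$ satisfies the inequality of Definition~\ref{def:advantage}, establishing $(2)$.

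For $(2)\Rightarrow(3)$, given a rectangle $\mathcal A\times\mathcal B$ of blocklength $n$ and a binary $f$ with $I_{p_r}(f(X^n);Y^n)>I_{p_r}(f(X^n);Z^n)$, run the following two-round protocol on $m$ i.i.d.\ super-blocks of $n$ source symbols each, total blocklength $N=mn$. In round $1$ Alice sends, for every super-block, the indicator $\mathbf 1[X^n\in\mathcal A]$ and a Slepian--Wolf bin index of $f(X^n)$ at a fixed rate $H_{p_r}(f(X^n)\mid Y^n)+\varepsilon$ (this index she can compute for every super-block, not knowing yet which survive); in round $2$ Bob sends $\mathbf 1[Y^n\in\mathcal B]$ for every super-block. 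The $\approx m\,\mathbb{P}(X^n\in\mathcal A,Y^n\in\mathcal B)$ ``surviving'' super-blocks then carry i.i.d.\ triples distributed as $p_r$, on which Bob recovers $f(X^n)$ from $Y^n$ and the bin index (rate above the conditional entropy), after which Alice and Bob apply a two-universal hash (publicly chosen) to the vector of recovered $f$-values. Since Eve's per-survivor information is $(Z^n,\text{bin index})$, the leftover-hash lemma together with the AEP gives an extractable key length per surviving super-block of essentially $H_{p_r}(f(X^n))-I_{p_r}(f(X^n);Z^n)-(H_{p_r}(f(X^n)\mid Y^n)+\varepsilon)=I_{p_r}(f(X^n);Y^n)-I_{p_r}(f(X^n);Z^n)-\varepsilon>0$ for small $\varepsilon$; the membership announcements and the bin indices of non-surviving blocks are independent of the surviving blocks' contents (beyond the membership event already conditioned on) and so do not leak. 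Standard arguments then give vanishing error and leakage and a key rate of roughly $\tfrac1n\mathbb{P}(\mathcal A\times\mathcal B)(I_{p_r}(f(X^n);Y^n)-I_{p_r}(f(X^n);Z^n)-\varepsilon)>0$, all with exactly two rounds; this proves $(3)$ and closes the cycle.

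The main obstacle I anticipate is the first direction: converting a merely positive key \emph{rate} into a \emph{strict} advantage for a single transcript value, which requires the careful $\delta$-bookkeeping through Fano and the near-uniformity/near-secrecy estimates above, and then cleanly invoking the rectangle property so that the subsequent bit-by-bit binarization remains inside a combinatorial rectangle. The achievability direction $(2)\Rightarrow(3)$ is comparatively routine once one notes that conditioning each i.i.d.\ super-block on membership in $\mathcal A\times\mathcal B$ yields i.i.d.\ copies of $p_r$, so that the one-shot advantage can be amplified by a standard reconcile-and-hash argument, with the only care needed being to place Alice's Slepian--Wolf message in round $1$ (it does not depend on which super-blocks survive) so the protocol uses two rounds rather than three.
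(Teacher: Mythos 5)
Your proposal is correct and follows essentially the same route as the paper's own (sketched) proof: for $(1)\Rightarrow(2)$ you extract a transcript value with a strict conditional advantage and invoke the rectangle structure of deterministic interactive communication, and for $(2)\Rightarrow(3)$ you distill blocks landing in $\mathcal{A}\times\mathcal{B}$ into i.i.d.\ copies of $p_r$ and run a one-way reconcile-and-hash scheme arranged within two rounds. The additional details you supply (the Fano/$\delta$ bookkeeping, the bit-by-bit binarization that keeps the conditioning event a rectangle, and placing Alice's reconciliation message in round one) only make explicit steps the paper leaves implicit, so no substantive difference or gap remains.
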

\begin{proof} The work \cite{OrlitskyWigderson} does not include proofs. We therefore give a sketch of a proof based on personal communication with the authors.

We first prove that 1) implies 2). As mentioned in the introduction, an interactive communication $\mathbf{F}$ without private randomization suffices to achieve positive key rates, i.e., without loss of generality we may assume $H(F_1|X^n)=H(F_2|F_1Y^n)=H(F_3|F_{1:2}X^n)=...=0$. Orlitsky and Wigderson observe that for any $\mathbf{F}$ without private randomization, the conditional pmf of $(X^n, Y^n, Z^n)$ given $\mathbf{F}=\mathbf{f}$ has the form $p_r(x^n, y^n, z^n)$ for some sets $\mathcal{A}$ and $\mathcal{B}$ that depend on $\mathbf{f}$. This follows from the rectangle property of communication complexity, e.g., see \cite[p. 10]{kushilevitz_nisan_1996}. Given $S(X;Y\|Z)>0$, from 1) there is an $(n,\delta)$ code with a positive rate $R>0$ and a sufficiently small $\delta$. Observing that $\frac 1n I(K_A;Y^n|\mathbf{F})\geq \frac 1n I(K_A;K_B|\mathbf{F})\approx R$  and $\frac 1n I(K_A;Z^n|\mathbf{F})\approx 0$, we have
\begin{align}I(K_A;Y^n|\mathbf{F})-I(K_A;Z^n|\mathbf{F})>0\end{align}
so there exists an $\mathbf{F}=\mathbf{f}$ such that 
\begin{align}I(K_A;Y^n|\mathbf{f})-I(K_A;Z^n|\mathbf{f})>0.\end{align}
Note that given a $\mathbf{F}=\mathbf{f}$, $K_A$ is a function of $X^n$. 

We next prove that 2) implies 3). Suppose that Alice and Bob observe $N$ independent blocks, each of which consists of $n$ i.i.d.\ realizations of $(X,Y)$. Consider one of the blocks. Suppose that Alice observes $X^n$ and Bob observes $Y^n$ in that block. Alice declares on the public channel whether or not $X^n\in\mathcal{A}$, and Bob declares whether or not $Y^n\in\mathcal{B}$.  If $X^n\notin\mathcal{A}$ or $Y^n\notin\mathcal{B}$, they discard the block. Otherwise, they use the block for key agreement. The fraction of used blocks is asymptotically equal to $\mathbb{P}[X^n\in \mathcal{A}, Y^n\in \mathcal{B}]$. In the used blocks, the conditional pmf of the source is $p_r(x^n, y^n, z^n)$, and Alice has a simple entropic advantage over Eve, so she can apply a one-way SK generation scheme to produce a shared key with Bob. Alice's and Bob's declarations count as two rounds of communications. However, if Alice declares that $X^n\in\mathcal{A}$ and if $Y^n\in\mathcal{B}$, then Bob hears Alice's response and attaches the necessary public information for SK generation to his declaration so that no more than two rounds of communication are required. 

Since 1) implies 2) and 2) implies 3), we have that 1) also implies 3). Finally, going from 3) to 1) is immediate, and going from 3) to 2) is possible by going from 3) to 1) and then 1) to 2), as shown above.
\end{proof}

Finally, a sufficient condition for $S(X;Y\|Z)=0$ for the special class of erasure sources is given in \cite{MaurerWolf99}.
\begin{definition}\cite[Definition 4]{MaurerWolf99}\label{def:Fpxy}
Given a joint pmf $p_{XY}$ on discrete sets $\mathcal{X}$ and $\mathcal{Y}$, let 
\begin{align}
&F(p_{XY})=\min_{q_{XY}}\left(\max_{x,y}\left(\frac{p_{XY}(x,y)}{q_{XY}(x,y)}\right)\cdot\max_{x,y}\left(\frac{q_{XY}(x,y)}{p_{XY}(x,y)}\right)\right)
\end{align}
where the minimum is taken over all product pmfs $q_{XY}=q_X q_Y$ and where we set $0/0:=1$, $c/0:=\infty$ for $c>0$. 
Further, the \emph{``deviation from independence of $p_{XY}$"} is defined as
\begin{align}
   d_{\emph{ind}}(p_{XY})=1-\frac{1}{F(p_{XY})}.
\end{align}
\label{def-Maurer}
\end{definition}
For example, for the DSBE we have \cite[p. 509]{MaurerWolf99}
\begin{align}
	d_{\text{ind}}(p_{XY})= 1- \frac{p}{1-p}. 
\end{align}

\begin{theorem} \cite[Theorems 14 and 15]{MaurerWolf99}\label{theo:MaurerWolftheo1415}
For the erasure source $p_{XY}(x,y)p_{Z|XY}(z|x,y)$ with erasure probability $\epsilon$, we have $S(X;Y\|Z)=B_0(X;Y\|Z)=0$ if 
$\epsilon\leq 1-d_{\emph{ind}}(p_{XY})$. Furthermore, for the special case of the DSBE source the converse is also true, namely $S(X;Y\|Z)>0$ if $\epsilon>1-d_{\emph{ind}}(p_{XY})$. \label{theoremWolf} 
\end{theorem}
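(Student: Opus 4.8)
\textbf{Part 1 ($S=B_0=0$ when $\epsilon\le 1-d_{\text{ind}}(p_{XY})$).} The plan is to prove the stronger equality $B_0(X;Y\|Z)=0$ by exhibiting a channel $p_{J|Z}$ with $I(X;Y|J)=0$; the claim $S(X;Y\|Z)=0$ then follows from $0\le S(X;Y\|Z)\le B_0(X;Y\|Z)$ and (\ref{eq:B0upperbound}). Put $\lambda:=1-d_{\text{ind}}(p_{XY})=1/F(p_{XY})$. First I would unpack Definition~\ref{def:Fpxy}: a product distribution $q_Xq_Y$ attaining $F(p_{XY})$, after rescaling (the minimand is scale-invariant in $q_Xq_Y$), yields a nonnegative product function $m(x,y)=a(x)b(y)$ with $\lambda\,p_{XY}(x,y)\le m(x,y)\le p_{XY}(x,y)$ at every $(x,y)\in\supp(p_{XY})$. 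I would then define $J$ to copy $Z$ whenever $Z=\mathtt e$, and when $Z=(x,y)$ to take the value $(x,y)$ with probability $1-\beta(x,y)$ and the value $\mathtt e$ with probability $\beta(x,y):=\big(m(x,y)/p_{XY}(x,y)-\epsilon\big)/(1-\epsilon)$. The hypothesis is used exactly at this point: $m/p_{XY}\in[\lambda,1]\subseteq[\epsilon,1]$ is what guarantees $\beta(x,y)\in[0,1]$, so $XY\to Z\to J$ is a genuine Markov kernel. A short computation then gives $\mathbb{P}(X=x,Y=y,J=\mathtt e)=p_{XY}(x,y)\big[\epsilon+(1-\epsilon)\beta(x,y)\big]=m(x,y)=a(x)b(y)$, so $X$ and $Y$ are conditionally independent given $J=\mathtt e$; every other value of $J$ determines $(X,Y)$; hence $I(X;Y|J)=0$, as desired.

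\textbf{Part 2 (converse for the DSBE source).} Here $1-d_{\text{ind}}(p_{XY})=p/(1-p)$, and I would take $0\le p\le 1/2$ without loss of generality (otherwise replace $Y$ by $\bar Y$). The plan is to invoke Theorem~\ref{OrlitskyWigdersonThm} by producing a simple entropic advantage as in Definition~\ref{def:advantage}. I would set $n=N$, take $\mathcal A=\mathcal B=\{0^N,1^N\}$ (the two codewords of the length-$N$ repetition code), and take $f(x^N)=x_1$. Under the conditioned law $p_r$ one checks that $X^N$ is uniform on $\{0^N,1^N\}$, that $Y^N$ equals $X^N$ with probability $1-\tilde p$ and $\bar X^N$ with probability $\tilde p:=p^N/\big((1-p)^N+p^N\big)$, and that $Z^N$ determines $X^N$ unless all $N$ of Eve's symbols are erased, which has probability $\epsilon^N$. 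Hence $I_{p_r}(f(X^N);Y^N)=1-h(\tilde p)$ and $I_{p_r}(f(X^N);Z^N)=1-\epsilon^N$, so the entropic advantage holds precisely when $\epsilon^N>h(\tilde p)$. Since $\tilde p\le\big(p/(1-p)\big)^N$ and $h(t)=O\big(t\log(1/t)\big)$ as $t\to0$, one gets $h(\tilde p)=O\big(N\,(p/(1-p))^N\big)$, which is dominated by $\epsilon^N$ once $N$ is large --- precisely because $\epsilon>p/(1-p)$. So the advantage holds for some $N$, and Theorem~\ref{OrlitskyWigdersonThm} gives $S(X;Y\|Z)>0$.

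The routine work is the two conditional-law computations in Part 2 and the pointwise sandwich bound for $m$ in Part 1, together with the degenerate cases of Part 1: $F(p_{XY})=\infty$, i.e.\ $d_{\text{ind}}(p_{XY})=1$, which forces $\epsilon=0$, $Z=XY$, $S=0$; and $\epsilon=1$, i.e.\ $p_{XY}$ already a product, where $J=Z$ works directly. The genuinely delicate step is the exponent comparison in Part 2: Bob's residual uncertainty $h(\tilde p)$ decays like $(p/(1-p))^N$ up to a polynomial factor, while Eve's residual uncertainty decays like $\epsilon^N$, and the threshold $\epsilon=p/(1-p)$ in the theorem is exactly where these two error exponents cross --- the strict inequality $\epsilon>1-d_{\text{ind}}(p_{XY})$ is precisely what puts the stray polynomial factor on the harmless side.
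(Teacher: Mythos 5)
Your proposal is correct, but it takes a different route from the paper, which in fact never proves this statement itself: Theorem~\ref{theoremWolf} is quoted from \cite{MaurerWolf99}, and the paper's own machinery reproves and generalizes it elsewhere. Your Part 1 is essentially the original Maurer--Wolf intrinsic-information argument: rescaling an optimal product measure to obtain $a(x)b(y)$ with $\lambda\, p_{XY}\le a(x)b(y)\le p_{XY}$, $\lambda=1/F(p_{XY})$, and then splitting the unerased symbol into $J=\mathtt e$ with probability $\beta(x,y)=\big(a(x)b(y)/p_{XY}(x,y)-\epsilon\big)/(1-\epsilon)$; the hypothesis $\epsilon\le 1-d_{\text{ind}}$ enters exactly where you say (to make $\beta\in[0,1]$), the computation $\mathbb{P}(X=x,Y=y,J=\mathtt e)=a(x)b(y)$ is right, and this indeed gives $B_0=0$ hence $S=0$. (The only point worth making explicit is that the minimum in Definition~\ref{def:Fpxy} is attained on finite alphabets, so the rescaled product function exists; your degenerate cases $F=\infty$ and $\epsilon=1$ are handled correctly.) The paper, by contrast, uses this zero-capacity half as a black box and only adds the linear-programming identity $\epsilon_1=1-d_{\text{ind}}$ (Theorem~\ref{lemma-generaldis1}). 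Your Part 2 is Maurer's repetition-code advantage distillation fed into Theorem~\ref{OrlitskyWigdersonThm}: the conditional laws under $p_r$, the values $I_{p_r}(X_1;Y^N)=1-h(\tilde p)$ and $I_{p_r}(X_1;Z^N)=1-\epsilon^N$, and the exponent comparison $\epsilon^N$ versus $h(\tilde p)=O\big(N(p/(1-p))^N\big)$ are all correct, and the strict inequality $\epsilon>p/(1-p)$ is exactly what absorbs the polynomial factor. The paper instead obtains this converse for the DSBE (and more) from Corollary~\ref{corollary-new-1}: with $x_1=y_1=0$, $x_2=y_2=1$ one gets $C\big(p_{Z|XY}(\cdot|0,0)\,\|\,p_{Z|XY}(\cdot|1,1)\big)=-\log\epsilon$ and the threshold $\frac12\log\frac{p_{11}p_{22}}{p_{12}p_{21}}=\log\frac{1-p}{p}$, i.e., positivity for $\epsilon>p/(1-p)$ via the block-swapping code and a hypothesis-testing/Chernoff analysis. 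What each approach buys: yours is self-contained modulo Theorem~\ref{OrlitskyWigdersonThm} and reconstructs the cited proof with a transparent repetition-code protocol tailored to the DSBE; the paper's swapping/Chernoff criterion yields the same DSBE threshold as a special case of a condition valid for arbitrary sources (Theorems~\ref{gentheoremeq:eps12} and~\ref{theoremeq:eps12}), which is precisely the generalization the paper is after.
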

In this paper, we generalize Theorem~\ref{theo:MaurerWolftheo1415} with Theorem \ref{theoremeq:eps12} below.

\subsection{Classification of Sources with Zero Secret Key Capacity}
We introduce a new quantity $\Delta(X;Y\|Z)$ that provides insight into sources with zero key capacity. In particular, it allows one to compare sources with zero key capacity with each other. When $S(X;Y\|Z)=0$, it is not possible for Alice and Bob to agree on a shared key. However, it may be still possible for Alice and Bob to agree on a key that is \emph{approximately} secret. The quantity $\Delta(X;Y\|Z)$ quantifies the goodness of the approximate secret key.

Suppose Alice and Bob wish to agree on a \emph{single secret bit}. That is, instead of achieving a key \emph{rate}, they produce \emph{bits} $K_A\in\{1,2\}$ and $K_B\in\{1,2\}$ respectively. Alice and Bob wish to maximize $\mathbb{P}[K_A=K_B]$ while minimizing leakage to Eve who has $Z^n$ and the public discussion transcript $\mathbf{F}$. We can measure the quality of the keys $K_A$ and $K_B$ via the total variation distance
\begin{align}
\|p_{K_A K_B Z^n \mathbf{F}} - q_{K_AK_B}\cdot p_{Z^n \mathbf{F}}  \|_{TV} \label{quality}
\end{align}
where \begin{align}q_{K_AK_B}(k_A, k_B)=\frac{1}{2}\mathds{1}[k_A=k_B]\label{quality-q}\end{align} is the ideal distribution on $\mathcal K_A\times \mathcal K_B=\{1,2\}^2$. If the total variation distance given in \eqref{quality} vanishes, then Alice and Bob have perfect secret bits. The same total variation distance as in \eqref{quality-q} has been previously utilized in \cite[Eq. 3]{tyagi2014bound}.

We are interested in the  minimum of \eqref{quality} over all public discussion protocols as we let the number of source observations $n$ tend to infinity.

\begin{definition} Given a source $p_{XYZ}$, let $\Delta(X;Y\|Z)$ be the infimum of (\ref{quality}) over all public discussion schemes $\mathbf{F}$ (of arbitrary length) that produce single bits $K_A$ and $K_B$ by Alice and Bob, respectively, and where the number $n$ of source observations tends to infinity. In other words, we let
\begin{align}
   &\Delta(X;Y\|Z)=\inf \|p_{K_A K_B Z^n \mathbf{F}} - q_{K_AK_B}\cdot p_{Z^n \mathbf{F}}  \|_{TV}
\end{align}
where the infimum is over all $n\in\mathbb{N}$, arbitrary finite sets $\mathcal{F}_1$, $\mathcal{F}_2, \cdots, \mathcal{F}_k$ for some $k\in\mathbb{N}$, arbitrary conditional pmfs $p_{F_1|X^n}, 
p_{F_2|F_1Y^n}, p_{F_3|F_1F_2X^n},\ldots,$ and arbitrary conditional pmfs $p_{K_A|X^n\mathbf{F}}$ and $p_{K_B|Y^n\mathbf{F}}$, where $K_A$ and $K_B$ are binary and $q_{K_AK_B}$ is as given in \eqref{quality-q}. 
\end{definition}

\section{Main Results}\label{main-result}

We give generic results about the positivity of the SK capacity in Section \ref{sec:general}. We then restrict attention to erasure sources in Section \ref{sec:erasure}. 

\subsection{General Sources}\label{sec:general}
\begin{theorem}\label{genertlsd} Let $p_{XY}$ and $q_{XY}$ be two pmfs satisfying $q_{XY}\preceq p_{XY}$ (as defined in Definition \ref{defpreceq}). Consider a channel $p_{Z|XY}$ and let 
\begin{align*}
	&p_{XYZ}=p_{XY}\, p_{Z|XY} \\ &q_{XYZ} = q_{XY}\, p_{Z|XY}.
\end{align*}	
If the SK capacity $S(X;Y\|Z)$ under $q_{XYZ}$ is positive, then the SK capacity $S(X;Y\|Z)$ under $p_{XYZ}$ is also positive.
\end{theorem}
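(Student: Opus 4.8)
\textbf{Proof plan.} The natural strategy is to convert a key‑agreement protocol that works for $q_{XYZ}$ into one for $p_{XYZ}$ by prepending a publicly coordinated \emph{rejection‑sampling} stage that turns i.i.d.\ draws of $p_{XYZ}$ into i.i.d.\ draws of $q_{XYZ}$. First I would renormalize the tilting functions: writing $q_{XY}(x,y)=a(x)b(y)p_{XY}(x,y)$ and setting $\hat a(x)=a(x)/\max_{x'}a(x')\in[0,1]$, $\hat b(y)=b(y)/\max_{y'}b(y')\in[0,1]$, and $\beta=\E_p[\hat a(X)\hat b(Y)]$, one checks $\beta\in(0,1]$ (positivity holds because $q_{XY}$ is a genuine pmf, so some $(x,y)$ with $p_{XY}(x,y)>0$ has $\hat a(x),\hat b(y)>0$) and $q_{XY}(x,y)=\hat a(x)\hat b(y)p_{XY}(x,y)/\beta$.

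Next I would describe the protocol for $p_{XYZ}$ at block length $N$. Alice and Bob observe $(X^N,Y^N,Z^N)$ i.i.d.\ from $p_{XYZ}$; for each coordinate $i$, Alice draws (using private randomness, which the general problem formulation permits) $W_i\sim\mathrm{Bernoulli}(\hat a(X_i))$ and Bob draws $V_i\sim\mathrm{Bernoulli}(\hat b(Y_i))$, and both announce $W^N,V^N$ publicly; they keep the coordinates with $W_i=V_i=1$. The key computation is that, conditioned on $(W^N,V^N)$, the kept triples $(X_i,Y_i,Z_i)$ are i.i.d.\ with law $q_{XYZ}(x,y,z)=q_{XY}(x,y)p_{Z|XY}(z|x,y)$ — here it is essential that the coins depend only on $(X_i,Y_i)$, so the conditional channel $p_{Z|XY}$ is untouched — and that these kept triples are independent, given $(W^N,V^N)$, of all data on the discarded coordinates, in particular of the discarded $Z_j$'s that Eve still sees. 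On the $\approx\beta N$ kept coordinates, which tend to infinity almost surely, Alice and Bob run a protocol that achieves a positive SK rate for $q_{XYZ}$, which exists by hypothesis. The announcements $(W^N,V^N)$ together with Eve's $Z_j$ on discarded coordinates are extra public/eavesdropper side information that is conditionally independent of the relevant source given the public transcript, so reliability, uniformity, and the secrecy bound carry over, and the resulting $p$‑source key rate is about $\beta$ times that of the $q$‑protocol, hence positive; thus $S(X;Y\|Z)>0$ under $p_{XYZ}$. (Equivalently, one may phrase the conclusion through the Orlitsky--Wigderson characterization of Theorem~\ref{OrlitskyWigdersonThm}, taking the kept‑coordinate event as the sets $\mathcal A,\mathcal B$.)

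I expect the only real work to be the secrecy bookkeeping: verifying precisely that after conditioning on the coin pattern the kept coordinates form an i.i.d.\ $q_{XYZ}$ source independent of everything Eve holds on the discarded coordinates, so that Eve gains nothing from the rejection stage, and then checking that the $q$‑protocol's leakage bound $\tfrac1M I(K_A;Z^M\mathbf{F}_q)\le\delta$ on $M$ kept coordinates translates into the required bound $\tfrac1N I(K_A;Z^NW^NV^N\mathbf{F}_q)\le\delta'$ for the $p$‑source. Everything else — handling the random but concentrated number of kept coordinates by conditioning and the law of large numbers (aborting on the vanishing‑probability atypical event), and, if one prefers to avoid private randomness, distilling the biased coins from a negligible fraction of fresh source symbols — is routine.
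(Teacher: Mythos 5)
Your proposal is correct and matches the paper's own argument: the paper also normalizes $a(x)/\max_{x'}a(x')$ and $b(y)/\max_{y'}b(y')$, has Alice and Bob locally randomize each coordinate with these acceptance probabilities (phrased as passing $X_i$, $Y_i$ through channels $p_{X'|X}(0|x)=a(x)/\bar a$, $p_{Y'|Y}(0|y)=b(y)/\bar b$), publicly announce the accepted indices, and observe that the kept coordinates are i.i.d.\ $q_{XYZ}$, after which the positive-rate protocol for $q_{XYZ}$ is applied. Your rejection-sampling coins $W_i,V_i$ are exactly these channel outputs, so the two proofs are essentially identical.
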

Theorem~\ref{genertlsd}  is proved in Section~\ref{eq:genertlsd}. Intuitively, the condition $q_{XY}\preceq p_{XY}$ allows simulating the source $q_{XY}$ from $p_{XY}$ via rejection sampling, i.e., each observation $(X,Y)$ from $p_{XY}$ is either accepted or rejected by Alice and Bob via discussion on the public channel. 
If it is possible to generate a secret key under $q_{XYZ}$, then Alice and Bob simulate $q_{XYZ}$ from $p_{XYZ}$ and utilize the protocol for $q_{XYZ}$ to generate a SK with positive rate.

The following theorem gives a condition to achieve a positive SK rate in terms of the R{\'e}nyi divergence of order $1/2$.

\begin{theorem}\label{gentheoremeq:eps12} Consider the source $p_{XY}\,p_{Z|XY}$. Then the following conditions are equivalent.
\begin{enumerate}[label=(\roman*)]
\item  The key capacity $S(X;Y\|Z)$ is positive.
\item  There is an integer $n$, disjoint non-empty sets $\mathcal{A}_1, \mathcal{A}_2\subset \mathcal{X}^n$, and disjoint non-empty sets $\mathcal{B}_1, \mathcal{B}_2\subset \mathcal{Y}^n$ such that (see Definition~\ref{def:renyidivv}) for $(X^n, Y^n, Z^n)$ that are i.i.d. with pmf $p_{X YZ}$ we have
\end{enumerate}
\begin{align}
	&D_{\frac 12}\Big(p_{Z^n}(\cdot | X^n\!\in\!\mathcal{A}_1,Y^n\!\in\!\mathcal{B}_1)\Big\| p_{Z^n}(\cdot  |X^n\!\in\!\mathcal{A}_2,Y^n\!\in\!\mathcal{B}_2)\Big)\nonumber\\
	&<
	\log\Bigg(\frac{
	\mathbb{P}[X^n\!\in\!\mathcal{A}_1,Y^n\!\in\!\mathcal{B}_1]
}{
\mathbb{P}[X^n\!\in\!\mathcal{A}_1,Y^n\!\in\!\mathcal{B}_2]
}\frac{
	\mathbb{P}[X^n\!\in\!\mathcal{A}_2,Y^n\!\in\!\mathcal{B}_2]
}{
\mathbb{P}[X^n\!\in\!\mathcal{A}_2,Y^n\!\in\!\mathcal{B}_1]
}\Bigg)\label{eq:nproductnonzeroSKrate}
\end{align}
where $p_{Z^n}(\cdot | \mathcal{E})$ is the pmf of $Z^n$ conditioned on the event $\mathcal{E}$.	
\begin{enumerate}[label=(\roman*)]
\setcounter{enumi}{2}
\item $\Delta(X;Y\|Z)=0$.
\item $\Delta(X;Y\|Z)<\frac{3-\sqrt{5}}{8}\approx 0.095$.
\end{enumerate}
\end{theorem}
\begin{remark} Both the characterization given in item 2 of Theorem \ref{OrlitskyWigdersonThm} and  the characterization in item (ii) of Theorem \ref{gentheoremeq:eps12} are $n$-letter characterizations. Neither  makes the problem of identifying sources with positive SK rate decidable for general sources, \emph{i.e.,} the characterizations are not computable. However, for 
the special case of erasure sources when  either $X$ or $Y$ is binary, the characterization in Theorem \ref{gentheoremeq:eps12} turns out to be helpful, while it is not clear how to make use of the characterization in Theorem \ref{OrlitskyWigdersonThm}. Moreover,   in contrast to the characterization given in item 2 of Theorem \ref{OrlitskyWigdersonThm}, the characterization in Theorem \ref{gentheoremeq:eps12} considers all probabilities with respect to the unconditional product pmf $\prod_{i=1}^n p_{XYZ}(x,y,z)$. Furthermore, \eqref{eq:nproductnonzeroSKrate} can be equivalently expressed as
\begin{align}\nonumber
	&\sum_{z^n}\!\Big(\mathbb{P}[Z^n=z^n, X^n\!\in\!\mathcal{A}_1,Y^n\!\in\!\mathcal{B}_1]^{\frac{1}{2}}\times \mathbb{P}[Z^n=z^n, X^n\!\in\!\mathcal{A}_2,Y^n\!\in\!\mathcal{B}_2]^{\frac 12}\Big)\nonumber
	\\[0.5ex]&\;>
\mathbb{P}[X^n\!\in\!\mathcal{A}_1,Y^n\!\in\!\mathcal{B}_2]^{\frac 12}\;
\mathbb{P}[X^n\!\in\!\mathcal{A}_2,Y^n\!\in\!\mathcal{B}_1]^{\frac 12}
\label{eq:nproductnonzeroSKrate23}.
\end{align}
Equation \eqref{eq:nproductnonzeroSKrate23} involves only a product of unconditional probability terms, while  the characterization given in item 2 of Theorem \ref{OrlitskyWigdersonThm} is based on mutual information for conditional expressions. 
\end{remark}
\begin{remark} 
The constant $\frac{3-\sqrt{5}}{8}$ in Theorem \ref{gentheoremeq:eps12} is not necessarily optimal. 
It is an interesting question to find the best possible constant, i.e., the
minimum possible value of $\Delta(X;Y\|Z)$ over all sources that satisfy $S(X;Y\|Z)=0$.
\end{remark}
\begin{corollary}\label{corollary-new-1} Considering item (ii) of Theorem~\ref{gentheoremeq:eps12} for $n=1$, the SK capacity $S(X;Y\|Z)$ is positive  if one can find distinct symbols $x_1, x_2 \in\mathcal{X}$ and distinct symbols $y_1, y_2 \in\mathcal{Y}$ such that 
\begin{align}
	&D_{\frac 12}(p_{Z|XY}(\cdot |x_1,y_1)\| p_{Z|XY}(\cdot |x_{2},y_{2}))<
	\log\left(\frac{
		p_{XY}(x_1, y_{1})p_{XY}(x_2, y_{2})}{p_{XY}(x_1, y_{2})p_{XY}(x_2, y_{1})}\right).
\label{eqn:gentheoremeqA2}
\end{align}	

\begin{corollary} 
Fix some pmfs $q_{X,1}$, $q_{X,2}$, $q_{Y,1}$, $q_{Y,2}$, and $q_Z$. 
For $i,j=1,2$, we define
\begin{align}
  \kappa_{i,j}&=\min_{r_{XY}:\: r_X=q_{X,i}, r_Y=q_{Y,j}}D(r_{XY}\|p_{XY})
\end{align}
and
\begin{align}
\theta_{i,j}&=\min_{r_{XYZ}} D(r_{XYZ}\|p_{XYZ}) \label{eqnDR}
\end{align}
where the minimum in \eqref{eqnDR} is over $r_{XYZ}$ satisfying $r_X=q_{X,i}, r_Y=q_{Y,j}, r_Z=q_Z$. 
Then, $S(X;Y\|Z)>0$  if  
\begin{align}&\theta_{1,1}+\theta_{2,2}<
\kappa_{1,2}+\kappa_{2,1}.\label{cond1}
\end{align}
To see this, assume that $q_Z$, $q_{X,1}$, $q_{X,2}$, $q_{Y,1}$, and $q_{Y,2}$ are types, and $\mathcal{A}_i$ and $\mathcal{B}_j$ are the set of typical sequences with types $q_{X,i}$ and $q_{Y,j}$, respectively, \emph{i.e.,} $\mathcal{A}_i=\mathcal{T}^{(n)}_{q_{X,i}}$ and $\mathcal{B}_j=\mathcal{T}^{(n)}_{q_{Y,j}}$. 

Positivity of the SK capacity follows if \eqref{eq:nproductnonzeroSKrate23} holds. Using Lemma \ref{LemmaMT} and the simple inequality
\begin{align}\nonumber
	&\sum_{z^n}\!\Big(\mathbb{P}[Z^n=z^n, X^n\!\in\!\mathcal{A}_1,Y^n\!\in\!\mathcal{B}_1]^{\frac{1}{2}}\times \mathbb{P}[Z^n=z^n, X^n\!\in\!\mathcal{A}_2,Y^n\!\in\!\mathcal{B}_2]^{\frac 12}\Big)\nonumber
	\\[0.5ex]&\;\geq 
\sum_{z^n\in \mathcal{T}^{(n)}_{q_Z}}\!\Big(\mathbb{P}[Z^n=z^n, X^n\!\in\!\mathcal{A}_1,Y^n\!\in\!\mathcal{B}_1]^{\frac{1}{2}}\times \mathbb{P}[Z^n=z^n, X^n\!\in\!\mathcal{A}_2,Y^n\!\in\!\mathcal{B}_2]^{\frac 12}\Big)
\end{align}
we observe that \eqref{eq:nproductnonzeroSKrate23} holds as $n\rightarrow \infty$ if \eqref{cond1} holds. 

Consider an erasure source, and the special case of $q_Z(z)=\mathbf{1}[z=e]$. Then we have
\begin{align}
\theta_{i,j}=\kappa_{i,j}+\log\frac{1}{\epsilon}.
\end{align}
In other words, the SK rate is positive if 
\begin{align}&\log(\epsilon)>
\frac{1}{2}\left(\kappa_{1,1}+\kappa_{2,2}-\kappa_{1,2}-\kappa_{2,1}\right).
\end{align}
In particular if $q_{X,1}(x)=\mathbf{1}[x=x_1]$, $q_{X,2}(x)=\mathbf{1}[x=x_2]$, $q_{Y,1}(y)=\mathbf{1}[y=y_1]$, $q_{Y,2}(y)=\mathbf{1}[y=y_2]$, we obtain
\begin{align}
  \kappa_{i,j}=\log\frac{1}{p_{XY}(x_i, y_j)}
\end{align}
and hence the SK key is positive if
\begin{align}
  \epsilon>\left(\frac{p_{XY}(x_1, y_2) p_{XY}(x_2, y_1)}{p_{XY}(x_1,y_1)p_{XY}(x_2,y_2)}\right)^{\frac 12}.\label{eq:conditionforerasuresourcenonzerorate}
\end{align}
Alternatively, the condition (\ref{eq:conditionforerasuresourcenonzerorate}) can be deduced also from \eqref{eqn:gentheoremeqA2} (see \eqref{eqnN149} for the calculation). 
\end{corollary}
\end{corollary}

Theorem~\ref{gentheoremeq:eps12} is proved in Section~\ref{proofgentheoremeq:eps12} by using a hypothesis testing approach. The left hand side of \eqref{eq:nproductnonzeroSKrate} is the error exponent of the adversary in a hypothesis testing problem while the right hand side is the error exponent of the legitimate parties. The theorem shows that key agreement is feasible when the legitimate parties have a better exponent than the adversary.

\subsection{Erasure Sources}\label{sec:erasure}
We illustrate the condition (\ref{eq:nproductnonzeroSKrate}) for erasure sources and relate it to previously known bounds. Suppose we are given a joint pmf $p_{XY}$. Without loss of generality, we assume that $p_X(x)>0$ and $p_Y(y)>0$ for all $(x,y)\in \mathcal{X}\times\mathcal{Y}$ throughout this section. 
We define a path, which is used in the proofs of the theorems given below.
\begin{definition}\label{def:path}
A sequence $(x_1, y_1, x_2, y_2, \cdots, x_k, y_k)$ forms a path if all $x_i$'s with $x_i\in\mathcal{X}$ are distinct and also all $y_i$'s with $y_i\in\mathcal{Y}$ are distinct. We say the length of the path is $2k$ and we assign the following value to the path 
\begin{align}
	\left(\frac{\prod_{i=1}^kp_{XY}(x_i,y_i)}{p_{XY}(x_1, y_k) \prod_{i=2}^kp_{XY}(x_i,y_{i-1})}\right)^{1/k}.\label{eq:assignedvalue}
\end{align}
Let $\epsilon_{1}$ be the minimum  value of all possible paths and $\epsilon_{2}$ be the minimum  value of all possible paths of length at most four. In particular, we have
\begin{align}
	\epsilon_2=\min_{x_1\neq x_2, y_1\neq y_2}\left(\frac{p_{XY}(x_1,y_1)p_{XY}(x_2,y_2)}{p_{XY}(x_1, y_2) p_{XY}(x_2, y_1)}\right)^{\frac 12}.
\end{align}
\end{definition}

\begin{example} \label{example2n}
\normalfont	
Suppose that $X$ and $Y$ are binary with a joint pmf $p_{XY}$. Then paths of length two are of the form $(x_1, y_1)$ for some $x_1,y_1\in\{0,1\}$, and, by definition, are assigned the value $1$. Because the $x_i$'s and $y_i$'s are distinct in a path, paths of length more than four do not exist. There are multiple paths of length four. For instance, $(x_1=0, y_1=1, x_2=1, y_2=0)$ is assigned the value
$$\frac{\sqrt{p_{XY}(0,1)p_{XY}(1,0)}}{\sqrt{p_{XY}(0,0)p_{XY}(1,1)}}$$
and $(x_1=0, y_1=0, x_2=1, y_2=1)$ is assigned the value
$$\frac{\sqrt{p_{XY}(0,0)p_{XY}(1,1)}}{\sqrt{p_{XY}(0,1)p_{XY}(1,0)}}.$$ 
All other paths have one of the above two values and therefore
\begin{align}
&\epsilon_1=\epsilon_2=\min\Bigg\{\frac{\sqrt{p_{XY}(0,1)p_{XY}(1,0)}}{\sqrt{p_{XY}(0,0)p_{XY}(1,1)}},\frac{\sqrt{p_{XY}(0,0)p_{XY}(1,1)}}{\sqrt{p_{XY}(0,1)p_{XY}(1,0)}}\Bigg\}.\label{eqn:binbi}
\end{align}
Note that one of the terms inside the minimum is less than or equal to one. We therefore do not need to consider paths of length two whose values are one.      
\end{example}
\begin{example} \label{example3n}
As mentioned at the beginning of this section, we assume positive marginal distributions $p_X(x)>0$ and $p_Y(y)>0$ for all $(x,y)\in \mathcal{X}\times\mathcal{Y}$.
Assume that $p_{XY}(x^*,y^*)=0$ for some $(x^*,y^*)\in\mathcal{X}\times\mathcal{Y}$.   Then $\epsilon_1=\epsilon_2=0$. This can be seen by starting the path with $x_1=x^*, y_1=y^*$.
\end{example}

We now give lower and upper bounds on the maximum erasure probability for which the SK capacity is zero for an erasure source.

\begin{theorem}\label{theoremeq:eps12} For the erasure source $p_{XY}\, p_{Z|XY}$ with erasure probability $\epsilon$, we have $S(X;Y\|Z)=0$ if $\epsilon\leq \epsilon_1$, and $S(X;Y\|Z)>0$ if $\epsilon> \epsilon_2$, where $\epsilon_1$ and $\epsilon_2$ are as in Definition~\ref{def:path}. Moreover, $\epsilon_1=\epsilon_2$ if $X$ or $Y$ is binary. We also have $\epsilon_1=\epsilon_2=0$ when $p_{XY}(x^*,y^*)=0$ for some $(x^*,y^*)\in\mathcal{X}\times\mathcal{Y}$. For these special cases, $S(X;Y\|Z)>0$ if and only if $\epsilon>\epsilon_1=\epsilon_2$.
\label{mmthm1s}
\end{theorem}

Theorem~\ref{theoremeq:eps12} is proved in Section~\ref{eq:maintheoremsproof}. Positivity of $S(X;Y\|Z)$ for $\epsilon> \epsilon_2$ is derived by using the result of Theorem \ref{gentheoremeq:eps12}. 

\begin{remark} Parameter $1-\epsilon$ quantifies the information leakage to Eve. Intuitively speaking, the condition $\epsilon>\epsilon_2$ or $1-\epsilon<1-\epsilon_2$ states that key agreement is possible if the correlation between $X$ and $Y$ (as measured by $1-\epsilon_2$) is larger than the leakage to Eve. In fact, the quantities $\epsilon_1$ and $\epsilon_2$ can be used to define measures of correlation. For instance, Maurer and Wolf take 
\begin{align}
	d_{\emph{ind}}(p_{XY})=1-\frac{1}{F(p_{XY})}=1-\epsilon_1
\end{align}	
as a measure of correlation. We propose 
\begin{align}
	\log F(p_{XY})=\log\frac{1}{\epsilon_1}
\end{align}
as yet another measure. Observe that $\log F(p_{XY})$ can be expressed in terms of Renyi-divergence of order infinity:
\begin{align}
	\log\frac{1}{\epsilon_1}&=\log F(p_{XY})=\min_{q_X,q_Y}\!\bigg(D_\infty(p_{XY} \| q_X\, q_Y)+D_\infty(q_X\,q_Y \| p_{XY})\bigg).
\end{align}
We define the Renyi-Jeffrey's divergence (RJ divergence) between two distributions $p$ and $q$ as
\begin{align}
	D^{RJ}_{\alpha}(p\|q)\triangleq \frac{1}{2}\left(D_\alpha(p\|q)+D_\alpha(q\|p)\right)
\end{align}
which is just Jeffrey's divergence (symmetrized KL divergence) in its R{\'e}nyi form. We next define the RJ information of order $\alpha$ as the minimum RJ divergence between a given joint distribution and all product distributions:
\begin{align}
	RJ_\alpha(X;Y)\triangleq \min_{q_Xq_Y}D^{RJ}_{\alpha}(p_{XY} \| q_X\, q_Y).\label{eq:RJalpha}
\end{align}
Observe that $RJ_\infty(X;Y)=\frac12\log F(p_{XY})$. The way RJ information is defined in (\ref{eq:RJalpha}) parallels the way $\alpha$-R{\'e}nyi mutual information is defined in \cite{lapidoth2019two} and \cite{tomamichel2017operational}[Equation (58)].

Similarly for $\epsilon_2$, we propose 
\begin{align}
	\log\frac{1}{\epsilon_2}
\end{align}
as a new measure of correlation (see (\ref{eqn:J134}) below) and study its properties in Appendix \ref{appendixB}. 
\end{remark}
\begin{remark}
Theorem~\ref{theoremeq:eps12} generalizes \cite[Theorems 14 and 15]{MaurerWolf99} which was summarized in Theorem \ref{theoremWolf}. In the proof, we show (via the duality theorem for linear programs) that $\epsilon_1$ (of Definition \ref{def:path}) is the same quantity as $1-d_{\emph{ind}}(p_{XY})$ (of Definition \ref{def-Maurer}). Observe that Theorem~\ref{theoremeq:eps12} claims $S(X;Y\|Z)>0$ if $\epsilon> \epsilon_2$ for any pmf $p_{XYZ}$, while \cite[Theorem 14]{MaurerWolf99} considers only the DSBE source. We remark that (i) the code we use to prove $S(X;Y\|Z)>0$ for general distributions differs from the one used by \cite{MaurerWolf99} for the DSBE source. Our code applies the swapping concept  and works for general sources, (ii) for the special case of the DSBE source, our code and the one used in \cite{MaurerWolf99} give the same bound on $\epsilon$ for the positivity of the SK capacity, 
(iii) for the DSBE source, the code used in \cite{MaurerWolf99} achieves higher secret key rates for $\epsilon>\epsilon_2$. However, the code of \cite{MaurerWolf99} exchanges more information on the public channel.
\end{remark}

We now study the one-way SK rate and the lower bound $\bar{L}(X;Y\|Z)$ obtained from (\ref{eq:lowerbound}) for an erasure source.

\begin{theorem}\label{theorem-lowerbounds} For an erasure source $p_{XY}\,p_{Z|XY}$ with erasure probability $\epsilon$ such that $p_{X}(x)>0,\; p_Y(y)>0$ for all $(x,y)\in\mathcal{X}\times\mathcal{Y}$, the following statements hold.
\begin{enumerate}
\item The  one-way SK rate from Alice to Bob vanishes if and only if
\begin{align}
	\epsilon\leq 1-\eta(p_{Y|X})\label{eq:Theorem6part1}
\end{align}	
where $\eta(\cdot)$ is defined in (\ref{eqnAG687rr657}). A similar statement holds for the one-way SK rate from Bob to Alice. 
\item We have $\bar{L}(X;Y\|Z)=0$ if and only if 
\begin{align}
	\epsilon\leq 1-\max_{\substack{q_{XY}:\: q_{XY}\preceq p_{XY}}}\rho^2_m(q_{XY})\label{eq:theorem4lowerbound}
\end{align}
where $\rho_m(\cdot)$ is defined in (\ref{eq:max-correlatoin-31}). 
\item The upper bound $ B_0(X;Y\|Z)$ in (\ref{eq:B0upperbound}) is zero  if and only if $\epsilon\leq \epsilon_3$ where
\begin{align}
	\epsilon_3=\max \sum_{t=1}^{|\mathcal{Z}|}\min_{x,y:\: p_{X,Y}(x,y)>0}\delta_{x,y,t} \label{theoremeq:maxmin2}
\end{align} 
and the maximum is over all $\delta_{x,y,t}$ such that $\delta_{x,y,t}\geq 0$, the matrix $[p_{XY}(x,y)\delta_{x,y,t}]$  has rank 1 for all $t$, and $\sum_{t=1}^{|\mathcal{Z}|}\delta_{x,y,t}=1$ for all $x,y$. Here, for every $t$, $[p_{XY}(x,y)\delta_{x,y,t}]$  is
a matrix with dimensions ${|\mathcal{X}|\times|\mathcal{Y}|}$ whose rows  and columns are indexed by the realizations of $X$ and $Y$, respectively, and whose $(x,y)$ entry is $p_{XY}(x,y)\delta_{x,y,t}$ for all $(x,y)\in\mathcal{X}\times\mathcal{Y}$.
\item Assume that $p(x,y)>0$ for all $x,y$ (the case where $p(x,y)=0$ for some $x,y$ was discussed in Theorem \ref{theoremeq:eps12}). The upper bound $ B_1(X;Y\|Z)$ in (\ref{eq:B1upperbound}) is zero  if and only if 
$\epsilon\leq \epsilon_4$ where
\begin{align}
	\epsilon_4 =1- \inf\eta(p_{J|XY}).
\end{align}
Here the infimum is taken over channels $p_{J|XY}$ 
for which $I(X;Y|J)=0$ for $p_{XYJ}=p_{XY}p_{J|XY}$. 
\end{enumerate}

\end{theorem}

The proof of Theorem~\ref{theorem-lowerbounds} is given in Section~\ref{subsec:proofdsbe22}.
\begin{remark}
Since $B_1(X; Y ||Z) \!\leq\! B_0(X; Y ||Z)$, we have $\epsilon_3 \leq\epsilon_4$.
We show below that $\epsilon_1 \leq \epsilon_3$, where $\epsilon_1$ is as given in Definition \ref{def:path}.
This implies that $\epsilon_1 \leq \epsilon_3 \leq \epsilon_4$.
Therefore, the bound in terms of $\epsilon_4$ given in the last part of Theorem \ref{theorem-lowerbounds} is tighter than the bound in terms of $\epsilon_1$ given in Theorem \ref{theoremeq:eps12}.
The reason for stating Theorem \ref{theoremeq:eps12} with $\epsilon_1$ instead of $\epsilon_4$ is that the definition of $\epsilon_1$ is explicit, computable and can be readily related to $\epsilon_2$.
On the other hand, $\epsilon_4$ is not computable in general since we do not have a cardinality bound on $J$.
However, for any particular choice of $p_{J|XY}$ we deduce that $S(X;Y\|Z)=0$ if  $I(X;Y|J)=0$ and $\epsilon\leq 1- \eta(p_{J|XY})$, but we do not know the best value for $1- \eta(p_{J|XY})$ as we vary over all $p_{J|XY}$ satisfying $I(X;Y|J)=0$.

The value of $\epsilon_3$ given in part 3 of Theorem \ref{theorem-lowerbounds} is greater than or equal to $\epsilon_1$ given in Definition \ref{def:path}. To see this,
observe from Theorem \ref{lemma-generaldis1} that one can find some $\tilde{ \delta}_{x,y}\in [0,1]$ such that 
\begin{align}
	\epsilon_1= \min_{x,y}\tilde{\delta}_{x,y}
\end{align}
and the matrix $[p_{XY}(x,y)\tilde{\delta}_{x,y}]$ has rank one. To prove that $\epsilon_1\leq \epsilon_3$,  we need to find appropriate $\delta_{x,y,t}$ for $t=1,2,\cdots, |\mathcal{Z}|$ such that 
\begin{align}
	\min_{x,y}\tilde{\delta}_{x,y}\leq \sum_{t=1}^{|\mathcal{Z}|}\min_{x,y:\: p_{XY}(x,y)>0}\delta_{x,y,t}.
\end{align}

As shown in the proof of part 3 of Theorem \ref{theorem-lowerbounds}, the quantity $\epsilon_3$ remains the same if we allow  $t$ to take values in a larger set $\{1,2,\cdots, M\}$ for some $M>|\mathcal{Z}|$. We define $\delta_{x,y,t}$ for 
$t\in\{0\}\cup \mathcal{X}\times\mathcal Y$ as follows: $\delta_{x,y,0}=\tilde{\delta}_{x,y}$ and for any $(x',y')\in \mathcal{X}\times\mathcal Y$, we have
\begin{align}
	\delta_{x,y, t=(x',y')}=\mathds{1}[x'=x, y'=y](1-\tilde{\delta}_{x,y}).
\end{align}
We have $\sum_t \delta_{x,y,t}=1$ and $[p_{XY}(x,y)\delta_{x,y,t}]$  has rank 1 for all $t\in\{0\}\cup \mathcal{X}\times\mathcal Y$. Furthermore, we have
\begin{align}
	\min_{x,y: p_{XY}(x,y)>0}\delta_{x,y,t=(x',y')}=0 \qquad \forall x',y'.
\end{align}
Therefore, we compute
\begin{align}
	&\sum_{t}\min_{x,y:\: p_{X,Y}(x,y)>0}\delta_{x,y,t}=\min_{x,y:\: p_{X,Y}(x,y)>0}\delta_{x,y,0}=\min_{x,y:\: p_{X,Y}(x,y)>0}\tilde{\delta}_{x,y}\geq \min_{x,y}\tilde{\delta}_{x,y}.
\end{align}
\end{remark}
\vspace{0.4cm}

Computing the bounds on $\epsilon$ given in Theorem \ref{theorem-lowerbounds}  is cumbersome for general distributions. Thus, we next focus on the DSBE source and illustrate the suboptimality of the lower bound $\bar{L}(X;Y\|Z)$ obtained from (\ref{eq:lowerbound}) with a DSBE source example.

\subsection{DSBE Source Example}
Using (\ref{eqn:binbi}) and Theorem~\ref{theoremeq:eps12}, the SK capacity $S(X;Y\|Z)$ is zero if and only if 
\begin{align}
\epsilon\leq\frac{\min\{p,1-p\}}{\max\{p,1-p\}}.\label{eqfger2}
\end{align}
We now study the  lower bound $\bar{L}(X;Y\|Z)$ in (\ref{eq:lowerbound}). The main result of this subsection is to show that $S(X;Y\|Z)\neq \bar{L}(X;Y\|Z)$ for a DSBE$(p,\epsilon)$ source if
\begin{align}\frac{\min\{p,1-p\}}{\max\{p,1-p\}}<\epsilon\leq 4p(1-p).\end{align}
In fact, we show that $\bar{L}(X;Y\|Z)=0$ for erasure probabilities $\epsilon$ in the above interval since we know from \eqref{eqfger2} that $S(X;Y\|Z)>0$ in this interval. This result illustrates that the {lower bound $\bar{L}(X;Y\|Z)$} is loose.

We remark that the lower bound is tight, i.e., $S(X;Y\|Z)=\bar{L}(X;Y\|Z)$, for all previously considered joint pmfs $p_{XYZ}$ for which the SK capacity $S(X;Y\|Z)$ is known. For instance, if $X\rightarrow Y\rightarrow Z$ forms a Markov chain, then assigning $U_1=X$ and $k=1$ in (\ref{eq:lowerbound}) recovers the SK capacity $S(X;Y\|Z)=I(X;Y|Z)$ achieved by one-way communication from $X$ to $Y$. Similarly, consider the reversely degraded example from \cite{AhlswedeCsiszar}. Let $X = (X_1, X_2)$, $Y = (Y_1, Y_2)$, and $Z = (Z_1, Z_2)$, where all $(X_i,Y_i,Z_i)$ tuples for $i=1,2$ are mutually independent. If $X_{1}\rightarrow Y_{1}\rightarrow Z_{1}$ and $Y_{2}\rightarrow X_{2}\rightarrow Z_{2}$ form Markov chains, assigning $U_1=X_1$ and $U_{2}=Y_{2}$ in (\ref{eq:lowerbound}) recovers the SK capacity $S(X;Y\|Z)=I(X;Y|Z)$.

We give the condition for $\bar{L}(X;Y\|Z)=0$ for a DSBE source in the following theorem and prove it in Section~\ref{subsec:proofdsbe}.

\begin{theorem}\label{thm:A5}
 Let $(X,Y,Z)$ be a DSBE  source with parameters $(p,\epsilon)$. Then $\bar{L}(X;Y\|Z)=0$ if and only if the one-way SK rate from Alice to Bob (or Bob to Alice) vanishes, i.e., if and only if
\begin{align}\epsilon\leq 4p(1-p).\end{align}
\end{theorem}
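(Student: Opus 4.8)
The plan is to establish the two directions separately, though one direction is essentially free. Since the one-way SK rate is a special case of the lower bound $L(X;Y\|Z)$ in \eqref{eq:lowerbound} (take $k=\zeta=1$ and $U_1$ a function of $X$, with $Z$ playing its role), positivity of the one-way rate immediately gives $\bar L(X;Y\|Z)>0$. By part 1 of Theorem~\ref{theorem-lowerbounds} specialized to the DSBS, the one-way rate from Alice to Bob is positive precisely when $\epsilon>1-\eta(p_{Y|X})$; for a binary symmetric channel with crossover probability $p$ one has $\eta(p_{Y|X})=\rho_m^2=(1-2p)^2$, so $1-\eta=1-(1-2p)^2=4p(1-p)$. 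Hence $\epsilon>4p(1-p)$ forces $\bar L(X;Y\|Z)>0$. The real content is the converse: if $\epsilon\le 4p(1-p)$ then $\bar L(X;Y\|Z)=0$, i.e., \emph{every} choice of interactive auxiliaries $U_1,\dots,U_k$ (of arbitrary length, arbitrary starting index $\zeta$) makes the bracketed sum in \eqref{eq:lowerbound} nonpositive.

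For the converse I would proceed term-by-term via a strong-data-processing argument, exploiting the erasure structure of $Z$. Write $Z=(E,\bar Z)$ where $E$ is the erasure flag and $\bar Z=XY$ when $E=0$; conditioning on $E$ reduces each single-round difference $I(U_i;Y|U_{1:i-1})-I(U_i;Z|U_{1:i-1})$ (odd $i$) to $\epsilon\, I(U_i;Y|U_{1:i-1}, E=1)-(1-\epsilon)\,I(U_i;XY|U_{1:i-1},E=0)$ up to the bookkeeping that $I(U_i;E)=0$ since $E\perp (X,Y,U_{1:i})$. Because $I(U_i;XY|\cdots)\ge I(U_i;Y|\cdots)$ and, crucially, for the DSBS the strong data processing inequality gives $I(U_i;Y|\cdots)\le \rho_m^2\, I(U_i;X|\cdots)\le (1-2p)^2 I(U_i;XY|\cdots)$ along the relevant Markov chain, one gets each odd-$i$ term bounded by $\bigl(\epsilon(1-2p)^2-(1-\epsilon)\bigr)I(U_i;XY|U_{1:i-1},E=0)$, and symmetrically for even $i$. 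The coefficient $\epsilon(1-2p)^2-(1-\epsilon)\le 0$ exactly when $\epsilon\le \frac{1}{1+(1-2p)^2}$.

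The subtlety — and the step I expect to be the main obstacle — is that $\frac{1}{1+(1-2p)^2}$ is \emph{larger} than $4p(1-p)$, so this naive per-term bound is not tight and does not by itself prove $\bar L=0$ on the whole claimed interval. The Markov constraints \eqref{eq:LowerboundoddiMarkov}–\eqref{eq:LowerboundeveniMarkov} couple the odd and even rounds: an auxiliary generated at Bob's side at step $i$ conditions on the previously revealed $U_{1:i-1}$, which already "used up" correlation, and one must account for how much conditional correlation survives. I would therefore recast the telescoping sum using the upper-concave-envelope / auxiliary-elimination technique of \cite{MaIshwar2013, nair2013upper, Verdu} mentioned after \eqref{eq:lowerbound}: represent the optimal $\bar L$ as a fixed point of an operator alternating the two strong-data-processing contractions, and show that for the DSBS this operator is a contraction with fixed point $0$ exactly when $\epsilon\le 4p(1-p)$. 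Concretely, I expect the sharp threshold to come from tracking a single scalar potential (e.g. the conditional correlation $\rho$ or a free-energy-type quantity) whose update under one odd round followed by one even round is governed by $\epsilon^2(1-2p)^4$ against $(1-\epsilon)^2$ compounded with the mutual-information increments, and the break-even condition simplifies to $\epsilon=4p(1-p)$. An alternative, possibly cleaner route for the converse is to invoke part 2 of Theorem~\ref{theorem-lowerbounds}: $\bar L(X;Y\|Z)=0$ iff $\epsilon\le 1-\max_{q_{XY}\preceq p_{XY}}\rho_m^2(q_{XY})$, so it suffices to show that for the DSBS every $q_{XY}\preceq p_{XY}$ has $\rho_m^2(q_{XY})\le (1-2p)^2$; by Definition~\ref{defpreceq}, $q_{XY}(x,y)=a(x)b(y)p_{XY}(x,y)$, and a direct computation of the maximal correlation of such a reweighted $2\times2$ table (whose "odds ratio" $\frac{q(0,0)q(1,1)}{q(0,1)q(1,0)}=\frac{p_{XY}(0,0)p_{XY}(1,1)}{p_{XY}(0,1)p_{XY}(1,0)}=\frac{(1-p)^2}{p^2}$ is invariant under the $a,b$ reweighting) shows the maximal correlation is maximized at the symmetric $q$, giving $\rho_m^2\le(1-2p)^2$ with equality at $q=p$. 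This reduces the whole theorem to the elementary fact that, among $2\times2$ distributions with a fixed odds ratio, maximal correlation squared is at most $(1-2p)^2$, which I would verify by the explicit formula for $\rho_m$ of a $2\times2$ table.
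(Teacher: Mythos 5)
Your closing ``alternative route'' is in fact the paper's proof: the paper establishes the converse by combining part 2 of Theorem~\ref{theorem-lowerbounds}, i.e.\ $\bar{L}(X;Y\|Z)=0$ iff $\epsilon\leq 1-\max_{q_{XY}\preceq p_{XY}}\rho^2_m(q_{XY})$, with the fact that this maximum equals $(1-2p)^2$ for the DSBS, which it imports from \cite[Theorem 6]{Verdu}. You propose instead to verify that fact by hand via the invariance of the odds ratio under the reweighting $q(x,y)=a(x)b(y)p(x,y)$ and the explicit $2\times 2$ formula for $\rho_m$; that is a legitimate, more self-contained finish, but note that the inequality you need (among $2\times2$ tables with odds ratio $\big(\tfrac{1-p}{p}\big)^2$, the maximal correlation is at most $|1-2p|$, with the maximum at uniform marginals) is asserted, not proved, and it is exactly the nontrivial content the paper cites Verd\'u for. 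The forward direction (one-way rate positive for $\epsilon>4p(1-p)$, hence $\bar{L}>0$) matches the paper.

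The route you spend most of the proposal on, however, contains a concrete error and an unnecessary detour. For an erasure source the correct identity is $I(U_i;Z|U_{1:i-1})=(1-\epsilon)I(U_i;XY|U_{1:i-1})$, so the odd-$i$ term of \eqref{eq:lowerbound} is $I(U_i;Y|U_{1:i-1})-(1-\epsilon)I(U_i;X|U_{1:i-1})$; there is no factor $\epsilon$ multiplying the $Y$-term, since Bob observes $Y$ regardless of Eve's erasure flag. Your decomposition would yield the threshold $\tfrac{1}{1+(1-2p)^2}$, which is \emph{larger} than $4p(1-p)$; if it were valid it would ``prove'' $\bar{L}=0$ for some $\epsilon>4p(1-p)$, contradicting the forward direction you already established --- the sign of the discrepancy signals the algebra slip, not a lack of tightness. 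With the correct identity, the per-term argument closes at exactly $\epsilon\leq 4p(1-p)$, \emph{provided} you can apply the contraction coefficient $(1-2p)^2$ to the conditional laws $p_{XY|U_{1:i-1}=u_{1:i-1}}$; these conditionals are precisely the distributions $q\preceq p$ of Definition~\ref{defpreceq}, so the crux is again the statement $\max_{q\preceq p}\rho_m^2(q)\leq(1-2p)^2$, which your sketch silently assumes when it invokes ``the SDPI for the DSBS'' after conditioning. Consequently the proposed fixed-point/potential-function machinery addresses a non-existent obstacle and, as sketched, is not a proof; the theorem reduces, exactly as in the paper, to the single maximal-correlation fact about reweighted $2\times2$ tables.
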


\begin{figure*}[t!]
	\centering
	\includegraphics[width=0.85\textwidth, height=1\textheight, keepaspectratio]{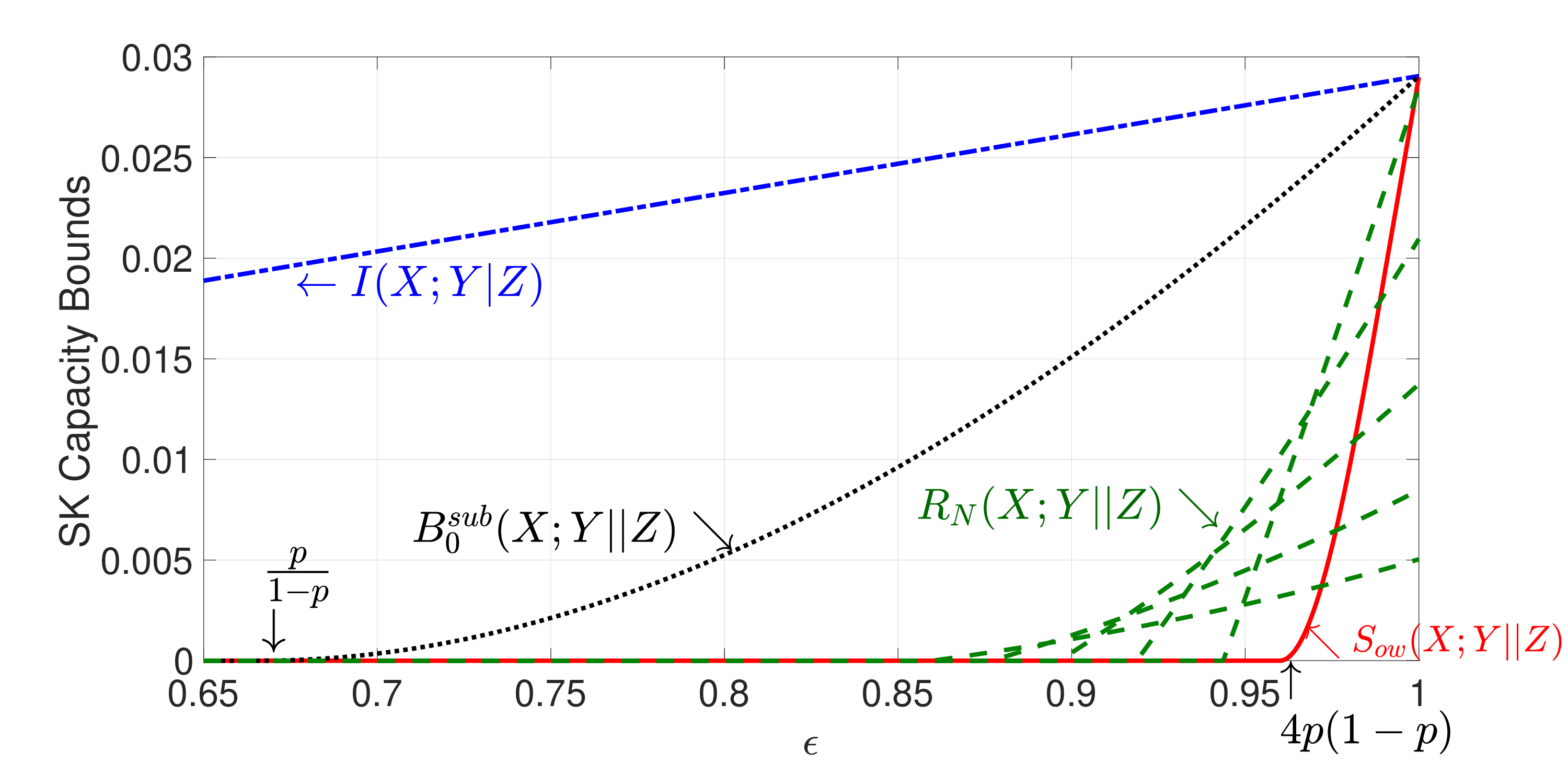}
	\caption{SK capacity bounds for a DSBE$(0.4,\epsilon)$ source. The curve labeled $S_{\text{ow}}(X;Y\|Z)$ is zero if and only if $\epsilon\leq 4p(1-p)$.}\label{fig:SK06}
\end{figure*} 

In Fig.~\ref{fig:SK06}, we plot the known lower and upper bounds on $S(X;Y\|Z)$ to illustrate the gaps between them. Consider a DSBE source $(X,Y,Z)$ with parameters $(p=0.4,\epsilon)$. This source has
\begin{align}
  I(X;Y)=\frac{I(X;Y|Z)}{\epsilon}.
\end{align}
Therefore, we plot only $I(X;Y|Z)$ and do not consider the upper bound $I(X;Y)$. We also plot the improved upper bound (see (\ref{eq:B0upperbound}))
\begin{align}B_0^{\text{sub}}(X;Y\|Z)=I(X;Y|J)\end{align}
where 
\begin{align}
	J=\begin{cases}
0 & \text{ if } Z=(0,0),\\
1&\text{ if }Z=(1,1),\\
\mathtt{e}&\text{otherwise}
\end{cases}
\end{align}
which takes on non-zero values for $\epsilon>\epsilon_2=\frac{p}{1-p}$, as in Theorem~\ref{theoremeq:eps12}. In Fig.~\ref{fig:SK06}, $S_{\text{ow}}(X;Y\|Z)$ denotes the one-way communication capacity. This curve is calculated as follows: for every fixed value of $\epsilon \in [0,1]$, $S_{\text{ow}}(X;Y\|Z)$ is the maximum of
$I(U;Y|V)-I(U;Z|V)$
over all $p_{UV|X}$. The above expression is the upper concave envelope of the curve
\begin{align}
   p(X=0)\mapsto  \max_{p_{U|X}} I(U;Y)-I(U;Z)
\end{align}
at $p(X=0)=0.5$. Since the distribution $p_{YZ|X}$ is symmetric and $X$ is uniform in a DSBE source, using the symmetrization idea of \cite{nair2013upper} we obtain
\begin{align}
  S_{\text{ow}}(X;Y\|Z)=\max_{p_{UX}}I(U;Y)-I(U;Z). \label{nsym}
\end{align}
In fact, simulations indicate that the maximum in \eqref{nsym} is achieved when $X$ is uniform, indicating that auxiliary variable $V$ is not necessary to compute $S_{\text{ow}}(X;Y\|Z)$ for the DSBE source. The curve for $S_{\text{ow}}(X;Y\|Z)$ attains non-zero values for $\epsilon> 4p(1-p)$, which is the case also for $\bar{L}(X;Y\|Z)$ due to Theorem~\ref{thm:A5}. Similarly, we plot the rates achieved by the repetition codes of \cite{Maurer93} that are multi-letter and multi-round protocols. The $N$-repetition code achieves the SK rate
\begin{align}
	&R_N(X;Y\|Z) = \frac{p^N+{(1-p)}^N}{N}\max\Bigg\{0,\epsilon^N-h\Bigg(\frac{p^N}{p^N+{(1-p)}^N}\Bigg)\Bigg\}.
\end{align} 
In Fig.~\ref{fig:SK06}, we plot the rates for $N=2,3,4,5,6$. Fig.~\ref{fig:SK06} illustrates that there is a large gap between the lower bounds $R_N(X;Y\|Z)$ for $N=2,3,4,5,6$ and $B_0^{\text{sub}}(X;Y\|Z)$ for $\frac{p}{1-p}\leq \epsilon<1$.

\section{Proofs}\label{sec:Proofs}

\subsection{Proof of Theorem \ref{genertlsd}}\label{eq:genertlsd}
Without loss of generality, suppose the symbol $0$ is in both $\mathcal{X}$ and $\mathcal{Y}$. Suppose that $q_{XYZ}(x,y,z)=a(x)\,b(y)\,p_{XYZ}(x,y,z)$. Let $\bar{a}=\max_{x}a(x)>0$ and $\bar{b}=\max_{y}b(y)>0$. 

We define $X'$ and $Y'$ on $\mathcal{X}$ and $\mathcal{Y}$, respectively, as follows:
\begin{align}
	p_{XYZX'Y'} = p_{XYZ} \, p_{X'|X} \, p_{Y'|Y} 
\end{align}
where $p_{X'|X}(x'|x)$ and $p_{Y'|Y}(y'|y)$ satisfy
\begin{align}
\begin{array}{l}
  p_{X'|X}(0|x) = a(x) / \bar{a} \\
  p_{Y'|Y}(0|y) = b(y) / \bar{b}.
\end{array}
\label{genertlsdv1}
\end{align}
 The values of $p_{X'|X}(x'|x)$ and $p_{Y'|Y}(y'|y)$ 
for $x',y'\neq 0$ are not important for the proof. Observe that \eqref{genertlsdv1} implies 
\begin{align}
	p_{X'Y'}(0,0) & =\sum_{x,y,z}\frac{a(x)}{\bar{a}}\frac{b(y)}{\bar{b}} p_{XYZ}(x,y,z) = \frac{1}{\bar{a}\bar{b}}\sum_{x,y,z}q_{XYZ}(x,y,z)=\frac{1}{\bar{a}\bar{b}}>0
\end{align}
and
\begin{align}
	p_{XYZ|X'Y'}(x,y,z|0,0)
	& =a(x)b(y)p_{XYZ}(x,y,z) =q_{XYZ}(x,y,z).
\end{align}

Suppose that Alice, Bob and Eve observe  i.i.d.\ repetitions of $X,Y,Z$ according to $p_{XYZ}$. We now show that they can simulate i.i.d.\ repetitions according to $q_{XYZ}$. Alice has access to $X^n$. She passes $X^n$ through $\prod_{i=1}^np_{X_i'|X_i}$ to produce a sequence $X^{'n}$. Alice then puts into the public channel the list of indices $i$ such that $X'_i=0$. Similarly, Bob passes $Y^n$ through $\prod_{i=1}^np_{Y_i'|Y_i}$ to produce $Y^{'n}$ and puts into the public channel the list of indices $i$ such that $Y'_i=0$. Alice and Bob then consider the observations $(X_i, Y_i)$ for indices $i$ where $(X'_i,Y'_i)=(0,0)$, and discard their observations for other indices. The induced pmf on $(X_i, Y_i, Z_i)$ given the event $(X'_i,Y'_i)=(0,0)$ is $q_{XYZ}$. Alice and Bob can now proceed with any key agreement protocol for $q_{XYZ}$ that achieves a positive key rate.

\subsection{Proof of Theorem \ref{gentheoremeq:eps12} }\label{proofgentheoremeq:eps12} 
We prove the equivalence by showing that (ii) implies (i), (i) implies (iii), (iii) implies (iv), and (iv) implies (ii). The fact that (iii) implies (iv) is trivial, so we prove the other three implications in the following subsubsections. 

\subsubsection{  (ii) implies (i) }
We claim that proving Corollary \ref{corollary-new-1}  establishes the claim that  (ii) implies (i). To see this, assume that \eqref{eq:nproductnonzeroSKrate} holds for some  integer $n$ and disjoint non-empty sets $\mathcal{A}_1, \mathcal{A}_2\subset \mathcal{X}^n$, and disjoint non-empty sets $\mathcal{B}_1, \mathcal{B}_2\subset \mathcal{Y}^n$. Let $X'\in\{1,2,3\}$ be a function of $X^n$ defined as follows: $X'=1$ if $X^n\in\mathcal{A}_1$, $X'=2$ if $X^n\in\mathcal{A}_2$ and $X'=3$  otherwise. We defined $Y'$ as a function of $Y^n$ in a similar manner using $\mathcal{B}_1$ and $\mathcal{B}_2$. Finally, let $Z'=Z^n$. We have, $S(X;Y\|Z)>0$ if $S(X';Y'\|Z')>0$ since Alice and Bob can produce $X'$ and $Y'$ from $X^n$ and $Y^n$ respectively. Equation \eqref{eqn:gentheoremeqA2} for $(X', Y', Z')$ with the choice $x'_1=1, x'_2=2, y'_1=1, y'_2=2$ is equivalent to \eqref{eq:nproductnonzeroSKrate} for the triple $(X^n, Y^n, Z^n)$ with the sets $\mathcal{A}_1, \mathcal{A}_2, \mathcal{B}_1$ and $\mathcal{B}_2$. 

It remains to prove Corollary \ref{corollary-new-1}. In other words, we wish to prove that $S(X;Y\|Z)>0$ if (\ref{eqn:gentheoremeqA2}) holds for distinct symbols $x_1, x_2 \in\mathcal{X}$ and distinct symbols $y_1, y_2 \in\mathcal{Y}$. Let $p_{ij} = p_{XY}(x_i, y_{j})$ for $i,j=1,2$. By \eqref{eqn:gentheoremeqA2}, we have
\begin{align}
  \frac12 \log\left(\frac{p_{11}p_{22}}{p_{12}p_{21}}\right)>0
\end{align}
or equivalently
\begin{align}
p_{11}p_{22} > p_{12}p_{21}.\label{eqdfrtgdf4545}
\end{align} 
Consider some even natural number $n$ and the sets
\begin{align}
	\mathcal{A}=\{\mathbf{x}_1, \mathbf{x}_2\},\quad \mathcal{B}=\{\mathbf{y}_1, \mathbf{y}_2\}
\end{align}
where
\begin{align}
	&\mathbf{x}_1=(x_1,x_1,\cdots, x_1, x_2,x_2,\cdots, x_2), \nonumber\\
	&\mathbf{x}_2= (x_2,x_2,\cdots, x_2,x_1,x_1,\cdots, x_1 ),\nonumber\\
	&\mathbf{y}_1=(y_1,y_1,\cdots, y_1,\, y_2,\, y_2,\cdots,\, y_2),\nonumber\\
	&\mathbf{y}_2= (\underbrace{y_2,y_2,\cdots, y_2}_{n/2},\,\underbrace{y_1,\, y_1,\cdots,\, y_1}_{n/2} ).
\end{align}

As in the proof of Theorem \ref{OrlitskyWigdersonThm} and Maurer's example in \cite[p.~740]{Maurer93}, suppose that Alice and Bob observe $N$ independent blocks, each having i.i.d.\ realizations of $(X,Y)$. For each block, Alice declares whether  $X^n\in\mathcal{A}$ and Bob declares whether $Y^n\in\mathcal{B}$. If $X^n\notin\mathcal{A}$ or $Y^n\notin\mathcal{B}$, they discard the block. Otherwise, they keep the block and use it for SK agreement. To prove that key generation is feasible, it suffices to show that 
\begin{align}
&I(X^n;Y^n|X^n\in\mathcal{A}, Y^n\in \mathcal{B})> I(X^n;Z^n|X^n\in\mathcal{A}, Y^n\in \mathcal{B}) \label{fkjbeg34345}
\end{align}
for large $n$. Equation (\ref{fkjbeg34345}) implies that  the legitimate users have a simple entropic advantage over the eavesdropper (see Definition \ref{def:advantage}) and hence a positive key rate can be achieved. We now show that (\ref{fkjbeg34345}) is satisfied. For any three random variables $X,Y,Z$ we have
\begin{align}
&I(X;Y)-I(X;Z)= H(X,Y|Z)-H(Y|X,Z)-H(X|Y)\geq H(X,Y|Z)-H(Y|X)-H(X|Y).
\end{align}
Thus, it suffices to show that for large $n$ we have
\begin{align}
&H(X^n,Y^n|Z^n, X^n\in\mathcal{A}, Y^n\in \mathcal{B})>H(Y^n|X^n, X^n\in\mathcal{A}, Y^n\in \mathcal{B})+H(X^n|Y^n, X^n\in\mathcal{A}, Y^n\in \mathcal{B}).\label{eqjstrtg}
\end{align}

We compute
\begin{align}
	&\mathbb{P}[X^n=\mathbf{x}_1, Y^n=\mathbf{y}_1]=\mathbb{P}[X^n=\mathbf{x}_2, Y^n=\mathbf{y}_2]=p_{11}^{n/2}p_{22}^{n/2} \\
	&\mathbb{P}[X^n=\mathbf{x}_2, Y^n=\mathbf{y}_1]=\mathbb{P}[X^n=\mathbf{x}_1, Y^n=\mathbf{y}_2]=p_{12}^{n/2}p_{21}^{n/2}.
\end{align}
The conditional pmf of $(X^n, Y^n)$ given that $X^n\in\mathcal{A}$ and $Y^n\in \mathcal{B}$ is 
\begin{align}
\mathbb{P}[X^n=\mathbf{x}_1, Y^n=\mathbf{y}_1|X^n\in\mathcal{A}, Y^n\in \mathcal{B}]\nonumber&=\mathbb{P}[X^n=\mathbf{x}_2, Y^n=\mathbf{y}_2|X^n\in\mathcal{A}, Y^n\in \mathcal{B}]
\nonumber\\
&=\frac{p_{11}^{n/2}p_{22}^{n/2}}{2(p_{11}^{n/2}p_{22}^{n/2}+p_{12}^{n/2}p_{21}^{n/2})}\label{eq:symmetricforposSK1}
\end{align}
\begin{align}
\mathbb{P}[X^n=\mathbf{x}_2, Y^n=\mathbf{y}_1|X^n\in\mathcal{A}, Y^n\in \mathcal{B}]
&=\mathbb{P}[X^n=\mathbf{x}_1, Y^n=\mathbf{y}_2|X^n\in\mathcal{A}, Y^n\in \mathcal{B}]
\nonumber\\
&=\frac{p_{21}^{n/2}p_{12}^{n/2}}{2(p_{11}^{n/2}p_{22}^{n/2}+p_{12}^{n/2}p_{21}^{n/2})}.\label{eq:symmetricforposSK2}
\end{align}
If $X^n\in\mathcal{A}$ and $Y^n\in \mathcal{B}$, then we can model the conditional joint pmf of $(X^n,Y^n)$ as a DSBS with parameter
\begin{align}
  \tilde{p}_n=\frac{p_{11}^{n/2}p_{22}^{n/2}}{p_{11}^{n/2}p_{22}^{n/2}+p_{12}^{n/2}p_{21}^{n/2}}
\end{align}
due to symmetry in (\ref{eq:symmetricforposSK1}) and (\ref{eq:symmetricforposSK2}). Thus, we obtain
\begin{align}
	&H(X^n|Y^n, X^n\in\mathcal{A}, Y^n\in \mathcal{B})=H(Y^n|X^n, X^n\in\mathcal{A}, Y^n\in \mathcal{B})=h(\tilde{p}_n)
\end{align}
where $h(\cdot)$ is the binary entropy function. We have $h(p)\leq -2(1-p)\log(1-p)$ for any $p\in[0.5,1]$. Using \eqref{eqdfrtgdf4545}, we have $\tilde{p}_n\in [0.5,1]$ and 
\begin{align}
  \lim_{n\rightarrow\infty}(1-\tilde{p}_n)^{\frac{1}{n}}= \left(\frac{
  p_{12}p_{21}}{p_{11}p_{22}}\right)^{\frac 12}.
\end{align}
Hence, we have
\begin{align}
	&\lim_{n\rightarrow\infty} h(\tilde{p}_n)^{\frac{1}{n}}\leq \lim_{n\rightarrow\infty}\left(-2(1-\tilde{p}_n)\log(1-\tilde{p}_n)\right)^{\frac{1}{n}}\nonumber=\left(\frac{
	p_{12}p_{21}}{p_{11}p_{22}}\right)^{\frac 12}
\end{align}
and we obtain
\begin{align}
\lim_{n\rightarrow\infty}H(X^n|Y^n, X^n\in\mathcal{A}, Y^n\in \mathcal{B})^{\frac{1}{n}}&=\lim_{n\rightarrow\infty}H(Y^n|X^n, X^n\in\mathcal{A}, Y^n\in \mathcal{B})^{\frac{1}{n}}\leq 
\left(\frac{
 p_{12}p_{21}}{p_{11}p_{22}}\right)^{\frac 12}\label{eqdf23erNN}.
\end{align}
This equation implies that
\begin{align}
&\lim_{n\rightarrow\infty}\Big[H(X^n|Y^n, X^n\in\mathcal{A}, Y^n\in \mathcal{B})+H(Y^n|X^n, X^n\in\mathcal{A}, Y^n\in \mathcal{B})\Big]^{\frac{1}{n}}\leq 
\left(\frac{
 p_{12}p_{21}}{p_{11}p_{22}}\right)^{\frac 12}\label{eqdf23er}
\end{align}
which gives a bound on the asymptotics of the right hand side in \eqref{eqjstrtg}. We now consider the term on the left hand side in (\ref{eqjstrtg}). Our aim is to show that
\begin{align}
&\liminf_{n\rightarrow\infty}H(X^n, Y^n|Z^n, X^n\in\mathcal{A}, Y^n\in \mathcal{B})^{\frac{1}{n}}\geq \exp\Big(\!-\frac12D_{\frac 12}\big(p_{Z|XY}(\cdot |x_1,y_1)\big\| p_{Z|XY}(\cdot |x_{2},y_{2})\big)\Big).
\label{eqn34df45gd}
\end{align}
This equation together with \eqref{eqdf23er} show that \eqref{eqjstrtg} holds for large values of $n$ if
\begin{align}
	&\left(\frac{p_{12}p_{21}}{p_{11}p_{22}}\right)^{\frac 12} <\exp\Big(-\frac12D_{\frac 12}\big(p_{Z|XY}(\cdot |x_1,y_1)\big\| p_{Z|XY}(\cdot |x_{2},y_{2})\big)\Big)
\end{align}
which is equivalent to the condition
\begin{align}
	&D_{\frac 12}\big(p_{Z|XY}(\cdot |x_1,y_1)\big\| p_{Z|XY}(\cdot |x_{2},y_{2})\big)< \log\left(\frac{p_{11}p_{22}}{p_{12}p_{21}}\right).
\end{align}

It remains to prove \eqref{eqn34df45gd}.  From the perspective of Eve who observes $Z^n$, there are four possibilities of $(X^n,Y^n)=(\mathbf{x}_i, \mathbf{y}_j)$ for $i,j\in\{1,2\}$. Eve can view this as a hypothesis testing problem. For example, given $(X^n,Y^n)=(\mathbf{x}_1, \mathbf{y}_1)$, the conditional pmf of $(Z_i, Z_{\frac{n}{2}+i})$ satisfies
\begin{align}
  p_{Z|XY}(z_i|x_1,y_1)\cdot p_{Z|XY}(z_{\frac{n}{2}+i}|x_2,y_2)
\end{align}
for all $1\leq i\leq \frac{n}{2}-1$. Furthermore, $Z_i$ and $Z_{\frac{n}{2}+i}$ are conditionally independent given $(X^n,Y^n)=(\mathbf{x}_1, \mathbf{y}_1)$ for all $1\leq i\leq \frac{n}{2}-1$. Therefore, given the hypothesis $(X^n,Y^n)=(\mathbf{x}_1, \mathbf{y}_1)$, Eve observes $n/2$ i.i.d.\ repetitions
\begin{align}
  q^{(11)}_{Z_a Z_b}(z_a, z_b)= p_{Z|XY}(z_a|x_1,y_1)p_{Z|XY}(z_{b}|x_2,y_2).
\end{align}
More generally, given the hypothesis $(X^n,Y^n)=(\mathbf{x}_i, \mathbf{y}_j)$, Eve observes $n/2$ i.i.d.\ repetitions
\begin{align}
  q^{(ij)}_{Z_a Z_b}(z_a, z_b)=p_{Z|XY}(z_a|x_i,y_j)p_{Z|XY}(z_{b}|x_{3-i},y_{3-j})
\end{align}
for $i,j\in\{1,2\}$. 

We remark that the prior probability of the hypothesis $(X^n,Y^n)=(\mathbf{x}_i, \mathbf{y}_j)$ depends on $n$; see \eqref{eq:symmetricforposSK1} and \eqref{eq:symmetricforposSK2}. Therefore, we cannot directly apply results from the hypothesis testing literature, where fixed prior hypothesis probabilities are assumed. We use the following lemma.
\begin{lemma}\label{lem:UVlemma}\cite[Eq. (10)]{kanaya1995asymptotics}
For any $p_{UV}$ and any two distinct symbols $u_1, u_2$, we have
\begin{align}
   &\frac{H(U|V)}{ \log(2)}\geq\Bigg( p_{U}(u_1)\sum_{v\in\mathcal{D}}p_{V|U}(v|u_1) +p_U(u_2)\sum_{v\in\mathcal{D}^c}p_{V|U}(v|u_2) \Bigg) 
\end{align}
where $\mathcal{D}=\{v:~p_{UV}(u_1,v)<p_{UV}(u_2,v)\}$.
\end{lemma}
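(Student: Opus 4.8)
The plan is to prove the inequality pointwise in $v$ and then average over $v$. Writing $H(U|V)=\sum_v p_V(v)\,H(U|V=v)$ with $H(U|V=v)=\sum_u -p_{U|V}(u|v)\log p_{U|V}(u|v)$, the first step is to discard from this sum every summand except the one associated with whichever of $u_1,u_2$ has the \emph{smaller} conditional probability given $V=v$. This is legitimate because $-q\log q\ge 0$ for all $q\in[0,1]$, so dropping nonnegative terms only decreases the expression.

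The second step is the key elementary observation: if $u^\star\in\{u_1,u_2\}$ is the symbol attaining $q:=\min\{p_{U|V}(u_1|v),p_{U|V}(u_2|v)\}$, then $q\le \tfrac12$, since $u_1\neq u_2$ forces $p_{U|V}(u_1|v)+p_{U|V}(u_2|v)\le 1$. For $q\in[0,\tfrac12]$ one has $-q\log q=q\log\tfrac1q\ge q\log 2$ (with the convention $0\log 0=0$), so
$$H(U|V=v)\ \ge\ -q\log q\ \ge\ \log(2)\,\min\{p_{U|V}(u_1|v),\,p_{U|V}(u_2|v)\}.$$

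The third step is bookkeeping. Multiply the last display by $p_V(v)$ and use $p_V(v)p_{U|V}(u_i|v)=p_{UV}(u_i,v)$, so that $p_V(v)\,\min\{p_{U|V}(u_1|v),p_{U|V}(u_2|v)\}=\min\{p_{UV}(u_1,v),p_{UV}(u_2,v)\}$. Since $p_{U|V}(u_1|v)<p_{U|V}(u_2|v)$ holds exactly when $v\in\mathcal D$, this minimum equals $p_{UV}(u_1,v)$ for $v\in\mathcal D$ and $p_{UV}(u_2,v)$ for $v\in\mathcal D^c$ (on the tie set $\{v:p_{UV}(u_1,v)=p_{UV}(u_2,v)\}\subseteq\mathcal D^c$ the two candidate values coincide, so it does not matter where ties are placed). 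Summing over $v$ and rewriting $\sum_{v\in\mathcal D}p_{UV}(u_1,v)=p_U(u_1)\sum_{v\in\mathcal D}p_{V|U}(v|u_1)$ and likewise $\sum_{v\in\mathcal D^c}p_{UV}(u_2,v)=p_U(u_2)\sum_{v\in\mathcal D^c}p_{V|U}(v|u_2)$ gives exactly the claimed bound.

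There is no real obstacle here; the argument is a short convexity/positivity computation, and the only point needing a little care is ensuring the cutoff $q\le\tfrac12$ is applied to the correct symbol for each $v$ and that ties in the definition of $\mathcal D$ are consistently assigned. I would also remark that the constant can be improved from $\log 2$ to $2\log 2$ by instead invoking the grouping identity $H(U|V=v)\ge \big(p_{U|V}(u_1|v)+p_{U|V}(u_2|v)\big)\,h\!\big(\tfrac{p_{U|V}(u_1|v)}{p_{U|V}(u_1|v)+p_{U|V}(u_2|v)}\big)$ together with the chord bound $h(t)\ge 2\log(2)\min\{t,1-t\}$, but the stated constant already suffices for the application to \eqref{eqn34df45gd}.
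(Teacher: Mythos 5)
Your argument is correct. The paper itself gives no proof of this lemma; it is imported verbatim by citation to \cite[Eq. (10)]{kanaya1995asymptotics}, so there is nothing in the text to compare against step by step. Your pointwise reduction $H(U|V=v)\ge -q\log q\ge \log(2)\,\min\{p_{U|V}(u_1|v),p_{U|V}(u_2|v)\}$ with $q\le\tfrac12$, followed by multiplying by $p_V(v)$ and summing, does reproduce exactly the cited bound, since $\sum_v \min\{p_{UV}(u_1,v),p_{UV}(u_2,v)\}=\sum_{v\in\mathcal D}p_{UV}(u_1,v)+\sum_{v\in\mathcal D^c}p_{UV}(u_2,v)$ and your tie-handling ($\mathcal D$ defined by strict inequality, ties in $\mathcal D^c$) is consistent with the statement. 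The right-hand side is precisely $\log(2)$ times the minimum error probability of the binary test between $u_1$ and $u_2$, which is how the lemma is used later in \eqref{eqn34fef45ewf}--\eqref{fdkjbretdv435}; your elementary derivation makes the paper self-contained on this point, and your side remark that the grouping identity plus the chord bound $h(t)\ge 2\log(2)\min\{t,1-t\}$ improves the constant to $2\log 2$ is also correct, though, as you note, unnecessary for the exponent calculation, where constants wash out after taking $n$-th roots.
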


We apply Lemma~\ref{lem:UVlemma} with $U=(X^n, Y^n)$, $V=Z^n$, and
\begin{align}
&p_{UV}((x^n, y^n), z^n)=p_{X^n, Y^n, Z^n}(x^n, y^n, z^n|X^n\in\mathcal{A}, Y^n\in \mathcal{B}) \\
&u_1=(\mathbf{x}_1,\mathbf{y}_1) \\
&u_2=(\mathbf{x}_2,\mathbf{y}_2).
\end{align}
Using \eqref{eq:symmetricforposSK1}, we have
\begin{align}
&p_U(u_1)=p_U(u_2)=\frac{p_{11}^{n/2}p_{22}^{n/2}}{2(p_{11}^{n/2}p_{22}^{n/2}+p_{12}^{n/2}p_{21}^{n/2})}
\end{align}
and we obtain
\begin{align}
&H(X^n,Y^n|Z^n, X^n\in\mathcal{A}, Y^n\in \mathcal{B})^{\frac{1}{n}}\nonumber\\
&\geq  
\left(\log(2)\frac{p_{11}^{n/2}p_{22}^{n/2}}{p_{11}^{n/2}p_{22}^{n/2}+p_{12}^{n/2}p_{21}^{n/2}}\right)^{\frac{1}{n}}\nonumber\\
&\;
\times\!\Bigg\{\frac12 \!\sum_{z^n\in\mathcal{D}}p_{Z^n|X^n, Y^n}\!(z^n|\mathbf{x}_1, \mathbf{y}_1, X^n\!\in\!\mathcal{A}, Y^n\!\in\! \mathcal{B})  +\!\frac12\! \sum_{z^n\in~\mathcal{D}^c}\!p_{Z^n|X^n, Y^n}(z^n|\mathbf{x}_2, \mathbf{y}_2, X^n\!\in\!\mathcal{A}, Y^n\!\in\! \mathcal{B}) \Bigg\}^{\frac{1}{n}}.\label{eqn34fef45ewf}
\end{align} 
Using \eqref{eqdfrtgdf4545}, we have \begin{align}\lim_{n\rightarrow\infty}\left(\log(2)\frac{p_{11}^{n/2}p_{22}^{n/2}}{p_{11}^{n/2}p_{22}^{n/2}+p_{12}^{n/2}p_{21}^{n/2}}\right)^{\frac{1}{n}}=1.\label{fdrkjebef1}\end{align}

Next, observe that 
\begin{align}
	&\mathcal{D}=\Big\{z^n: p_{Z^n|X^n, Y^n}(z^n|\mathbf{x}_1, \mathbf{y}_1)<p_{Z^n|X^n, Y^n}(z^n|\mathbf{x}_2, \mathbf{y}_2)\Big\}
\end{align}
is the maximum a-posteriori probability (MAP) decision region for a new binary hypothesis testing problem with two equiprobable hypotheses, i.e., $(X^n, Y^n)=(\mathbf{x}_1, \mathbf{y}_1)$ and $(X^n, Y^n)=(\mathbf{x}_2, \mathbf{y}_2)$. In this problem, 
Eve observes $n/2$ i.i.d.\ repetitions of $q^{(11)}_{Z_a Z_b}$ under the first hypothesis, and $n/2$ i.i.d.\ repetitions of $q^{(22)}_{Z_a Z_b}$ under the second hypothesis. The expression
\begin{align}
&\frac12 \sum_{z^n\in\mathcal{D}}p_{Z^n|X^n, Y^n}(z^n|\mathbf{x}_1, \mathbf{y}_1, X^n\in\mathcal{A}, Y^n\in \mathcal{B}) +\frac12 \sum_{z^n\in\mathcal{D}^c}p_{Z^n|X^n, Y^n}(z^n|\mathbf{x}_2, \mathbf{y}_2, X^n\in\mathcal{A}, Y^n\in \mathcal{B}) 
\end{align}
is the error probability, which is asymptotically equal to $\exp(-\frac{n}{2}E)$, where
\begin{align}
E&=C\Big(q^{(11)}_{Z_a Z_b}\big\| q^{(22)}_{Z_a Z_b}\Big)=D_{\frac 12}\Big(p_{Z|XY}(\cdot |x_1,y_1)\big\| p_{Z|XY}(\cdot |x_2,y_2)\Big).
\label{eqnEval}
\end{align}
Equation \eqref{eqnEval} follows from the argument given in Example \ref{RnyiHalf}. 
Therefore, we obtain
\begin{align}
&\lim_{n\rightarrow\infty}\!\Bigg\{\frac12 \sum_{z^n\!\in\!\mathcal{D}}p_{Z^n|X^n, Y^n}(z^n|\mathbf{x}_1, \mathbf{y}_1, X^n\!\in\!\mathcal{A}, Y^n\in \mathcal{B}) +\frac12 \sum_{z^n\!\in\!\mathcal{D}^c}p_{Z^n|X^n, Y^n}(z^n|\mathbf{x}_2, \mathbf{y}_2, X^n\!\in\!\mathcal{A}, Y^n\!\in\! \mathcal{B}) \Bigg\}^{\frac{1}{n}}\nonumber
\\&=\exp\Big(-\frac12D_{\frac 12}\big(p_{Z|XY}(\cdot |x_1,y_1)\| p_{Z|XY}(\cdot |x_2,y_2)\big)\Big).\label{fdkjbretdv435}
\end{align}
Combining \eqref{eqn34fef45ewf}, \eqref{fdrkjebef1}, and \eqref{fdkjbretdv435}  establishes \eqref{eqn34df45gd}.

\subsubsection{   (i) implies (iii) }

Suppose that $S(X;Y\|Z)>0$. Because Alice and Bob can produce a key at a positive rate, they can also produce a key of length one bit. Maurer and Wolf \cite{maurer2000information} show the equivalence of the strong and weak notions of security for the source model problem. More specifically, from \cite{maurer2000information} and using $S(X;Y\|Z)>0$, we conclude that, given any $\delta>0$, there is an interactive communication protocol yielding  bits $K_A$ and $K_B$ for Alice and Bob such that $\mathbb{P}[K=K_A=K_B]\geq 1-\delta$ for some uniform bit $K\in\{1,2\}$. Furthermore, we have
\begin{align}
   I(K;Z^n\mathbf{F})\leq \delta.
\end{align}

The triangle inequality gives
\begin{align}\nonumber
\|p_{K_A K_B Z^n \mathbf{F}} - q_{K_AK_B}\cdot p_{Z^n \mathbf{F}}  \|_{TV}&\leq
\|p_{K_A K_B Z^n \mathbf{F}} - p_{K_AK_B}\cdot p_{Z^n \mathbf{F}}  \|_{TV}\!+\!
\|p_{K_AK_B}\cdot p_{Z^n \mathbf{F}}  - q_{K_AK_B}\cdot p_{Z^n \mathbf{F}} \|_{TV}\nonumber
\\&=
\|p_{K_A K_B Z^n \mathbf{F}} - p_{K_AK_B}\cdot p_{Z^n \mathbf{F}}  \|_{TV}+
\|p_{K_A K_B} - q_{K_AK_B}  \|_{TV}\label{j34bgkuret}
\end{align}
and by Fano's inequality we have
\begin{align}
	&I(KK_AK_B;Z^n\mathbf{F})\leq I(K;Z^n\mathbf{F})+H(K_AK_B|K)\leq \delta+h(\delta)+3\delta.
\end{align}
Therefore, we have the bound
\begin{align}
	&D(p_{K_AK_BZ^n\mathbf{F}}\|p_{K_AK_B}p_{Z^n\mathbf{F}})=I(K_AK_B;Z^n\mathbf{F})\leq 4\delta+h(\delta).
\end{align}
By Pinsker's inequality, we have
\begin{align}
	&\|p_{K_AK_BZ^n\mathbf{F}}-p_{K_AK_B}p_{Z^n\mathbf{F}}\|_{TV}\nonumber\leq \sqrt{2\delta+\frac 12 h(\delta)}\label{lkjbrtjbdf342}.
\end{align}
Next, from uniformity of $K$ and 
$\mathbb{P}[K=K_A=K_B]\geq 1-\delta$, we have
\begin{align}
	\left|p_{K_AK_B}(i,i)-\frac12\right|\leq \delta\; \text{  for } i=1,2
\end{align}
and
\begin{align}
	p_{K_AK_B}(i,j)\leq \delta\; \text{  for } i\neq j.
\end{align}
Therefore, we can write
\begin{align}
\|p_{K_A K_B} - q_{K_AK_B}  \|_{TV}\leq 2\delta.\label{lkjbrtjbdf34}
\end{align}
From \eqref{j34bgkuret}, \eqref{lkjbrtjbdf342} and \eqref{lkjbrtjbdf34}, we obtain
\begin{align}
&\|p_{K_A K_B Z^n \mathbf{F}} - q_{K_AK_B}\cdot p_{Z^n \mathbf{F}}  \|_{TV}\leq
\sqrt{2\delta+\frac 12 h(\delta)}+2\delta.
\end{align}
The right hand side of the above equation tends to zero as $\delta$ tends to zero.

\subsubsection{ (iv) implies (ii) }

It suffices to prove Lemma  \ref{lemma0-bnwe} below. This lemma  identifies sets $\mathcal{A}_i$ and $\mathcal{B}_i$ such that (\ref{eq:nproductnonzeroSKrate}) holds if 
\begin{align}
-\log\left(1-4\delta\right)<\log\frac{\frac 12 -2\delta}{2\delta}.\label{kjbsdf,mbr234343}
\end{align}
Equation \eqref{kjbsdf,mbr234343} holds for $\delta< \frac{3-\sqrt{5}}{8}$. This completes the proof.


\begin{lemma}\label{lemma0-bnwe} Consider a code with source sequences of length $n$,  interactive communication $\mathbf{F}=(F_1, F_2, ...)$ satisfying \eqref{eqn-=deertg}, and secret key bits $K_A\in\{1,2\}$ and $K_B\in\{1,2\}$, created by Alice and Bob after public discussion. We define
\begin{align}
	\delta=\|p_{K_A K_B Z^n \mathbf{F}} - q_{K_AK_B}\cdot p_{Z^n, \mathbf{F}}  \|_{TV}.
\end{align}
One can find disjoint non-empty subsets $\mathcal{A}_1, \mathcal{A}_2\subset\mathcal{X}^n$ and disjoint non-empty subsets $\mathcal{B}_1, \mathcal{B}_2\subset \mathcal{Y}^n$ such that 
\begin{align}
	&\frac12D_{\frac 12}\Big(p_{Z^n}(\cdot |X^n\!\in\!\mathcal{A}_1,Y^n\!\in\!\mathcal{B}_1)\big\|  p_{Z^n}(\cdot |X^n\!\in\!\mathcal{A}_2,Y^n\!\in\!\mathcal{B}_2)\Big)\leq -\log\left(1-4\delta\right)\label{eq:IamOutofNamesforEquations----}
\end{align}	
and
\begin{align}
&\frac12
	\log\left(\frac{
	\mathbb{P}[X^n\!\in\!\mathcal{A}_1,Y^n\!\in\!\mathcal{B}_1]\mathbb{P}[X^n\!\in\!\mathcal{A}_2,Y^n\!\in\!\mathcal{B}_2]
}{
\mathbb{P}[X^n\!\in\!\mathcal{A}_1,Y^n\!\in\!\mathcal{B}_2]\mathbb{P}[X^n\in\mathcal{A}_2,Y^n\!\in\!\mathcal{B}_1]
}\right)\geq 
	\log\frac{\frac 12 -2\delta}{2\delta}.\label{eqn:lkjbkzmsdjbf----}
\end{align}
\end{lemma}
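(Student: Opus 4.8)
The plan is to extract the sets $\mathcal{A}_1,\mathcal{A}_2$ and $\mathcal{B}_1,\mathcal{B}_2$ directly from the decision regions that $K_A$ and $K_B$ induce on Alice's and Bob's source sequences, after conditioning on a well-chosen value $\mathbf{f}$ of the public transcript. Recall from \eqref{eqn-=deertg} that the transcript carries no private randomness, so given $\mathbf{F}=\mathbf{f}$ the key $K_A$ is a deterministic function of $X^n$ and $K_B$ a deterministic function of $Y^n$; set $\mathcal{A}_i(\mathbf{f})=\{x^n:K_A(x^n,\mathbf{f})=i\}$ and $\mathcal{B}_j(\mathbf{f})=\{y^n:K_B(y^n,\mathbf{f})=j\}$ for $i,j\in\{1,2\}$ (discarding the sequences Alice or Bob would abort on). The first step is an averaging argument: since $\delta=\|p_{K_AK_BZ^n\mathbf{F}}-q_{K_AK_B}\cdot p_{Z^n\mathbf{F}}\|_{TV}$ is small, there must exist a particular $\mathbf{f}$ under which, \emph{conditioned on that }$\mathbf{f}$, the joint pmf of $(K_A,K_B,Z^n)$ is close (in TV, up to a factor absorbing $\Pr(\mathbf{F}=\mathbf{f})$) to $q_{K_AK_B}\cdot p_{Z^n|\mathbf{f}}$. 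I would make this precise by writing the TV distance as an average over $\mathbf{f}$ of conditional TV distances and picking $\mathbf{f}$ achieving at most the average; a little care is needed because we also want the conditional probabilities $\Pr(K_A=i,K_B=j\mid\mathbf{f})$ to be controlled, which follows from the same closeness since $q_{K_AK_B}(i,j)=\tfrac12\mathbf 1[i=j]$.

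With such an $\mathbf{f}$ fixed, abbreviate $p_{ij}=\Pr(X^n\in\mathcal{A}_i,Y^n\in\mathcal{B}_j\mid\mathbf{f})$. Closeness to $q_{K_AK_B}\cdot p_{Z^n|\mathbf{f}}$ gives two things: (a) $p_{11},p_{22}\ge\tfrac12-2\delta'$ and $p_{12},p_{21}\le 2\delta'$ for an appropriate $\delta'$ comparable to $\delta$, which immediately yields the right-hand inequality \eqref{eqn:lkjbkzmsdjbf----} via $\tfrac12\log\frac{p_{11}p_{22}}{p_{12}p_{21}}\ge\log\frac{\tfrac12-2\delta'}{2\delta'}$; and (b) the conditional laws $p(z^n\mid X^n\in\mathcal{A}_1,Y^n\in\mathcal{B}_1,\mathbf{f})$ and $p(z^n\mid X^n\in\mathcal{A}_2,Y^n\in\mathcal{B}_2,\mathbf{f})$ are each close to $p_{Z^n|\mathbf{f}}$, hence close to each other in TV — roughly $\|\cdot\|_{TV}\le 4\delta'$ after tracking how the normalizations $p_{11},p_{22}$ enter. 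Then the bound \eqref{TV-Chernoff} from the preliminaries, namely $C(p\|q)\le-\log(1-\|p-q\|_{TV})$, converts this TV bound into the Chernoff bound \eqref{eq:IamOutofNamesforEquations----}, $C(\cdot\|\cdot)\le-\log(1-4\delta)$. (There is a subtlety: the conditioning on $\mathbf{f}$ produces a possibly non-product distribution on $(X^n,Y^n,Z^n)$, but everything in the lemma is stated for that conditional law, and the $n$ in the lemma is allowed to be the original block length, so this is fine; when feeding the result back into Theorem \ref{gentheoremeq:eps12}(ii) the same $n$ works.)

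The main obstacle I anticipate is the bookkeeping in step (b): one must show that a small \emph{joint} TV discrepancy between $p_{K_AK_BZ^n\mathbf{F}}$ and $q_{K_AK_B}\cdot p_{Z^n\mathbf{F}}$ descends, after conditioning on the event $\{X^n\in\mathcal{A}_i,Y^n\in\mathcal{B}_i,\mathbf{F}=\mathbf{f}\}$ whose probability is only bounded below by roughly $\tfrac12-2\delta'$ times $\Pr(\mathbf{F}=\mathbf{f})$, to a still-small TV discrepancy between the conditional $Z^n$-law and $p_{Z^n|\mathbf{f}}$ — dividing a small total-variation mass by a probability bounded away from zero is what forces the factor $4$ (and, ultimately, the threshold $\tfrac{3-\sqrt5}{8}$, at which $-\log(1-4\delta)=\log\frac{1/2-2\delta}{2\delta}$). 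I would organize this as: marginalize to the event $\{K_A=i\}$ on Alice's side and $\{K_B=i\}$ on Bob's side, use the triangle inequality to compare $p(z^n\mid K_A=i,K_B=i,\mathbf{f})$ to $p(z^n\mid\mathbf f)$ through the intermediate $q$-distribution, and finally combine the $i=1$ and $i=2$ comparisons through the triangle inequality once more. None of the individual estimates is deep; the care is entirely in choosing the constants so that the two displayed inequalities hold simultaneously with the \emph{same} $\delta$.
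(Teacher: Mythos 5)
Your proposal is correct and follows essentially the same route as the paper's proof: average over transcripts to fix a good $\mathbf{f}$ with conditional TV at most $\delta$, define $\mathcal{A}_i,\mathcal{B}_j$ from the key decision regions (possible by \eqref{eqn-=deertg}), read off $\mathbb{P}(K_A=i,K_B=i|\mathbf{f})\geq\tfrac12-2\delta$ and $\mathbb{P}(K_A=i,K_B=j|\mathbf{f})\leq2\delta$ to get \eqref{eqn:lkjbkzmsdjbf----}, bound the TV distance between the two conditional $Z^n$-laws by $4\delta$ via triangle inequalities, and convert to the Chernoff bound through \eqref{TV-Chernoff}. The only point to tighten is your parenthetical: the lemma's probabilities are with respect to the \emph{unconditional} product law, and the bridge is the Orlitsky--Wigderson rectangle property, i.e., $\{\mathbf{F}=\mathbf{f}\}=\{X^n\in\mathcal{A},Y^n\in\mathcal{B}\}$, so that $\mathbb{P}(X^n\in\mathcal{A}_i,Y^n\in\mathcal{B}_j)=\mathbb{P}(K_A=i,K_B=j|\mathbf{F}=\mathbf{f})\,\mathbb{P}(X^n\in\mathcal{A},Y^n\in\mathcal{B})$ (the common factor cancels in the ratio) and $p(z^n|X^n\in\mathcal{A}_i,Y^n\in\mathcal{B}_i)=p_{Z^n|K_AK_B\mathbf{F}}(z^n|i,i,\mathbf{f})$, as in \eqref{lasnkljeb} -- rather than the lemma ``being stated for the conditional law.''
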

\begin{proof}
There is a realization $\mathbf{F}=\mathbf{f}$ such that $\mathbb{P}[\mathbf{F}=\mathbf{f}]>0$ and
\begin{align}
	\Big\|p_{K_A K_B Z^n|\mathbf{F}}(\cdot |\mathbf{f})
	  -  q_{K_AK_B} \cdot p_{Z^n|\mathbf{F}}(\cdot |\mathbf{f}) \Big\|_{TV}\leq \delta.\label{eqn:slfjbkjbt}
\end{align}
From the data processing property of the total variation distance, we have
\begin{align}
	\big\|p_{K_A K_B|\mathbf{F}}(\cdot |\mathbf{f}) \!-\! q_{K_AK_B} \big\|_{TV}
	\le \delta\label{eqnrefklh23}
\end{align}
and therefore
\begin{align}\mathbb{P}[K_A=1, K_B=1|\mathbf{F}=\mathbf{f}]&\geq \frac 12 -2\delta\label{eqnrefklh231}\\
\mathbb{P}[K_A=2, K_B=2|\mathbf{F}=\mathbf{f}]&\geq \frac 12 -2\delta\label{eqnrefklh232}\\
\mathbb{P}[K_A=1, K_B=2|\mathbf{F}=\mathbf{f}]&\leq 2\delta \label{eqnrefklh233}\\
\mathbb{P}[K_A=2, K_B=1|\mathbf{F}=\mathbf{f}]&\leq 2\delta. 
\label{eqnrefklh234}
\end{align}

As in the proof of Theorem \ref{OrlitskyWigdersonThm},  the conditional pmfs of $(X^n, Y^n, Z^n)$ given $\mathbf{F}=\mathbf{f}$ have the form $p_r(x^n, y^n, z^n)$ given in \eqref{eqn:AAAd13} for some sets $\mathcal{A}$ and $\mathcal{B}$ that depend on $\mathbf{f}$. Furthermore, given $\mathbf{F}=\mathbf{f}$, the key $K_A$ is a function of $X^n$. We partition $\mathcal{A}$ into  $\mathcal{A}_1\cup \mathcal{A}_2$ as follows:
 $\mathcal{A}_i=\{x^n: K_A(x^n, \mathbf{f})=i\}$ for $i=1,2$. We define $\mathcal{B}_i$ similarly using $K_B$. 

Observe for $i,j\in\{1,2\}$ that
\begin{align}
	&\mathbb{P}[K_A=i, K_B=j|\mathbf{F}=\mathbf{f}]=\frac{\mathbb{P}[X^n\in\mathcal{A}_i,Y^n\in\mathcal{B}_j]}{ \mathbb{P}[X^n\in\mathcal{A},Y^n\in\mathcal{B}]}.\label{lasnkljeb}
\end{align}
From \eqref{eqnrefklh231}-\eqref{lasnkljeb}, we obtain 
\begin{align}
&\frac12
	\log\left(\frac{
	\mathbb{P}[X^n\in\mathcal{A}_1,Y^n\in\mathcal{B}_1]\mathbb{P}[X^n\in\mathcal{A}_2,Y^n\in\mathcal{B}_2]
}{
\mathbb{P}[X^n\in\mathcal{A}_1,Y^n\in\mathcal{B}_2]\mathbb{P}[X^n\in\mathcal{A}_2,Y^n\in\mathcal{B}_1]
}\right)\nonumber\\
&\quad=\frac12
	\log\Bigg(
	\frac{
	\mathbb{P}[K_A\!=\!1, K_B\!=\!1|\mathbf{F}\!=\!\mathbf{f}]
}{
\mathbb{P}[K_A\!=\!1, K_B\!=\!2|\mathbf{F}\!=\!\mathbf{f}]
}\times
\frac{
	\mathbb{P}[K_A\!=\!2, K_B\!=\!2|\mathbf{F}\!=\!\mathbf{f}]
}{
\mathbb{P}[K_A\!=\!2, K_B\!=\!1|\mathbf{F}\!=\!\mathbf{f}]
}\Bigg)\nonumber\\
&\quad\geq \frac12
	\log\frac{(\frac 12 -2\delta)^2}{(2\delta)^2}= 
	\log\frac{\frac 12 -2\delta}{2\delta}.\label{eqn:lkjbkzmsdjbf}
\end{align}

Next, from \eqref{eqn:slfjbkjbt} we have \eqref{eqn:slfjbkjbt23} given at the top of this page,
\begin{figure*}
\begin{align}
	&\sum_{z^n} \Bigg|\mathbb{P}[K_A\!=\!1, K_B\!=\!1|\mathbf{F}\!=\!\mathbf{f}] \times p_{Z^n|K_A, K_B,\mathbf{F}}(z^n|k_A\!=\!1, k_B\!=\!1,\mathbf{f})-
\frac 12 p_{Z^n|\mathbf{F}}(z^n|\mathbf{f}) 
\Bigg|\nonumber\\
&+
	\sum_{z^n}\! \Bigg|\mathbb{P}[K_A\!=\!2, K_B\!=\!2|\mathbf{F}\!=\!\mathbf{f}] \times p_{Z^n|K_A, K_B,\mathbf{F}}(z^n|k_A\!=\!2, k_B\!=\!2,\mathbf{f}) - \frac 12 p_{Z^n|\mathbf{F}}(z^n|\mathbf{f}) \Bigg|
\nonumber\\[0.5ex]
&\quad\leq 2\delta.\label{eqn:slfjbkjbt23}
\\
\hline \nonumber
\\ \nonumber
	&\sum_{z^n} \Bigg|\mathbb{P}[K_A\!=\!1, K_B\!=\!1|\mathbf{F}\!=\!\mathbf{f}] \times p_{Z^n|K_A, K_B\mathbf{F}}(z^n|k_A\!=\!1, k_B\!=\!1,\mathbf{f}) \nonumber
\\&\qquad - \mathbb{P}[K_A\!=\!2, K_B\!=\!2|\mathbf{F}\!=\!\mathbf{f}] \times p_{Z^n|K_A, K_B\mathbf{F}}(z^n|k_A\!=\!2, k_B\!=\!2,\mathbf{f})  \Bigg|\nonumber\\[0.5ex]
&\quad
\leq 2\delta.\label{eqn:slfjbkjbt2323}
\\
\hline\nonumber
\end{align}
\end{figure*}
and the triangle inequality gives \eqref{eqn:slfjbkjbt2323} also given at the top of this page. The triangle inequality further implies equation \eqref{eqjklbgkb} given at the top of the next page, where we use \eqref{eqnrefklh23} and \eqref{eqn:slfjbkjbt2323} in the last step of the derivation. 
\begin{figure*}
\begin{align}\nonumber
&\sum_{z^n} \Bigg|\frac 12 p_{Z^n|K_A, K_B\mathbf{F}}(z^n|k_A=1, k_B=1,\mathbf{f})- \frac 12 p_{Z^n|K_A, K_B\mathbf{F}}(z^n|k_A=2, k_B=2,\mathbf{f})  \Bigg|\nonumber 
\\&
\leq\sum_{z^n} \Bigg|\mathbb{P}[K_A=1, K_B=1|\mathbf{F}=\mathbf{f}]
\times p_{Z^n|K_A, K_B\mathbf{F}}(z^n|k_A=1, k_B=1,\mathbf{f}) \nonumber
\\&\qquad\quad - \mathbb{P}[K_A=2, K_B=2|\mathbf{F}=\mathbf{f}]
\times p_{Z^n|K_A, K_B\mathbf{F}}(z^n|k_A=2, k_B=2,\mathbf{f})  \Bigg|\nonumber
\\&\quad+
\sum_{z^n} \Bigg|\mathbb{P}[K_A=1, K_B=1|\mathbf{F}=\mathbf{f}] \times p_{Z^n|K_A, K_B\mathbf{F}}(z^n|k_A=1, k_B=1,\mathbf{f})  -\frac 12 p_{Z^n|K_A, K_B\mathbf{F}}(z^n|k_A=1, k_B=1,\mathbf{f})\Bigg|\nonumber
\\&\quad+
\sum_{z^n} \Bigg|\mathbb{P}[K_A=2, K_B=2|\mathbf{F}=\mathbf{f}] \times p_{Z^n|K_A, K_B\mathbf{F}}(z^n|k_A=2, k_B=2,\mathbf{f}) -\frac 12 p_{Z^n|K_A, K_B\mathbf{F}}(z^n|k_A=2, k_B=2,\mathbf{f})\Bigg|\nonumber\\
&=\sum_{z^n} \Bigg|\mathbb{P}[K_A=1, K_B=1|\mathbf{F}=\mathbf{f}] \times p_{Z^n|K_A, K_B\mathbf{F}}(z^n|k_A=1, k_B=1,\mathbf{f}) \nonumber
\\&\qquad\quad - \mathbb{P}[K_A=2, K_B=2|\mathbf{F}=\mathbf{f}] \times p_{Z^n|K_A, K_B\mathbf{F}}(z^n|k_A=2, k_B=2,\mathbf{f})  \Bigg|\nonumber
\\&\quad+
 \left|\mathbb{P}[K_A=1, K_B=1|\mathbf{F}=\mathbf{f}] -\frac 12 \right|+
\left|\mathbb{P}[K_A=2, K_B=2|\mathbf{F}=\mathbf{f}] -\frac 12\right|\nonumber\\
&\leq \sum_{z^n} \Bigg|\mathbb{P}[K_A=1, K_B=1|\mathbf{F}=\mathbf{f}] \times p_{Z^n|K_A, K_B\mathbf{F}}(z^n|k_A=1, k_B=1,\mathbf{f}) \nonumber
\\[0.5ex]&\qquad\quad - \mathbb{P}[K_A=2, K_B=2|\mathbf{F}=\mathbf{f}] \times p_{Z^n|K_A, K_B\mathbf{F}}(z^n|k_A=2, k_B=2,\mathbf{f})  \Bigg|\nonumber
\\&\quad+
2\|p_{K_A K_B|\mathbf{F}}(k_A, k_B|\mathbf{f}) - q_{K_AK_B}(k_A, k_B) \|_{TV}\nonumber
\\[0.7ex]&\leq 2\delta+2\delta\label{eqjklbgkb}
\\
\hline \nonumber
\end{align}
\end{figure*}
Observe for $i\in\{1,2\}$ that
\begin{align}
	&p_{Z^n|K_AK_B\mathbf{F}}(z^n|k_A=i,k_B=i,\mathbf{f}) =p_{Z^n}(z^n|X^n\in\mathcal{A}_{i},Y^n\in\mathcal{B}_{i}).
\end{align}
Therefore, \eqref{eqjklbgkb} shows that 
\begin{align}
	&\Big\|p_{Z^n}(\cdot |X^n\in\mathcal{A}_{1},Y^n\!\in\!\mathcal{B}_{1})-p_{Z^n}(\cdot |X^n\!\in\!\mathcal{A}_{2},Y^n\!\in\!\mathcal{B}_{2})\Big\|_{TV}\leq 4\delta
\end{align}
and by \eqref{TV-Chernoff} we have
\begin{align}
	&\frac 12 D_{\frac 12}\Big(p_{Z^n}(\cdot |X^n\!\in\!\mathcal{A}_1,Y^n\!\in\!\mathcal{B}_1)\Big\|  p_{Z^n}(\cdot |X^n\!\in\!\mathcal{A}_2,Y^n\!\in\!\mathcal{B}_2)\Big)\leq -\log\left(1-4\delta\right).\label{eq:IamOutofNamesforEquations}
\end{align}	
\end{proof}

\subsection{Proof of Theorem \ref{mmthm1s}}\label{eq:maintheoremsproof}
\label{proof-lower-bound1}

In light of Examples \ref{example2n} and \ref{example3n}, we only need to to prove the first part of the theorem, namely $S(X;Y\|Z)=0$ if $\epsilon\leq \epsilon_1$, and 
$S(X;Y\|Z)>0$ if $\epsilon> \epsilon_2$. The fact that $S(X;Y\|Z)=0$ if $\epsilon\leq \epsilon_1$ follows from Theorem \ref{theoremWolf} if we can show that $\epsilon_1=1-d_{\emph{ind}}=\frac{1}{F(p_{XY})}$.
Observe that  $F(p_{XY})$ is (see Definition~\ref{def:Fpxy}) the minimum over all product measures $q_{XY}=q_X\, q_Y$ of 
\begin{align}
	\max_{x,y}\left(\frac{p_{XY}(x,y)}{q_{XY}(x,y)}\right)\cdot\max_{x,y}\left(\frac{q_{XY}(x,y)}{p_{XY}(x,y)}\right).\label{eq:Icouldnotfindagoodname}
\end{align}
This is because (\ref{eq:Icouldnotfindagoodname}) would not change if we multiply $q_{XY}(x,y)$ by a positive constant. Moreover, by the same argument, we can restrict attention to product measures $q_{XY}=q_X\, q_Y$ satisfying 
\begin{align}
	\max_{x,y}\left(\frac{q_{XY}(x,y)}{p_{XY}(x,y)}\right)=1.
\end{align}
The equality $\epsilon_1=\frac{1}{F(p_{XY})}$ follows because of the following alternative characterization of $\epsilon_1$.
\begin{theorem}\label{lemma-generaldis1}
Define a matrix $[p_{XY}(x,y)\delta_{x,y}]$ of dimensions $|\mathcal{X}|\times|\mathcal{Y}|$ whose rows  and columns are indexed by the realizations of $X$ and $Y$, respectively, and whose $(x,y)$ entry is $p_{XY}(x,y)\delta_{x,y}$ for all $(x,y)\in\mathcal{X}\times\mathcal{Y}$. We have 
\begin{align}
	\epsilon_1=\max \min_{x,y}\delta_{x,y} \label{theoremeq:maxmin}
\end{align}
where the maximum is over all $\delta_{x,y}\in [0,1]$ such that the matrix $[p_{XY}(x,y)\delta_{x,y}]$ has rank one. 
\end{theorem}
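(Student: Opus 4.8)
The plan is to realize the right-hand side of \eqref{theoremeq:maxmin} as the optimal value of a linear program and to evaluate it via the duality theorem for linear programs (equivalently, via the feasibility theory for systems of difference constraints). If $p_{XY}(x^\ast,y^\ast)=0$ for some pair, then $\epsilon_1=0$ (use a path through $(x^\ast,y^\ast)$) and the right-hand side is also $0$ (a rank-one matrix with a zero entry has an all-zero row or column, and since $p_X,p_Y>0$ this forces some $\delta_{x,y}=0$), so \eqref{theoremeq:maxmin} is trivial; hence I assume $p_{XY}(x,y)>0$ for all $(x,y)$. Then $[p_{XY}(x,y)\delta_{x,y}]$ with $\delta_{x,y}\in[0,1]$ has rank one exactly when $p_{XY}(x,y)\delta_{x,y}=\phi(x)\psi(y)$ for some nonnegative functions $\phi$ of $X$ and $\psi$ of $Y$, neither identically zero, with $\phi(x)\psi(y)\le p_{XY}(x,y)$; and at any maximizer of $\min_{x,y}\delta_{x,y}$ we may take $\phi,\psi>0$, since otherwise some $\delta_{x,y}=0$ whereas $\phi\equiv\psi\equiv\eta$ for small $\eta>0$ already yields a strictly positive minimum. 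Consequently the right-hand side of \eqref{theoremeq:maxmin} equals the supremum of all $t>0$ for which there exist $\phi,\psi>0$ with $t\,p_{XY}(x,y)\le\phi(x)\psi(y)\le p_{XY}(x,y)$ for all $(x,y)$.

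Taking logarithms and setting $\alpha(x)=\log\phi(x)$, $\beta(y)=-\log\psi(y)$, $c(x,y)=\log p_{XY}(x,y)$ and $s=\log t$, this supremum becomes $\exp(s^\ast)$, where $s^\ast$ is the largest $s$ for which the system of difference constraints $c(x,y)+s\le\alpha(x)-\beta(y)\le c(x,y)$ (over all $(x,y)$) is solvable. By LP duality, solvability is equivalent to the absence of a negative-weight cycle in the digraph on $\mathcal{X}\sqcup\mathcal{Y}$ having, for each $(x,y)$, an arc $x\to y$ of weight $-c(x,y)-s$ and an arc $y\to x$ of weight $c(x,y)$. This digraph is bipartite, so a simple directed cycle visits distinct $x$'s and distinct $y$'s in alternation and is therefore a path $P=(x_1,y_1,\dots,x_k,y_k)$ in the sense of Definition~\ref{def:path}; traversing it as $x_1\to y_1\to x_2\to\cdots\to x_k\to y_k\to x_1$, its total weight equals $\sum_{i=1}^k(-c(x_i,y_i)-s)+\sum_{i=2}^k c(x_i,y_{i-1})+c(x_1,y_k)=-k\bigl(s+\log\varphi(P)\bigr)$, where $\varphi(P)$ is the value assigned to $P$ in \eqref{eq:assignedvalue}. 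Hence ``no negative cycle'' means $s\le-\log\varphi(P)$ for every path $P$, i.e.\ $s^\ast=\min_P\bigl(-\log\varphi(P)\bigr)$.

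Finally, reversing a path is an involution that inverts its value: the reverse of $(x_1,y_1,\dots,x_k,y_k)$ is the path $(x_1,y_k,x_k,y_{k-1},x_{k-1},\dots,x_2,y_1)$, and matching its ``$+$'' and ``$-$'' edges with those of $P$ shows its \eqref{eq:assignedvalue}-value is $1/\varphi(P)$. Thus $\{\log\varphi(P):P\text{ a path}\}$ is symmetric about $0$, so $\min_P\bigl(-\log\varphi(P)\bigr)=-\max_P\log\varphi(P)=\min_P\log\varphi(P)=\log\epsilon_1$ by Definition~\ref{def:path}. Therefore $s^\ast=\log\epsilon_1$, the supremum in the first paragraph is attained at $t=\epsilon_1$, and the right-hand side of \eqref{theoremeq:maxmin} equals $\epsilon_1$.

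The decisive point will be the identification of the negative cycles of the constraint digraph with the paths of Definition~\ref{def:path} together with the reversal observation: it is exactly the fact that reversing a path inverts its value that turns the ``$\min$ of $-\log\varphi(P)$'' produced by LP duality into the ``$\min$ of $\log\varphi(P)$'' defining $\epsilon_1$. Getting every sign and orientation right in the cycle-weight computation, and reducing the rank-one/$[0,1]$ constraint cleanly to a strictly positive factorization $\phi(x)\psi(y)$, are the steps requiring the most care.
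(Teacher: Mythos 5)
Your proposal is correct; I verified both decisive computations (the cycle weight equals $-k\bigl(s+\log\varphi(P)\bigr)$ under your arc/weight conventions, and the reversed path $(x_1,y_k,x_k,y_{k-1},\dots,x_2,y_1)$ indeed has \eqref{eq:assignedvalue}-value $1/\varphi(P)$), and your reduction of the rank-one, $\delta_{x,y}\in[0,1]$ condition to a strictly positive factorization $t\,p_{XY}(x,y)\le\phi(x)\psi(y)\le p_{XY}(x,y)$ is sound, as is the dismissal of the degenerate case $p_{XY}(x^\ast,y^\ast)=0$, which matches the paper's own treatment. The route, however, genuinely differs from the paper's after the common first step (logarithmic change of variables turning the rank-one constraint into a parametrized linear system): the paper writes the explicit LP minimizing $A=-\log\tilde\epsilon_1$, passes to its dual --- a maximization of $\sum_{x,y}(\mu(x,y)-\gamma(x,y))\log p_{XY}(x,y)$ over pairs of pmfs with equal marginals --- and then needs its Claims 1--3 (a bespoke decomposition of optimal coupling pairs into path-supported ones, essentially a circulation decomposition of the dual polytope) to read off a path of Definition~\ref{def:path} from the dual optimum; you instead keep the parameter $s$ inside the constraints and invoke the standard feasibility criterion for systems of difference constraints (solvable iff the constraint digraph has no negative-weight cycle), identify the simple cycles of the bipartite digraph with the paths of Definition~\ref{def:path}, and then use the reversal symmetry to convert $-\max_P\log\varphi(P)$ into $\min_P\log\varphi(P)=\log\epsilon_1$ --- a sign step the paper never faces because its dual objective produces the path value with the right orientation directly. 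What your version buys is that the combinatorial core is offloaded onto a standard, citable theorem (provable in a few lines by the shortest-path/potential argument), so Claims 1--3 disappear; what the paper's version buys is self-containedness, since it proves its decomposition claims from scratch rather than appealing to an external feasibility theorem. One small polish: state or cite the no-negative-cycle criterion explicitly (Farkas/Bellman--Ford potentials), which is the same level of rigor at which the paper silently invokes strong LP duality.
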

The proof of Theorem \ref{lemma-generaldis1} is given in Section~\ref{eq:lemma-generaldis1proof}.

It remains to show that $S(X;Y\|Z)>0$ if $\epsilon > \epsilon_2$.
Suppose that $\epsilon_2$ is obtained with the minimizer path $(x_1, y_1, x_2, y_2)$ so that
\begin{align}\epsilon_2=\left(\frac{p_{XY}(x_1,y_1)p_{XY}(x_2,y_2)}{p_{XY}(x_1, y_2)p_{XY}(x_2, y_1)}\right)^{1/2}.\label{eqn:92384y}\end{align}
We prove that $S(X;Y\|Z)>0$ for $\epsilon > \epsilon_2$. Since the value of the path $(x_1, y_1, x_2, y_2)$ is less than or equal to the value of the path $(x_1, y_2, x_2, y_1)$, we have
\begin{align}
  \epsilon_2=\left(\frac{
  \min\{p_{11}p_{22}, p_{12}p_{21}\}}{\max\{p_{11}p_{22}, p_{12}p_{21}\}}\right)^{1/2}
\end{align}
where $p_{ij} = p_{XY}(x_i, y_{j})$ for $i,j=1,2$. From Theorem \ref{gentheoremeq:eps12}, we have $S(X;Y\|Z)>0$ if 
\begin{align}
  &\frac 12 D_{\frac 12}\Big(p_{Z|XY}(\cdot |x_1,y_1)\big\| p_{Z|XY}(\cdot |x_{2},y_{2})\Big) <-\log\epsilon_2.
\end{align}
But observe that
\begin{align} 
&\frac12D_{\frac 12}\Big(p_{Z|XY}(\cdot |x_1,y_1)\big\| p_{Z|XY}(\cdot |x_2,y_2)\Big)=-\log\Big(\sum_{z} p_{Z|XY}(z|x_1,y_1)^{\frac 12}\times
p_{Z|XY}(z|x_2,y_2)^{\frac 12}\Big)
\nonumber\\&\quad\leq-\log\Big(p_{Z|XY}(\mathtt{e}|x_1,y_1)^{\frac 12}\times
p_{Z|XY}(\mathtt{e}|x_2,y_2)^{\frac 12}\Big)=-\log(\epsilon)\label{eqnN149}
\end{align}
which proves that $S(X;Y\|Z)>0$ if $\epsilon > \epsilon_2$.


\subsection{Proof of Theorem~\ref{lemma-generaldis1}} \label{eq:lemma-generaldis1proof}
Let 
\begin{align}
	\tilde{\epsilon}_1=\max \min_{x,y}\delta_{x,y}\label{eq:epsilon3}
\end{align}
where the maximization is over all $\delta_{x,y}\in [0,1]$ such that the matrix $[p_{XY}(x,y)\delta_{x,y}]$ has rank one. We prove that $\tilde{\epsilon}_1=\epsilon_1$. 

Observe that if $p_{XY}(x^*,y^*)=0$ for some $x^*,y^*\in\mathcal{X}\times\mathcal{Y}$, then $\epsilon_1=0$, which follows from Definition~\ref{def:path}. We now prove that $\tilde{\epsilon}_1$ is also zero. Consider some arbitrary $\delta_{x,y}$ such that $[p_{XY}(x,y)\delta_{x,y}]$ has rank one. Since $p_{XY}(x^*,y^*)\delta_{x^*,y^*}=0$, we must have either $p_{XY}(x^*,y)\delta_{x^*,y}=0$ for all $y\in\mathcal{Y}$ or $p_{XY}(x,y^*)\delta_{x,y^*}=0$ for all $x\in\mathcal{X}$. Assume that $p_{XY}(x^*,y)\delta_{x^*,y}=0$ for all $y\in\mathcal{Y}$. Since there exists a $y$ such that $p_{XY}(x^*,y)>0$, we obtain  $\delta_{x^*,y}=0$ for some $y\in\mathcal{Y}$. Hence, $\min_{x,y}\delta_{x,y}=0$ and $\tilde{\epsilon}_1=0$. 

Based on the discussions above, we may assume that $p_{XY}(x,y)>0$ for all $(x,y)\in\mathcal{X}\times\mathcal{Y}$. In this case, it follows that $\epsilon_1>0$. We also have $\tilde{\epsilon}_1>0$ since one valid choice for $\delta_{x,y}$ is $\delta_{x,y}=\frac{k}{p_{XY}(x,y)}$, where $k= \min_{x,y}p_{XY}(x,y)$. Since $\tilde{\epsilon}_1>0$, we take the maximum in (\ref{eq:epsilon3}) only over positive $\delta_{x,y}$.

Consider some $\delta_{x,y}\in (0,1]$ such that $[p_{XY}(x,y)\delta_{x,y}]$ has rank one. In other words, $p_{XY}(x,y)\delta_{x,y}=e^{m(x)}e^{n(y)}$ has a product form for some $m(x)$ and $n(y)$. Taking logarithms, we obtain 
\begin{align}\log p_{XY}(x,y)+\log\delta_{x,y}=m(x)+n(y).\end{align} We can express the problem as finding the maximum value of $\tilde{\epsilon}_1\in(0,1]$ such that for some $m(x)$ and $n(y)$, we have for $\forall x,y$
\begin{align}
m(x)\!+\!n(y)\!\leq\! \log p_{XY}(x,y)\!\leq\! m(x)\!+\!n(y)\!-\!\log\tilde{\epsilon}_1.
\end{align}
We can view this as a linear programming problem to minimize $A\triangleq-\log\tilde{\epsilon}_1$ subject to
\begin{align}
	&m(x)+n(y)\leq \log p_{XY}(x,y)\label{eq:topequation}\\
	&\log p_{XY}(x,y)\leq m(x)+n(y)+A.\label{eq:bottomequation}
\end{align}
We consider the dual of this linear programming problem. Multiplying (\ref{eq:topequation}) by some $\gamma(x,y)\geq 0$ and (\ref{eq:bottomequation}) by some $\mu(x,y)\geq 0$, we obtain
\begin{align}
	&\sum_{x,y}\gamma(x,y)\left( m(x)+n(y)\right) + \sum_{x,y}\mu(x,y)\log p_{XY}(x,y)\nonumber\\
	&\quad\leq \sum_{x,y}\gamma(x,y)\log p_{XY}(x,y)+\sum_{x,y}\mu(x,y)\Big( m(x)+n(y)+A\Big).
\end{align}
Since we are interested in the best lower bound on $A$, we should choose $\gamma$ and $\mu$ such that the  coefficient of $A$ is equal to one and the coefficients of free variables $m(x)$ and $n(y)$ vanish. The coefficient of $A$ is one only if  $\sum_{x,y}\mu(x,y)=1$. Furthermore,  to cancel out the auxiliary variables $m(x)$ and $n(y)$ from both sides, we must have $\sum_x \gamma(x,y)=\sum_x \mu(x,y)$ for all $y\in\mathcal{Y},$ and $\sum_y \gamma(x,y)=\sum_y \mu(x,y)$ for all $x\in\mathcal{X}$. This implies that $\gamma(x,y)$ and $\mu(x,y)$ are \emph{probability distributions} with the same marginals. We denote their marginal probabilities by $\mu(x)=\gamma(x) $ and $\mu(y)=\gamma(y)$. 

The dual of the linear programming problem is 
\begin{align}
A\!=\!\max \!\sum_{x,y}\!(\mu(x,y)\!-\!\gamma(x,y))\log p_{XY}(x,y)\label{eqn:34re}
\end{align}
where the maximization is over all pmfs $\mu, \gamma$ with the same marginals. Since  $A=-\log\tilde{\epsilon}_1$, we have
\begin{align}
\tilde{\epsilon}_1=\min \prod_{x,y}p_{XY}(x,y)^{-\mu(x,y)+\gamma(x,y)}\label{eqn:34wewere}
\end{align}
where the minimization is over all pmfs $\mu, \gamma$ with the same marginals. 

The requirement that the pmfs $\mu, \gamma$ should have the same marginals imposes a number of linear constraints on $\mu$ and $\gamma$. This indicates that the set of all pmfs $\mu, \gamma$ with the same marginals is a polytope. We maximize a linear equation over this polytope in \eqref{eqn:34re}.

We first list three claims. These are proved below and used to show that $\tilde{\epsilon}_1=\epsilon_1$.

\textbf{Claim 1:} \emph{If $(\mu, \gamma)$ is a minimizer for (\ref{eqn:34wewere}) and 
\begin{align}\mu(x,y)&=\lambda \mu_1(x,y)+(1-\lambda) \mu_2(x,y)
\\\gamma(x,y)&=\lambda \gamma_1(x,y)+(1-\lambda) \gamma_2(x,y)\end{align}
where $\mu_i, \gamma_i$ are pmfs with the same marginals for $i=1,2$ and $0<\lambda<1$, then $\mu_i, \gamma_i$ are also minimizers for $i=1,2$.}

Given any pmf $\mu$, we define $\textit{Support}(\mu)$ as the set of realizations with positive occurrence probability.

\textbf{Claim 2:} \emph{Given pmfs $(\mu, \gamma)$ with the same marginals, if one can find pmfs $\mu_1(x,y), \gamma_1(x,y)$ with the same marginals such that $\text{Support}(\mu_1)\subseteq \text{Support}(\mu)$ and
$\text{Support}(\gamma_1)\subseteq \text{Support}(\gamma)$, then there is a $\lambda$ with $0<\lambda<1$ and a $(\mu_2, \gamma_2)$ with the same marginals such that \begin{align}\mu(x,y)&=\lambda \mu_1(x,y)+(1-\lambda) \mu_2(x,y)\\ \gamma(x,y)&=\lambda \gamma_1(x,y)+(1-\lambda) \gamma_2(x,y).\end{align}}
\textbf{Claim 3:} \emph{Given pmfs $\mu, \gamma$ with the same marginals, one can find a path  $(x_1, y_1, x_2, y_2, \cdots, x_k, y_k)$ (as in Definition \ref{def:path}) such that $\gamma(x_i, y_i)>0$ for $1\leq i\leq k$, and $\mu(x_1, y_k)>0$ and $\mu(x_i,y_{i-1})>0$ for $2\leq i\leq k$. }

\vspace{0.5cm}

Consider a minimizer $(\mu, \gamma)$ and the path given in Claim 3 for $(\mu, \gamma)$. Define $\mu_1$ and $\gamma_1$ as follows: \begin{align}
\gamma_1(x,y)=\begin{cases}\frac{1}{k}&  \text{if } (x,y)=(x_i, y_i) \text{ for } 1\leq i\leq k,\\
0&  \text{otherwise},
\end{cases}\label{eq:claim31}
\end{align}
\begin{align}
\mu_1(x,y)=\begin{cases}\frac{1}{k}&  \text{if } (x,y)\!=\!(x_1, y_k),\\
\frac{1}{k}&  \text{if } (x,y)\!=\!(x_i, y_{i-1})  \text{ for } 2\!\leq\! i\!\leq\! k,\\
0&  \text{otherwise.} 
\end{cases}\label{eq:claim32}
\end{align}

Observe that $\gamma_1$ and $\mu_1$ have the same marginals and satisfy the conditions $\text{Support}(\mu_1)\subseteq \text{Support}(\mu)$ and $\text{Support}(\gamma_1)\subseteq \text{Support}(\gamma)$. Using Claim 1 and 2, we conclude that $(\gamma_1, \mu_1)$ must also be a minimizer. Using \eqref{eqn:34wewere}, we therefore obtain 
\begin{align}
\tilde{\epsilon}_1&=\prod_{x,y}p_{XY}(x,y)^{-\mu_1(x,y)+\gamma_1(x,y)}
\end{align}
which evaluates to the value assigned to the path $(x_1, y_1, x_2, y_2, \cdots, x_k, y_k)$ according to \eqref{eq:assignedvalue}. Since $\epsilon_1$ is the minimum assigned value of all possible paths, we obtain $\epsilon_1\leq \tilde{\epsilon}_1$. 

To show that $\tilde{\epsilon}_1\leq \epsilon_1$, suppose that $\epsilon_1$ is obtained for the minimizer path $(x'_1, y'_1, x'_2, y'_2, \cdots, x'_k, y'_k)$. We  construct $\mu'_1$ and $\gamma'_1$ for this path, similar to \eqref{eq:claim31} and \eqref{eq:claim32}, and the value of this path is
\begin{align}
\epsilon_1=\prod_{x,y}p_{XY}(x,y)^{-\mu'_1(x,y)+\gamma'_1(x,y)} .
\end{align}
Using \eqref{eqn:34wewere}, we obtain $\tilde{\epsilon}_1\leq \epsilon_1$. This proves $\tilde{\epsilon}_1=\epsilon_1$. 

It remains to prove the claims given above.

\emph{Proof of Claim 1:} The value of $\sum_{x,y}(\mu_i(x,y)-\gamma_i(x,y))\log p_{XY}(x,y)$ must be less than or equal to $\sum_{x,y}(\mu(x,y)-\gamma(x,y))\log p_{XY}(x,y)$ for $i=1,2$ since $(\mu,\gamma)$ is a maximizer for (\ref{eqn:34re}). On the other hand, by the linearity of \eqref{eqn:34re}, we have
\begin{align}
&\sum_{x,y}(\mu(x,y)-\gamma(x,y))\log p_{XY}(x,y)\nonumber\\
&\quad=\lambda\sum_{x,y}(\mu_1(x,y)-\gamma_1(x,y))\log p_{XY}(x,y)\!+\!(1\!-\!\lambda)\sum_{x,y}(\mu_2(x,y)\!-\!\gamma_2(x,y))\log p_{XY}(x,y).
\end{align}
We thus have for $i=1,2$ that
\begin{align}
	&\sum_{x,y}(\mu_i(x,y)-\gamma_i(x,y))\log p_{XY}(x,y)=\sum_{x,y}(\mu(x,y)-\gamma(x,y))\log p_{XY}(x,y).
\end{align}

\emph{Proof of Claim 2:} Assign 
\begin{align}
\lambda\!=\!\min\left\{1, \min_{\substack{x,y:\\ \gamma_1(x,y)>0}}\frac{\gamma(x,y)}{\gamma_1(x,y)}, \min_{\substack{x,y:\\ \mu_1(x,y)>0}}
\frac{\mu(x,y)}{\mu_1(x,y)}\right\}.
\end{align}
Observe that $\lambda>0$ since $\gamma_1(x,y)>0$ implies that $\gamma(x,y)>0$, and $\mu_1(x,y)>0$ implies that $\mu(x,y)>0$. Assigning the values
\begin{align}
\gamma_2(x,y)&=\frac{\gamma(x,y) -\lambda \gamma_1(x,y)}{1-\lambda}\\
\mu_2(x,y)&=\frac{\mu(x,y) -\lambda \mu_1(x,y)}{1-\lambda}
\end{align}  proves the claim.

\emph{Proof of Claim 3:} Consider some $x_1\in\mathcal{X}$ such that $\gamma(x_1)>0$. Then there is some $y_1\in\mathcal{Y}$ such that $\gamma(x_1, y_1)>0$. Hence $\gamma(y_1)>0$, which implies that $\mu(y_1)>0$. This also implies that there is some $x_2\in\mathcal{X}$ such that $\mu(x_2, y_1)>0$. Hence, $\mu(x_2)>0$ implies $\gamma(x_2)>0$ and that there is some $y_2$ such that $\gamma(x_2, y_2)>0$. We continue this process and obtain a sequence $(x_1, y_1, x_2, y_2, ....)$. While applying the process, we must observe at some point for the first time a previously occurred symbol. Suppose that this happens at time $m$. If $x_m=x_i$ for some $i<m$, then we consider the sequence $(x_i, y_i, x_{i+1}, y_{i+1}, ..., x_{m-1}, y_{m-1})$ as our path. This is a desirable path since $\mu(x_i, y_{m-1})=\mu(x_m, y_{m-1})>0$. Similarly, if $y_m=y_i$ for some $i<m$, we consider the sequence $(x_{i+1}, y_{i+1}, ..., x_{m}, y_{m})$ as our path. This is a desirable path since $\mu(x_{i+1}, y_{m})=\mu(x_{i+1}, y_{i})>0$. This proves the existence of such a path.

\subsection{Proof of Theorem~\ref{theorem-lowerbounds}}\label{subsec:proofdsbe22}
\textbf{Part 1:} We first show that the  one-way SK rate from Alice to Bob  is positive if and only if
\begin{align}
	\epsilon> 1-\eta(p_{Y|X})
\end{align}	
where $\eta(\cdot)$ is as defined in (\ref{eqnAG687rr657}). 

The one-way SK rate is positive if and only if one can find auxiliary random variables $U$ and $V$ that satisfy the Markov chain $UV\rightarrow X\rightarrow YZ$ such that 
\begin{align}
	&I(U;Y|V)>I(U;Z|V)=\!(1\!-\!\epsilon)I(U;XY|V)\!=(1-\epsilon)I(U;X|V).
\end{align}
Thus, the one-way SK rate is positive if and only if
\begin{align}
	&\epsilon>1-\sup_{UV\rightarrow X\rightarrow Y}\frac{I(U;Y|V)}{I(U;X|V)}\overset{(a)}{=} 1-\eta(p_{Y|X}) 
\end{align}
where $(a)$ follows by Lemma \ref{lemma2fdret}, proved in Appendix \ref{appendix-lemmas}.

\noindent
\textbf{Part 2:} We next prove that $\bar{L}(X;Y\|Z)=0$  if and only if
\begin{align}
	\epsilon\leq 1-\max_{\substack{q_{XY}:\: q_{XY}\preceq p_{XY}}}\rho^2_m(q_{XY}).
\end{align}

From the definition of $\bar{L}(X;Y\|Z)=0$, one can deduce that $\bar{L}(X;Y\|Z)=0$ if and only if, for any $U_1, U_2, \cdots, U_k$ satisfying \eqref{eq:LowerboundoddiMarkov} and \eqref{eq:LowerboundeveniMarkov}, any $i$ with $1\leq i\leq k$, and any $u_{1:i-1}$ such that $\mathbb{P}[U_{1:i-1}=u_{1:i-1}]>0$, we have for odd and even $i$, respectively,
\begin{align}
& I(U_i; Y|U_{1:i-1}\!=\!u_{1:i-1}) \le \!I(U_i;Z|U_{1:i-1}\!=\!u_{1:i-1}) \label{eq231}\\
 & I(U_i; X|U_{1:i-1}\!=\!u_{1:i-1})\le \!I(U_i;Z|U_{1:i-1}\!=\!u_{1:i-1})\label{eq232}.
 \end{align}
The reason is that if either \eqref{eq231} or \eqref{eq232} fails, we can construct a valid $U_i$
by setting $U_i$ to a constant if $U_{1:i-1}\neq u_{1:i-1}$ so that 
\begin{align}
    &I(U_i;Y|U_{1:i-1})=\mathbb{P}[U_{1:i-1}=u_{1:i-1}] I(U_i; Y|U_{1:i-1}\!=\!u_{1:i-1}).
\end{align}
We can then compute a non-zero lower bound ${L}(X;Y\|Z)$ by  considering $k=\zeta=i$ in 
\eqref{eq:lowerbound}.

Equivalently, one can show that $\bar{L}(X;Y\|Z)=0$ if and only if for any $U_1, U_2, \cdots, U_k$ satisfying \eqref{eq:LowerboundoddiMarkov} and \eqref{eq:LowerboundeveniMarkov}, any $i$ with $1\leq i\leq k$, and any $u_{1:i-1}$ such that $\mathbb{P}[U_{1:i-1}=u_{1:i-1}]>0$, the one-way SK rates for the distribution
\begin{align}
  r_{XY}(\cdot)=p_{XY|U_{1:i-1}}(\cdot | u_{1:i-1})
\end{align}
are zero. This is because for any $r_{UVXY}=r_{UV|X}\, r_{Y|X}$ such that $I_r(U;Y|V)-I_r(U;Z|V)>0$, there exists some $v$ such that $I_r(U;Y|V=v)-I_r(U;Z|V=v)>0$. 
If $i$ is odd, we can then append to $U_{1:i-1}$ the choices $U_i=V$ and $U_{i+1}=U$. Considering $U_{1:i}=(u_{1:i-1}, v)$, we have
\begin{align}
   I(U_{i+1}; Y|U_{1:i}\!=\!u_{1:i})\!-\!I(U_{i+1};Z|U_{1:i}\!=\!u_{1:i})> 0.
\end{align}
For even $i$, we can set $U_i=\text{constant}$, $U_{i+1}=V$, and $U_{i+2}=U$, and proceed similarly.

To complete the proof, we need to characterize the class of pmfs $r_{XY}$ that arises when we condition the joint pmf of $(X,Y)$ on  $U_{1:i-1}=u_{1:i-1}$. The authors of \cite{MaIshwar2013,Verdu} consider this problem, where they search for the set of  conditional pmfs  $p_{XY|U_{1:k}}$ that one can obtain with some auxiliary random variables $U_1, U_2, \cdots, U_k$ satisfying
\begin{align*}
	&U_i\rightarrow XU_{1:i-1}\rightarrow Y \quad \text{for odd $i$},\nonumber\\
	&U_i\rightarrow YU_{1:i-1}\rightarrow X \quad \text{for even $i$}
\end{align*}
for some arbitrary $k$ and arbitrary realization $u_{1:k}$ of $U_{1:k}$ satisfying $\mathbb{P}[U_{1:k}=u_{1:k}]>0$. This set of pmfs can be expressed as $q_{XY}(x,y)=a(x)b(y)p_{XY}(x,y)$ for some functions $a:\mathcal{X}\rightarrow \mathbb{R}$ and $b:\mathcal{Y}\rightarrow \mathbb{R}$ \cite{MaIshwar2013, Verdu}, i.e., $q_{XY}\preceq p_{XY}$. Combining this observation with (\ref{eq:Theorem6part1}) proves that $\bar{L}(X;Y\|Z)=0$  if and only if
\begin{align}
  \epsilon & \le 1-\max_{\substack{q_{XY}:\: q_{XY}\preceq p_{XY}}}\eta(q_{Y|X}) \overset{(a)}{=}1-\max_{\substack{q_{XY}:\: q_{XY}\preceq p_{XY}}}\rho^2_m(q_{XY})\end{align}
where $(a)$ follows by \eqref{eqnAG687rr657} and because if $q_{XY}\preceq p_{XY}$, then for any 
$r_{XY}=r_{X}\, q_{Y|X}$ we also have $r_{XY}\preceq p_{XY}$. A similar argument is used in \cite[Eq. (80)]{Verdu}.

\noindent
\textbf{Part 3:} Since $p_{Z|XY}$ is an erasure channel with probability $\epsilon$, Lemma \ref{lemmmasthm1}  in Appendix \ref{appendix-lemmas} shows that  a given conditional pmf $p_{\bar{Z}|X,Y}$ can be produced with a degradation $p_{\bar{Z}|Z}$ on random variable $Z$, if and only if 
\begin{align}
\epsilon\leq \sum_{\bar{z}} \min_{x,y:\: p_{XY}(x,y)>0} p_{\bar{Z}|X,Y}(\bar{z}|x,y).
\end{align}
As a result, the intrinsic mutual information $B_0(X;Y\|Z)$ in (\ref{eq:B0upperbound}) is zero if and only if 
\begin{align}
\epsilon\leq \sup_{\substack{p_{\bar{Z}|X,Y}:\\ I(X;Y|\bar{Z})=0}}\sum_{\bar{z}} \min_{x,y:\: p_{X,Y}(x,y)>0} p_{\bar{Z}|X,Y}(\bar{z}|x,y).
\end{align}
In computing $B_0(X;Y\|Z)$, it suffices to restrict to random variables $\bar{Z}$ with cardinality at most $|\mathcal{Z}|$ \cite{christandl2003property}. Therefore, we assume that $\bar{z}\in\{1,2,\cdots, |\mathcal{Z}|\}$.
Finally, observe that the condition $I(X;Y|\bar{Z})=0$ is equivalent to the condition that the matrix 
$$[p_{XY}(x,y)p_{\bar{Z}|X,Y}(\bar{z}|x,y)]_{x,y}$$
has rank one  for all $\bar{z}$.

\noindent
\textbf{Part 4:} Consider an arbitrary $p_{J|XY}$. The bound $ B_1(X;Y\|Z)$ in (\ref{eq:B1upperbound}) is zero only if $I(X;Y|J)=0$. Thus, assume that for $p_{X,Y,J}=p_{X,Y}p_{J|X,Y}$, we have $X\rightarrow J\rightarrow Y$ forming a Markov chain. Since $Z$ is the result of $XY$ passing through an erasure channel, for any $p_{UVXY}\, p_{Z|XY}$ we have
\begin{align}
  I(U;Z|V)=(1-\epsilon)I(U;XY|V).
\end{align}
Thus, $ B_1(X;Y\|Z)$ is zero if and only if for any $p_{UVXY}p_{J|X,Y}$ we have
\begin{align}(1-\epsilon)I(U;XY|V)\geq I(U;J|V).\label{eqnAAAcv1}
\end{align}
We claim that this is equivalent with the following condition:  for any $q_{U,X,Y}p_{J|X,Y}$ we must have
\begin{align}(1-\epsilon)I(U;XY)\geq I(U;J).\label{eqnAAAcv2}
\end{align}
Clearly \eqref{eqnAAAcv2} implies \eqref{eqnAAAcv1} for \eqref{eqnAAAcv2} implies that 
\begin{align}
  (1-\epsilon)I(U;XY|V=v)\geq I(U;J|V=v), \quad \forall v.
\end{align}
For the other direction, assume that \eqref{eqnAAAcv1} holds for  any $p_{UVXY}p_{J|X,Y}$. Take some $v\in\mathcal{V}$ and assume that $U$ is a constant when $V\neq v$, while we let $p(u|V=v)$ to be arbitrary. Then, \eqref{eqnAAAcv1} implies that 
\begin{align}(1-\epsilon)I(U;XY|V=v)\geq I(U;J|V=v).\label{eqnAAAcv1p}
\end{align}
Since $p(x,y)>0$, for any $q_{XY}$ one can find $p_{VXY}$ such that $p(x,y|V=v)=q_{XY}$. Since 
$p(u|V=v)$ was arbitrary, we obtain \eqref{eqnAAAcv2}. 

The condition \eqref{eqnAAAcv2} can be expressed in terms the strong data processing constant and Renyi's maximal correlation as
\begin{align}
	1-\epsilon \geq \max_{q_{XY}\, p_{J|XY}}s^*(q_{XYJ})=  \eta(p_{J|X,Y}).
\end{align}

\subsection{Proof of Theorem~\ref{thm:A5}}\label{subsec:proofdsbe}
Using Theorem~\ref{theorem-lowerbounds},  the one-way SK rate for a DSBE source is positive if
\begin{align}
  \epsilon>1-\max_{\substack{q_{XY}:\: q_{XY}=q_{X}p_{Y|X}}}\rho^2_m(q_{XY}).
\end{align}
In particular, the one-way SK rate for a DSBE source is positive if
\begin{align}
  \epsilon>1-\rho^2_m(p_{XY})=1-{(1-2p)}^2=4p(1-p).
\end{align}
We next prove that $\bar{L}(X;Y\|Z)=0$ if $\epsilon\leq 4p(1-p)$, which completes the proof of Theorem~\ref{thm:A5}. Note that the lower bound $\bar{L}(X;Y\|Z)$ obtained from (\ref{eq:lowerbound}) includes the one-way SK rate. Using \cite[Theorem 6]{Verdu}, we obtain
\begin{align}
\max_{\substack{q_{XY}:\: q_{XY}\preceq p_{XY}}}\rho_m^2(q_{XY})&=(1-2p)^2 \label{eq:maximalcorrforDSBE}.
\end{align}
Combining (\ref{eq:theorem4lowerbound}) with (\ref{eq:maximalcorrforDSBE}) gives  the desired result.

\bibliographystyle{IEEEtran}
\bibliography{IEEEabrv,References}

\appendices
\section{A New Interpretation of the Upper Bound $B_1(X;Y\|Z)$}\label{appendixA}
In this section, we give a new interpretation of the upper bound $B_1(X;Y\|Z)$. In \cite{ourpaper}, the bound $S(X;Y\|Z)\leq B_1(X;Y\|Z)$ follows by showing that
\begin{align}
S(X;Y\|Z)&\leq S(X;Y\|J)+S_{\text{ow}}(XY; J\|Z)\nonumber
\\&= S(X;Y\|J)+\!\max_{UV\rightarrow XY\rightarrow ZJ}I(U;J|V)\!-\!I(U;Z|V)\label{less-noisy-djf}
\end{align}
where $S_{\text{ow}}(XY; J\|Z)$ is the one-way SK rate from $XY$ to $J$. The interpretation of \eqref{less-noisy-djf} given in \cite{ourpaper} is to split the key shared between $X$ and $Y$ (and hidden from $Z$) into two parts: a part independent of $J$ (i.e., the term $S(X;Y\|J)$) and another part shared with $J$ (i.e., the term $S_{\text{ow}}(XY; J\|Z)$). 

We now give a new interpretation for (\ref{less-noisy-djf}). To do this, we begin by revisiting the intrinsic mutual information upper bound $B_0(X;Y\|Z)$. For this bound, note that making Eve weaker by passing $Z$ through a channel $p_{J|Z}$ does not decrease the SK capacity. Thus, $S(X;Y\|Z)\leq S(X;Y\|J)$ if $J\rightarrow Z\rightarrow XY$ forms a Markov chain. We now replace the \emph{degradation} of $Z$ to $J$ with the \emph{less noisy} condition. Consider a broadcast channel with input $(X,Y)$ and two outputs $Z$ and $J$. We have the following proposition:
\begin{proposition} 
	\label{prop134}
	If the channel $p_{Z|XY}$ is less noisy than the channel $p_{J|XY}$, then 
	\begin{align}
	S(X;Y\|Z)\leq S(X;Y\|J).\label{prop:lessnoisy}
	\end{align}
\end{proposition}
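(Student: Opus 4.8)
The plan is to show that any $(n,\delta)$ code that produces a secret key of rate $R$ against an eavesdropper observing $Z$ is, simultaneously, a code that produces a key of the same rate against an eavesdropper observing $J$; taking the supremum over achievable $R$ then gives \eqref{prop:lessnoisy}. Fix such a code, with public discussion $\mathbf{F}$ and keys $K_A,K_B$. The reliability constraint \eqref{eq:errorprobconst} and the uniformity (entropy) constraint make no reference to $Z$ or $J$, so they carry over unchanged; the only thing to verify is the secrecy constraint, namely a bound on $\tfrac1n I(K_A;J^n\mathbf{F})$. Note that, irrespective of private randomness, the pair $(Z^n,J^n)$ is generated from $(X^n,Y^n)$ through the product broadcast channel with noise independent of the rest of the protocol, so $(Z^n,J^n)$ is conditionally independent of $(K_A,\mathbf{F})$ given $(X^n,Y^n)$.

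The technical heart is a lemma stating that the less-noisy order is closed under memoryless products: if $p_{Z|XY}$ is less noisy than $p_{J|XY}$, then $p_{Z^n|X^nY^n}$ is less noisy than $p_{J^n|X^nY^n}$. I would prove this by induction, the crux being $n=2$. For any $V\to(X^2,Y^2)\to(Z^2,J^2)$ with $Z^2=(Z_1,Z_2)$, $J^2=(J_1,J_2)$, the chain rule gives the telescoping identity
\begin{align}
I(V;Z^2)-I(V;J^2)=\big[I(V;Z_1\mid Z_2)-I(V;J_1\mid Z_2)\big]+\big[I(V;Z_2\mid J_1)-I(V;J_2\mid J_1)\big].
\end{align}
Conditioned on any value of $Z_2$, the triple $V\to(X_1,Y_1)\to(Z_1,J_1)$ is still Markov through the single-letter broadcast channel (with some induced input law), because $(Z_1,J_1)$ depends on $(X_1,Y_1)$ only through noise independent of everything else; the single-letter less-noisy hypothesis therefore forces the first bracket to be nonnegative, and the same reasoning conditioned on $J_1$ handles the second. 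This closure property is standard and may alternatively be cited.

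Granting the lemma, the argument concludes quickly. By \eqref{leakjbsdfkb}, $I(K_A;Z^n\mathbf{F})\le n\delta$, hence both $I(K_A;\mathbf{F})\le n\delta$ and $I(K_A;Z^n\mid\mathbf{F})\le n\delta$. For each realization $\mathbf{F}=\mathbf{f}$, the conditional law $p_{X^nY^n\mid\mathbf{F}=\mathbf{f}}$ is some distribution on $\mathcal{X}^n\times\mathcal{Y}^n$, and $K_A\to(X^n,Y^n)\to(Z^n,J^n)$ holds under it by the conditional-independence observation above; applying the product less-noisy property with $V=K_A$ gives $I(K_A;J^n\mid\mathbf{F}=\mathbf{f})\le I(K_A;Z^n\mid\mathbf{F}=\mathbf{f})$. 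Averaging over $\mathbf{f}$,
\begin{align}
I(K_A;J^n\mathbf{F})=I(K_A;\mathbf{F})+I(K_A;J^n\mid\mathbf{F})\le n\delta+I(K_A;Z^n\mid\mathbf{F})\le 2n\delta,
\end{align}
so the same code satisfies all constraints against $J$ with $\delta$ replaced by $2\delta$, which still tends to $0$. Hence $R$ is achievable against $J$ and $S(X;Y\|Z)\le S(X;Y\|J)$.

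The main obstacle is the product-closure lemma for the less-noisy order; all remaining steps are routine bookkeeping with the code constraints. If one prefers to avoid proving it, one can instead invoke van Dijk's characterization of less noisy (concavity of $p_{XY}\mapsto I(XY;Z)-I(XY;J)$) and establish the needed conditional-mutual-information inequality directly, though the two-letter telescoping above is self-contained.
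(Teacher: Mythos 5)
Your proposal is correct and follows essentially the same route as the paper's direct proof (its ``Proof 2''): fix a code, observe that reliability and uniformity are unaffected, condition on $\mathbf{F}=\mathbf{f}$, note the Markov chain $K_A\rightarrow X^nY^n\rightarrow Z^nJ^n$ is preserved with the same product channel, and invoke the fact that the less-noisy order tensorizes to get $I(K_A;J^n\mid\mathbf{F})\leq I(K_A;Z^n\mid\mathbf{F})$. The only differences are cosmetic: you additionally sketch a (correct) telescoping proof of the product-closure lemma, which the paper simply asserts, and you conclude with a $2n\delta$ bound where the paper compares $I(K_A;J^n\mathbf{F})$ and $I(K_A;Z^n\mathbf{F})$ directly since the $I(K_A;\mathbf{F})$ term is common to both.
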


\begin{proof}[Proof 1]
	The proposition follows from \eqref{less-noisy-djf} since if $Z$ is less noisy than $J$, then $I(U;J|V)-I(U;Z|V)$ vanishes in (\ref{less-noisy-djf}). We give a direct proof of Proposition \ref{prop134} below. 
\end{proof}

\begin{proof}[Proof 2]
	Suppose that we have a code for $(X^n, Y^n, Z^n)$ with public communication $F_1, F_2, \cdots, F_k$, and keys $K_A$ and $K_B$. We then have
\begin{align}
   \frac 1n I(K_A;Z^nF_{1:k})\leq \epsilon.
\end{align}
	We now show that
	\begin{align}
	I(K_A;J^nF_{1:k})\leq I(K_A;Z^nF_{1:k})\label{eq:claimlessnoisy}
	\end{align}
	which shows that the code is secure also for an eavesdropper that has i.i.d.\ repetitions of $J$ instead of $Z$. To prove (\ref{eq:claimlessnoisy}), we need to show that 
\begin{align}
   I(K_A;J^n|F_{1:k})\leq I(K_A;Z^n|F_{1:k}).
\end{align}
	It therefore suffices to show for all $f_{1:k}$ that
	\begin{align}
	&I(K_A;J^n|F_{1:k}=f_{1:k})\leq I(K_A;Z^n|F_{1:k}=f_{1:k}).\label{eqn:rev1}\end{align}
	Since we have the Markov chain $F_{1:k}K_A\rightarrow X^nY^n\rightarrow Z^nJ^n$, when we condition on $F_{1:k}=f_{1:k}$, we also have the Markov chains
	\begin{align}
	    \begin{array}{l}
	      K_A \rightarrow X^nY^n\rightarrow Z^nJ^n \\
	      F_{1:k} \rightarrow X^nY^n\rightarrow Z^nJ^n.
	    \end{array}
	\end{align}
Since $p_{Z|XY}$ is less noisy than $p_{J|XY}$, the $n$-letter product channel $p_{Z^n|X^nY^n}$ is also less noisy than the channel $p_{J^n|X^nY^n}$. Thus, we have the bound \eqref{eqn:rev1}.
\end{proof}

Proposition \ref{prop134} gives the following interpretation of \eqref{less-noisy-djf}: the term $S_{\text{ow}}(XY; J\|Z)$ is the penalty of deviating from the less-noisy condition when we replace $Z$ with $J$ in $S(X;Y\|Z)$ and $S(X;Y\|J)$.

\section{A New Measure of Correlation}\label{appendixB}
The quantity $\epsilon_2$ given in Definition \ref{def:path} motivates a new measure of correlation. 
Suppose we are given a joint pmf $p_{XY}$. 
Let
\begin{align*}
   q_{X_1Y_1X_2Y_2}(x_1,y_1,x_2,y_2) & =p_{XY}(x_1,y_1)p_{XY}(x_2,y_2) \\
   r_{X_1Y_1X_2Y_2}(x_1,y_1,x_2,y_2) & =p_{XY}(x_1, y_2)p_{XY}(x_2, y_1)
\end{align*}
and define the new correlation measure
\begin{align}
	&J_\alpha(X;Y)\triangleq D_{\alpha}\Big({q_{X_1Y_1X_2Y_2}}\big\|{r_{X_1Y_1X_2Y_2}}\Big)\label{eq:Japlhadef}
\end{align}
where $D_\alpha$ is the R{\'e}nyi divergence of order $\alpha$. We have
\begin{align}
&\exp(J_{\infty}(X;Y))=\max_{\substack{x_1, x_2, y_1, y_2:\\
\substack{p_X(x_1), p_X(x_2)>0,\\ p_Y(y_1), p_Y(y_2)>0}}
}\left(\frac{p_{XY}(x_1,y_1)p_{XY}(x_2,y_2)}{p_{XY}(x_1, y_2) p_{XY}(x_2, y_1)}\right) \label{eqn:J134}
\nonumber\\
&\quad= \max_{\substack{x_1\neq x_2, y_1\neq y_2:\\
\substack{p_X(x_1), p_X(x_2)>0,\\ p_Y(y_1), p_Y(y_2)>0}}}
\left(\frac{p_{XY}(x_1,y_1)p_{XY}(x_2,y_2)}{p_{XY}(x_1, y_2) p_{XY}(x_2, y_1)}\right)=\frac{1}{\epsilon_2^2}.
\end{align}
Some properties of $J_\alpha$ are as follows.
\begin{itemize}
\item \emph{Faithfulness}: $J_{\alpha}(X;Y)\geq 0$ with equality $J_{\alpha}(X;Y)=0$ if and only if $X$ and $Y$ are independent. Equality  follows from
\begin{align}
   & q_{X_1Y_1X_2Y_2}(x_1,y_1,x_2,y_2)  = r_{X_1Y_1X_2Y_2}(x_1,y_1,x_2,y_2)
\end{align}
for all $x_1,x_2,y_1,y_2$.
{Data Processing}: If $A\rightarrow X\rightarrow Y\rightarrow B$, then we have $J_{\alpha}(X;Y)\geq J_{\alpha}(A;B)$. In particular, we have
\begin{align}
  J_{\alpha}(X_1X_2; Y_1Y_2)\geq J_{\alpha}(X_1;Y_1).
\end{align}
To show that $A\rightarrow X\rightarrow Y\rightarrow B$ implies $J_{\alpha}(X;Y)\geq J_{\alpha}(A;B)$,
consider some $p_{A|X}$ and $p_{B|Y}$. Let
\begin{align}
    &V(a_1a_2b_1b_2|x_1x_2y_1y_2) = p_{A|X}(a_1|x_1)  p_{A|X}(a_2|x_2) p_{B|Y}(b_1|y_1) p_{B|Y}(b_2|y_2)
\end{align}
and define
\begin{align}
&q_{A_1B_1A_2B_2}(a_1,b_1,a_2,b_2)=\sum_{x_1,x_2,y_1,y_2}\Big(q_{X_1Y_1X_2Y_2}(x_1,y_1,x_2,y_2) V(a_1a_2b_1b_2|x_1x_2y_1y_2)\Big)
\nonumber\\&\quad =p_{AB}(a_1,b_1)p_{AB}(a_2,b_2)
\end{align}
and
\begin{align}
&r_{A_1B_1A_2B_2}(a_1,b_1,a_2,b_2)=\sum_{x_1,x_2,y_1,y_2}\Big(r_{X_1Y_1X_2Y_2}(x_1,y_1,x_2,y_2) V(a_1a_2b_1b_2|x_1x_2y_1y_2)\Big)
\nonumber\\&\quad=p_{AB}(a_1,b_2)p_{AB}(a_2,b_1).
\end{align}
The bound $J_{\alpha}(X;Y)\geq J_{\alpha}(A;B)$ follows from the data processing property of $D_\alpha$.

\item \emph{Symmetry}: The definition (\ref{eq:Japlhadef}) implies that 
\begin{align}
  J_{\alpha}(X;Y)=J_{\alpha}(Y;X).
\end{align}
\item \emph{Additivity}: If $(X_1, Y_1)$ and $(X_2, Y_2)$ are independent, then
\begin{align}
  J_{\alpha}(X_1X_2; Y_1Y_2)=J_{\alpha}(X_1;Y_1)+J_{\alpha}(X_2;Y_2)
\end{align}
which follows from (\ref{eq:Japlhadef}).
\end{itemize}

Furthermore, $J_\infty(X;Y)$ has the following properties:
\begin{itemize}
\item By \eqref{eqn:J134} we see that $J_\infty(X;Y)$ depends only on $p_{Y|X}$ and the support set of the distribution on $X$. Thus, with abuse of notation, we may write
\begin{align}
J_\infty(p_{Y|X}(y|x), \mathcal{X}')\triangleq J_\infty(X;Y)
\end{align}
where
$\mathcal{X}'=\{x:p_X(x)>0\}$.
\item For a pmf $p_{XYZ}$, let 
\begin{align}
    J_{\infty}(X;Y|Z)\triangleq \max_{z: p_Z(z)>0} J_\infty(X;Y|Z=z).
\end{align}
Then, if $F-X-Y$ forms a Markov chain, we have
\begin{align}
  J_{\infty}(X;Y)\geq J_\infty(X;Y|F).
\end{align}
\end{itemize}
An application of $J_{\infty}$ is given in the next subsection. As another application, 
consider a key agreement protocol of blocklength $n$ with public messages $F_1, F_2, \ldots, F_k$ to agree on single bits $K_A, K_B\in\{0,1\}$. Then we can write
\begin{align}n J_{\infty}(X;Y|Z)&=J_{\infty}(X^n;Y^n|Z^n) \nonumber
\\&\geq J_{\infty}(X^n;Y^n|F_1Z^n) \nonumber
\\&\geq J_{\infty}(X^n;Y^n|F_1F_2Z^n) \nonumber
\\&\quad\ldots \nonumber
\\&\geq J_{\infty}(X^n;Y^n|F_{1:k}Z^n) \nonumber
\\&\geq J_{\infty}(K_A;K_B|F_{1:k}Z^n) \nonumber
\\&\geq \max_{f_{1:k},z^n} \log\Bigg(\frac{p_{K_AK_B|f_{1:k},z^n}(0,0)}{p_{K_AK_B|f_{1:k},z^n}(0,1)}\frac{p_{K_AK_B|f_{1:k},z^n}(1,1)}{p_{K_AK_B|f_{1:k},z^n}(1,0)}\Bigg) 
\end{align}
providing a bound on how fast Alice and Bob can approach the ideal distribution 
\begin{align}
   q_{K_AK_B Z^n F_{1:k}}=\frac{1}{2}\mathds{1}[k_A=k_B]\,p_{Z^n F_{1:k}}
\end{align}
as given in \eqref{quality}.

\subsection{An ``Uncertainty Principle" for Channel Coding}
Consider a point to point communication to send a message $M$ over a channel $p_{Y|X}(y|x)$ with $n$ channel uses. Then, the chain $M\rightarrow X^n\rightarrow Y^n\rightarrow \hat{M}$ is Markov so that the data-processing property of the new correlation measure implies
\begin{align}
   J_{\infty}(M;\hat{M})\leq J_{\infty}(X^n;Y^n).
\end{align}
However, $J_{\infty}(X^n;Y^n)$ depends only on the channel $p(y^n|x^n)$ and the support set of $X^n$, that is $\{x^n:p(x^n)>0\}$. Because increasing the support set can only increase $J_{\infty}$, we conclude that 
\begin{align}
  J_{\infty}(X^n;Y^n) & \le J_{\infty}(p(y^n|x^n), \mathcal{X}^n)  = n\times J_{\infty}(p(y|x), \mathcal{X}).
\end{align}
Thus, we have
\begin{align}
  J_{\infty}(M;\hat{M})\leq n\times J_{\infty}(p(y|x), \mathcal{X})
\end{align}
which implies that for any $m_1\neq m_2$ we have
\begin{align}\log &\left(\frac{p_{\hat{M}|M}(m_1|m_1)}{p_{\hat{M}|M}(m_2|m_1)}\times \frac{p_{\hat{M}|M}(m_2|m_2)}{p_{\hat{M}|M}(m_1|m_2)}\right)\leq n\times J_{\infty}(p(y|x), \mathcal{X}).\label{eqwesdn1f}\end{align}
Equivalently, we have the following inequality
\begin{align} &\frac{p_{\hat{M}|M}(m_2|m_1)}{p_{\hat{M}|M}(m_1|m_1)}\times \frac{p_{\hat{M}|M}(m_1|m_2)}{p_{\hat{M}|M}(m_2|m_2)}\geq \exp\left(-n\times J_{\infty}(p(y|x),    \mathcal{X})\right).\label{eqwesdn1f2}\end{align}
Define
\begin{align*}
	\frac{P_{\hat{M}|M}(m_2|m_1)}{P_{\hat{M}|M}(m_1|m_1)}
\end{align*}
as the ``uncertainty of $M=m_1$" against $M=m_2$ if the true value of $M$ is $m_1$. Similarly, define the ``uncertainty of $M=m_2$" against $M=m_1$ if the true value of $M$ is $m_2$ as
\begin{align*}
	\frac{P_{\hat{M}|M}(m_1|m_2)}{P_{\hat{M}|M}(m_2|m_2)}.
\end{align*}
Then \eqref{eqwesdn1f2} gives a lower bound on the product of the uncertainty of $M=m_1$ and the uncertainty of $M=m_2$. \eqref{eqwesdn1f2} states that ``if the uncertainty when the true value of $M$ is $m_1$ is very small, then the uncertainty when the true value of $M$ is $m_2$ cannot be small." This may be considered as an uncertainty principle. 

\section{Lemmas Used in the Proof of Theorem~\ref{theorem-lowerbounds}}\label{appendix-lemmas}
\begin{lemma}\label{lemma2fdret}
For any $p_{X}$ such that $p_X(x)>0$ for all $x$, and any $p_{Y|X}$, we have
\begin{align}
  \eta\big(p_{Y|X}\big) =\sup_{UV\rightarrow X\rightarrow Y}\frac{I(U;Y|V)}{I(U;X|V)}.
\end{align}

\end{lemma}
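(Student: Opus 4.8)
The plan is to prove the two inequalities $\le$ and $\ge$ separately, in both cases exploiting the given characterization $\eta\big(p_{Y|X}(y|x)\big)=\max_{p_X}s^*(X;Y)$ (the outer maximum being over \emph{all} input distributions) together with the observation that conditioning on $V$ leaves the channel $p_{Y|X}$ unchanged.

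\textbf{The $\le$ direction.} Fix any $V,U$ with $VU\rightarrow X\rightarrow Y$ and $I(U;X|V)>0$ (pairs with $I(U;X|V)=0$ are excluded, the ratio being undefined). Write $I(U;Y|V)=\sum_v p_V(v)\,I(U;Y|V=v)$ and likewise for $X$. For each $v$ with $p_V(v)>0$, conditioned on $V=v$ the triple $(U,X,Y)$ is still a Markov chain $U\rightarrow X\rightarrow Y$ with input law $p_{X|V=v}$ and the same channel $p_{Y|X}$; hence by the definition of $s^*$ applied to this joint law, $I(U;Y|V=v)\le s^*(X;Y)\big|_{p_{X|V=v}}\cdot I(U;X|V=v)$ whenever $I(U;X|V=v)>0$, while if $I(U;X|V=v)=0$ then data processing forces $I(U;Y|V=v)=0$ and the same inequality holds trivially. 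Since $s^*(X;Y)\big|_{p_{X|V=v}}\le\max_{p_X}s^*(X;Y)=\eta\big(p_{Y|X}(y|x)\big)$, summing against $p_V$ gives $I(U;Y|V)\le\eta\big(p_{Y|X}(y|x)\big)\,I(U;X|V)$, and taking the supremum yields $\le$.

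\textbf{The $\ge$ direction.} We may assume $\eta\big(p_{Y|X}(y|x)\big)>0$. Fix $\epsilon>0$; choose an input distribution $q_X$ with $s^*(X;Y)\big|_{q_X}>\eta\big(p_{Y|X}(y|x)\big)-\epsilon$, and then a channel $p_{U|X}$ such that, under $q_X(x)p_{U|X}(u|x)p_{Y|X}(y|x)$, we have $I(U;X)>0$ and $I(U;Y)/I(U;X)>\eta\big(p_{Y|X}(y|x)\big)-2\epsilon$. Because $p_X$ has full support, $\lambda:=\min_{x:\,q_X(x)>0}p_X(x)/q_X(x)\in(0,1]$, and $p_X'(x):=\big(p_X(x)-\lambda q_X(x)\big)/(1-\lambda)$ is a valid pmf when $\lambda<1$ (when $\lambda=1$ we have $q_X=p_X$ and take $V$ trivial). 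Let $V$ be binary with $\mathbb{P}(V=0)=\lambda$, $p_{X|V=0}=q_X$, $p_{X|V=1}=p_X'$, so the marginal of $X$ is $p_X$; let $Y$ be generated from $X$ through $p_{Y|X}$ independently of $V$; and let $\tilde U=U$ on $\{V=0\}$ (with conditional law $p_{U|X}$ there) and $\tilde U$ constant on $\{V=1\}$. Then $(V,\tilde U)\rightarrow X\rightarrow Y$ is Markov, $I(\tilde U;Y|V)=\lambda\, I(U;Y)\big|_{q_X}$ and $I(\tilde U;X|V)=\lambda\, I(U;X)\big|_{q_X}$, so the ratio equals $I(U;Y)\big|_{q_X}/I(U;X)\big|_{q_X}>\eta\big(p_{Y|X}(y|x)\big)-2\epsilon$. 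Letting $\epsilon\downarrow0$ gives $\ge$.

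\textbf{Expected main obstacle.} There is no deep difficulty; the care is entirely bookkeeping: handling the degenerate branches where $I(U;X|V=v)=0$ so that the undefined ratio $0/0$ never appears, verifying that the time-sharing construction indeed preserves both the Markov chain $(V,\tilde U)\rightarrow X\rightarrow Y$ and the prescribed marginal $p_X$, and noting that one never needs the outer maximum defining $\eta$ to be attained, since near-optimal choices suffice and the final ratio inherits, by homogeneity of the weighted conditional mutual informations, exactly the ratio achieved on the $\{V=0\}$ branch.
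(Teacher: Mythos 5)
Your proof is correct and follows essentially the same route as the paper: the upper bound by conditioning on $V=v$, invoking the strong data processing constant for the conditional input law with the same channel, and bounding by $\eta$; the lower bound by building a $V$ (your explicit binary time-sharing with weight $\lambda$) under which $X\mid V=v^*$ has an arbitrary law $q_X$ and letting $U$ be constant off that branch. Your handling of the degenerate $I(U;X|V=v)=0$ cases and the near-optimal (rather than attained) choices is just a more careful write-up of the paper's argument, not a different method.
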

\begin{proof}
For any $(U,V)$ satisfying the Markov chain $UV\rightarrow X\rightarrow Y$, we have 
\begin{align}
\frac{I(U;Y|V)}{I(U;X|V)} & \leq \max_{v}\frac{I(U;Y|V=v)}{I(U;X|V=v)}\nonumber
\\&\quad\overset{(a)}{\leq} s^*_r(X;Y)\nonumber
\\&\quad\leq \max_{\substack{q_{XY}:\: q_{XY}=q_{X}p_{Y|X}}}s_q^*(X;Y)\nonumber
\\&\quad=\eta\big(p_{Y|X}\big)\label{eqref12356f}
\end{align}
where $(a)$ follows because $s^*_r(X;Y)$ is the strong data processing constant evaluated according to
\begin{align}r_{XY}(\cdot) = p_{XY|V}(\cdot | v) = p_{X|V}(\cdot | v) p_{Y|X} (\cdot | \cdot).
\end{align}
Therefore, we have
\begin{align}
  \eta\big(p_{Y|X}\big) \geq \sup_{UV\rightarrow X\rightarrow Y}\frac{I(U;Y|V)}{I(U;X|V)}.
\end{align}
On the other hand, consider an arbitrary $p_{V|X}$.
Fix some $v^*$ such that $P_V(v^*)>0$ and let $U$ be a constant when $V\neq v^*$. Then we have
\begin{align}
  \frac{I(U;Y|V)}{I(U;X|V)}= \frac{I(U;Y|V=v^*)}{I(U;X|V=v^*)}.
\end{align}
Thus, we obtain 
\begin{align}
  \sup_{UV\rightarrow X\rightarrow Y}\frac{I(U;Y|V)}{I(U;X|V)}\geq s^*(X;Y|V=v^*).
\end{align}
Since $p_X(x)>0$ for all $x\in\mathcal{X}$, for any pmf $q_X$ on $\mathcal{X}$ one can find a channel $p_{V|X}$ and a value $v^*$ such that $p_{XY|V}(\cdot |v^*)=q_X(\cdot) p_{Y|X}(\cdot | \cdot)$. Thus, we obtain
\begin{align}
 \sup_{UV\rightarrow X\rightarrow Y}\frac{I(U;Y|V)}{I(U;X|V)}
 & \ge \max_{q_X}s^*\big(q_X\, p_{Y|X}\big)  = \eta\big(p_{Y|X}\big).
\end{align}
\end{proof}

\begin{lemma}\label{lemmmasthm1}
Let $\mathcal{A}$ and $\mathcal{R}$ be arbitrary discrete sets. Let $p_{A}$ be a pmf on $\mathcal{A}$ such that $p_{A}(a)>0$ for all $a\in\mathcal{A}$. Let $\mathcal{B}=\mathcal{A}\cup\{\mathtt e\}$ and 
assume that $p_{B|A}$ is an erasure channel with probability $\epsilon$ (here $\mathtt e$ is the erasure symbol). Consider an arbitrary $q_{R|A}$. There exists  a conditional distribution $p_{R|B}$ such that
\begin{align}
	\sum_bp_{A,B}(a,b)p_{R|B}(r|b)=p_{A}(a)q_{R|A}(r|a)\label{eqn:sderf3r3}
\end{align}
for all $a,r$ if and only if
\begin{align}\sum_{r} \min_{a} q_{R|A}(r|a)&\geq \epsilon.\label{eqn:reg0}\end{align}
\end{lemma}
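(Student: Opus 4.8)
The plan is to exploit the special structure of the erasure channel to turn the matrix equation \eqref{eqn:sderf3r3} into a pointwise identity, to reduce the existence question to the feasibility of an elementary linear program in the unknown distribution $p_{R|B}(\cdot\,|\,\mathtt e)$, and then to settle that feasibility by a one-line water-filling computation. The hard part will essentially just be the bookkeeping, since the core of the argument is a short reduction.

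First I would substitute $p_{B|A}(a'|a)=(1-\epsilon)\mathds{1}[a'=a]$ for $a'\in\mathcal A$ and $p_{B|A}(\mathtt e|a)=\epsilon$ into \eqref{eqn:sderf3r3}. Because $p_A(a)>0$ for every $a$, the common factor $p_A(a)$ cancels on both sides and the equation becomes
\begin{align}
(1-\epsilon)\,p_{R|B}(r|a)+\epsilon\, p_{R|B}(r|\mathtt e)=q_{R|A}(r|a)\qquad\text{for all }a\in\mathcal A,\ r\in\mathcal R .
\end{align}
For $\epsilon<1$ this forces $p_{R|B}(r|a)=\bigl(q_{R|A}(r|a)-\epsilon\,p_{R|B}(r|\mathtt e)\bigr)/(1-\epsilon)$, and I would observe that for each fixed $a$ these numbers automatically sum to one over $r$ (since $q_{R|A}(\cdot|a)$ and $p_{R|B}(\cdot|\mathtt e)$ are pmfs), so the row $p_{R|B}(\cdot|a)$ is a legitimate pmf exactly when it is nonnegative. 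Hence, writing $w=p_{R|B}(\cdot|\mathtt e)$, a valid $p_{R|B}$ exists if and only if there is a pmf $w$ on $\mathcal R$ with $\epsilon\, w(r)\le q_{R|A}(r|a)$ for all $a,r$, i.e.\ with $\epsilon\, w(r)\le m(r):=\min_a q_{R|A}(r|a)$ for all $r$.

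Next I would settle when such a $w$ exists. If $\sum_r m(r)\ge\epsilon$ and $\epsilon>0$, then $\sum_r m(r)>0$ and $w(r):=m(r)/\sum_{r'}m(r')$ is a pmf with $\epsilon\,w(r)\le m(r)$, hence feasible; if $\epsilon=0$ every pmf $w$ is feasible and \eqref{eqn:reg0} is vacuous. Conversely, summing a feasible inequality $\epsilon\,w(r)\le m(r)$ over $r$ gives $\epsilon=\epsilon\sum_r w(r)\le\sum_r m(r)$, which is exactly \eqref{eqn:reg0}. For the boundary case $\epsilon=1$, equation \eqref{eqn:sderf3r3} reduces directly to $q_{R|A}(r|a)=w(r)$ for all $a$, i.e.\ $q_{R|A}(r|a)$ is independent of $a$; since $\sum_r q_{R|A}(r|a)=1$ this is equivalent to $\sum_r\min_a q_{R|A}(r|a)=1$, again matching \eqref{eqn:reg0}.

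I do not expect a genuine obstacle: the argument is an erasure-channel simplification followed by an LP feasibility check. The only points requiring care are the endpoint cases $\epsilon\in\{0,1\}$ and the remark that the normalization of each row $p_{R|B}(\cdot|a)$ is automatic, so that the whole question collapses to the single scalar inequality \eqref{eqn:reg0}; I would also note in passing that no step uses finiteness of $\mathcal R$, so the statement holds for arbitrary discrete $\mathcal R$.
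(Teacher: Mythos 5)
Your proof is correct and takes essentially the same approach as the paper: both exploit the erasure structure and the positivity of $p_A(a)$ to reduce \eqref{eqn:sderf3r3} to the existence of a pmf $w=p_{R|B}(\cdot|\mathtt e)$ satisfying $\epsilon\,w(r)\le\min_a q_{R|A}(r|a)$ for all $r$, whose feasibility is exactly \eqref{eqn:reg0}. The only cosmetic differences are your explicit normalized-minimum choice $w(r)\propto\min_a q_{R|A}(r|a)$ and your separate treatment of the endpoints $\epsilon\in\{0,1\}$, whereas the paper constructs weights $\lambda(r)\in[0,1]$ with $\sum_r q_R(r)\lambda(r)=\epsilon$ and sets $p_{R|B}(r|\mathtt e)=q_R(r)\lambda(r)/\epsilon$.
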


\begin{remark}
	The term  $\sum_{r} \min_{a} q_{R|A}(r|a)$ is known as  Doeblin's coefficient of ergodicity of the channel $q_{R|A}$ \cite[Section 5]{Cohen}. One direction of Lemma~\ref{lemmmasthm1} is mentioned in \cite{makur2018comparison} and \cite[Remark 3.2]{Maxim}; the other direction seems to be new. 
\end{remark}

\begin{proof}
Suppose that we have  a conditional distribution $p_{R|B}$ such that \eqref{eqn:sderf3r3} holds. Since $p_{B|A}$ is an erasure channel, given an input $A=a$, $B$ has two possibilities $B\in\{a,\mathtt{e}\}$. We have
\begin{align}
	p_{A}(a)q_{R|A}(r|a)&=\sum_bp_{A,B}(a,b)p_{R|B}(r|b) \!=\!\epsilon\, p_{A}(a)p_{R|B}(r|\mathtt e)\!+\!(1\!-\!\epsilon)p_{A}(a)p_{R|B}(r|a) \nonumber
\\&\geq\! \epsilon\, p_{A}(a)p_{R|B}(r|\mathtt e).
\end{align}
Thus, from $p_{A}(a)>0$ for all $a\in\mathcal{A}$ we have
$
q_{R|A}(r|a)\geq \epsilon p_{R|B}(r|\mathtt e).
$
We obtain
\begin{align}
\min_{a}q_{R|A}(r|a)\geq \epsilon \, p_{R|B}(r|\mathtt e).
\end{align}
Therefore, we observe that
\begin{align}
\sum_{r}\min_{a}q_{R|A}(r|a)\geq \epsilon \sum_{r}p_{R|B}(r|\mathtt e)=\epsilon
\end{align}
which proves the correctness of \eqref{eqn:reg0}.

Conversely, take some arbitrary $q_{R|A}$ satisfying  \eqref{eqn:reg0} and let $q_{AR}=p_A\, q_{R|A}$. Then for all $r$ we have
\begin{align}
  q_R(r)=\sum_{a}p_A(a)q_{R|A}(r|a)>0
\end{align}
and 
\begin{align}
  \sum_{a}p_A(a)\frac{q_{R|A}(r|a)}{q_{R}(r)}=1.
\end{align}
Thus, we conclude that for any $r$ we have
\begin{align}
  \min_{a}\frac{q_{R|A}(r|a)}{q_R(r)}\leq 1.
\end{align}
Furthermore, using \eqref{eqn:reg0}, we have
\begin{align}\sum_{r}q_R(r)\min_{a}\frac{q_{R|A}(r|a)}{q_R(r)}&\geq \epsilon.\label{eqn:reg0-34wsd}\end{align}
Thus, one can find 
$\lambda(r)\in[0,1]$ such that
\begin{align}\lambda(r)&\leq \min_{a}\frac{q_{R|A}(r|a)}{q_R(r)}  \label{eqn:reg1}
\end{align}
and
\begin{align}
\epsilon&=\sum_{r} q_R(r)\lambda(r).\label{eqn:reg2}\end{align}

Define $p_{R|B}$ as follows:
\begin{align}
p_{R|B}(r|\mathtt{e})&=\frac{q_R(r)\lambda(r)}{\epsilon}\label{eqn:reg3}
\\p_{R|B}(r|a)&=\frac{q_{R|A}(r|a)\!-\!q_{R}(r)\lambda(r)}{1\!-\!\epsilon}, \quad\forall a\in\mathcal{A}\label{eqn:reg4}.
\end{align}
The conditional probability in \eqref{eqn:reg3} is well defined by \eqref{eqn:reg2}. The conditional probability in \eqref{eqn:reg4} is non-negative and well-defined by \eqref{eqn:reg1}. Finally, observe that with the choice of $p_{R|B}(r|b)$ in (\ref{eqn:reg3}) and (\ref{eqn:reg4}), the marginal pmf of $A,R$ has
\begin{align}
p_{AR}(a,r)&=\sum_{b}p_{A,B}(a,b)p_{R|B}(r|b)
\nonumber\\&=\epsilon\, p_A(a)p_{R|B}(r|\mathtt{e})+(1-\epsilon)p_A(a)p_{R|B}(r|a)
\nonumber\\&=p_A(a)q_{R}(r)\lambda(r)\!+\!p_A(a)\big(q_{R|A}(r|a)\!-\!q_R(r)\lambda(r)\big)
\nonumber\\&=p_A(a)q_{R|A}(r|a)
\end{align}
which proves the converse.
\end{proof}

\end{document}